
\documentclass{article}

\usepackage{microtype}
\usepackage{graphicx}
\usepackage{subfigure}
\usepackage{booktabs} 
\usepackage{enumitem}
\usepackage{hyperref}



\usepackage[accepted]{icml2025}

\usepackage{amsmath}
\usepackage{amssymb}
\usepackage{mathtools}
\usepackage{amsthm}
\usepackage{xspace}



\newcommand{\eqdef}{\stackrel{\text{\tiny\rm def}}{=}}
\newcommand{\E}{\mathbf{E}}

\newcommand{\eps}{\varepsilon}
\newcommand{\prob}[1]{\Pr \left( #1 \right)}
\newcommand{\probgiven}[2]{\Pr \left( #1\ \middle|\ #2 \right)}
\newcommand{\EE}[2][]{%
  \if\relax\detokenize{#1}\relax
    \E\left[#2\right]%
  \else
    \E_{#1}\left[#2\right]%
  \fi
}




\newcommand{\ourPivot}{\textsc{Sparse-Pivot}\xspace}
\newcommand{\refClustering}{\textsc{Reference Clustering}\xspace}
\newcommand{\recompute}{\textsc{Recompute}\xspace}
\newcommand{\costEstimate}{\textsc{Cost-estimate}\xspace}

\newcommand{\withinClusterEstimate}{\textsc{In-Cluster-Cost-estimate}\xspace}

\newcommand{\insertNode}{\textsc{Insert-node-sparse-pivot}}

\newcommand{\ceil}[1]{\lceil #1 \rceil}

\newcommand{\rb}[1]{\left( #1 \right)}

\newcommand{\CCref}{C_{ref}}
\newcommand{\cC}{\mathcal{C}}

\newlength{\ourAlgIndent}
\setlength{\ourAlgIndent}{9pt}
\newcommand{\ourIndent}[1][1]{%
    \foreach \n in {1,...,#1}{%
        \hspace{\ourAlgIndent}%
    }%
}

\DeclareMathOperator{\poly}{poly}
\DeclareMathOperator{\cost}{cost}

\newcommand{\cCnodel}{\cC_{\textsc{no-del}}}
\newcommand{\cCdel}{\cC_{\textsc{del}}}
\newcommand{\tcost}{\widetilde{cost}}

\usepackage[mathscr]{euscript}

\usepackage[capitalize,noabbrev]{cleveref}

\theoremstyle{plain}
\newtheorem{theorem}{Theorem}[section]

\newtheorem{lemma}[theorem]{Lemma}

\theoremstyle{definition}
\newtheorem{definition}[theorem]{Definition}

\theoremstyle{remark}
\newtheorem{remark}[theorem]{Remark}

\usepackage[disable,textsize=tiny]{todonotes}

\icmltitlerunning{Sparse-pivot: Dynamic correlation clustering for node insertions}

\begin{document}

\twocolumn[
\icmltitle{\textsc{Sparse-pivot:} Dynamic correlation clustering for node insertions}



\icmlsetsymbol{equal}{*}

\begin{icmlauthorlist}
\icmlauthor{Mina Dalirrooyfard}{equal,ms}
\icmlauthor{Konstantin Makarychev}{equal,nw}
\icmlauthor{Slobodan Mitrović}{equal,ucd}
\end{icmlauthorlist}

\icmlaffiliation{ucd}{University of California, Davis, USA. \texttt{smitrovic@ucdavis.edu}}

\icmlaffiliation{ms}{Machine Learning Research, Morgan Stanley, Canada. \texttt{minad@mit.edu}.}
\icmlaffiliation{nw}{Northwestern University, Chicago, USA. \texttt{konstantin@northwestern.edu}.}

\icmlcorrespondingauthor{Mina Dalirrooyfard}{minad@mit.edu}

\icmlkeywords{Machine Learning, ICML}

\vskip 0.3in
]



\printAffiliationsAndNotice{\icmlEqualContribution} 

\begin{abstract}

We present a new Correlation Clustering algorithm for a dynamic setting where nodes are added one at a time. In this model, proposed by Cohen-Addad, Lattanzi, Maggiori, and Parotsidis (ICML 2024), the algorithm uses database queries to access the input graph and updates the clustering as each new node is added.
Our algorithm has the amortized update time of $O_{\eps}(\log^{O(1)}(n))$. Its approximation factor is $20+\varepsilon$, which is a substantial improvement over the approximation factor of the algorithm by Cohen-Addad et al. 
We complement our theoretical findings by empirically evaluating the approximation guarantee of our algorithm. The results show that it outperforms the algorithm by Cohen-Addad et al.~in practice.

\end{abstract}

\section{Introduction}
In this paper, we present a new dynamic algorithm with node updates for the Correlation Clustering problem with complete information.\footnote{By complete information, it is meant that for each pair of nodes there is information on their similarity.} 
Correlation Clustering is a well-studied problem that seeks to partition a set of objects into clusters based on their similarity. The problem is defined on a set of items represented as nodes in a graph, with similarity information provided through a set of edges. We assume that all pairs of nodes are classified as either ``similar" or ``dissimilar" by a noisy classifier. For every pair of similar nodes $u$ and $v$, there is an edge $(u, v)$ between them (sometimes referred to as a positive edge). For every pair of dissimilar nodes $u$ and $v$, no edge exists between them (such pairs are sometimes called negative edges). The objective is to find a clustering that minimizes the number of disagreements with the given edge set. An edge $(u, v)$ disagrees with the clustering if $u$ and $v$ are placed in different clusters, while a non-edge $(u, v)$ disagrees if $u$ and $v$ are placed in the same cluster.

The problem was introduced by \citet*{bansal2004correlation}. It can be easily solved if a disagreement-free clustering exists, as each cluster then corresponds to a connected component of the similarity graph $G$. However, when the classifier makes mistakes, and a disagreement-free solution does not exist, the problem becomes NP-hard. In their original paper, Bansal, Blum, and Chawla proposed a constant-factor approximation, which was subsequently improved in a series of works 
\citet{charikar2005clustering,demaine2006correlation,chawla2015near,cohen2022correlation,cohen2023handling,cao2024understanding}. In 2005, \citeauthor*{pivot} introduced combinatorial and LP-based algorithms with approximation factors of $3$ and $2.5$, respectively. The best-known approximation factor for Correlation Clustering with Complete Information is currently $1.437$ \citep*{cao2024understanding}.
Nevertheless, the combinatorial algorithm by \citet{pivot}, known as \textsc{Pivot}, remains one of the preferred choices in practice due to its simplicity and good empirical performance. 

Researchers have proposed various variants of \textsc{Pivot} that operate in parallel and streaming settings \cite{bonchi2014correlation,chierichetti2014correlation,pan2015parallel,cohen2021correlation,cambus2022parallel,BCMT,behnezhad2023single,cambus2022parallel,chakrabarty2023single}. Recently, there has been growing interest in algorithms that support dynamic updates. Consider a scenario where similarity information is received over time, requiring the clustering to be updated dynamically. \citet*{dalirrooyfardpruned} showed how to maintain a $(3+\varepsilon)$-approximation clustering with constant update time per edge insertion or deletion (see also papers by \citet{behnezhad2019fully} and \citet{chechik2019fully}).

\citet*{cohendynamic} proposed an algorithm designed for a setting where nodes and edges are stored in a database. Over time, new nodes are added to the database along with their incident edges. After nodes are added, the dynamic algorithm updates the existing clustering.
In this model, each edge is included in the database immediately after both its endpoints are inserted; no additional edges can be inserted or deleted afterward. The sequence of inserted nodes is non-adaptive, meaning it does not depend on the decisions of the clustering algorithm. Formally, we assume that the graph and the order of node arrivals are fixed in advance.
\citet{cohendynamic} showed how to achieve a constant (albeit very large) approximation with an update time of $\log^{O(1)} n$, measured in terms of database operations defined as follows: (1) retrieving the degree of a node $v$; (2) selecting a random neighbor of $v$; and (3) checking whether two nodes $u$ and $v$ are connected by an edge. The database model was introduced by \citet*{assadi2022sublinear}.
The algorithm by \citet{cohendynamic} was the first dynamic algorithm for node insertions with sublinear update time. Although it provides a constant approximation, the proof suggests that the constant is very large (the paper does not estimate it). In this work, we propose a $(20+\varepsilon)$-approximation algorithm for Correlation Clustering with an update time of $O_{\varepsilon}(\log^{O(1)} n)$ database operations per node insertion.

In addition to node insertion, the algorithm by \citet{cohendynamic} supports deleting random nodes. Our algorithm supports a slightly weaker type of deletion: \emph{soft deletions} of random vertices. When a node is soft-deleted from the graph, it initially remains in the database but is marked as soft-deleted. Moreover, the classifier continues creating edges between newly arriving and soft-deleted nodes. Only when the algorithm requests their deletion are they purged from the database.

To motivate the model, consider the following example: An online store adds new items to its stock daily and aims to cluster all items based on similarity. Whenever a new item is added, the store runs a classifier to identify items similar to the new one. A record for the new item is then created and inserted into the database, along with edges connecting it to similar items. Instead of reclustering the entire dataset, the dynamic algorithm efficiently updates the clustering,  requiring only $O(\log^{O(1)} n)$ operations per item insertion.

\section{Algorithm}

Our algorithm is based on the $5$-approximation variant of \textsc{Pivot}, developed by~\citet*{behnezhad2023single} for the semi-streaming model. Their algorithm in the static setting works as follows: First, it selects a random ordering $\pi$ of all nodes. For each node $u$, it picks the neighbor of $u$ with the smallest rank and stores this neighbor in the pivot array $p$: $p(u) = \arg\min_{w \in N(u)} \pi(w)$. Next, the algorithm identifies all nodes $u$ such that $p(u) = u$, which we refer to as pivots.  We call $p(v)$ the pivot for node $v$. Then, the algorithm creates a new cluster for every pivot $u$ and assigns to it nodes $v$ with $p(v) = u$. All remaining unassigned nodes are placed in singleton clusters. Note that for each node $u$ in a singleton cluster, $p(u)$ remains equal to the neighbor of $u$ with the smallest rank. 
To reiterate, in this variant of \textsc{Pivot}, each vertex $u$ belongs to the cluster of $p(u)$ if $p(u) = p(p(u))$, and to a singleton cluster if $p(u) \neq p(p(u))$.

\citet*{behnezhad2023single} provide a dynamic edge-insertion version of this algorithm with constant update time.  
A dynamic implementation of this algorithm for node insertions can be easily inferred and is presented in \cref{alg:refclustering}. 
In this version, when the node $u$ being inserted is a pivot, the algorithm runs the \textsc{Explore} process. This process updates the pivot of each neighbor $w$ of $u$ and reassigns $w$ and its neighbors to the cluster of $u$ or singleton clusters, if necessary.
We refer to the clustering produced by this algorithm as \refClustering, as our algorithm aims to approximate this clustering.

\begin{algorithm}[h]
\caption{Insertion for \textsc{\refClustering} \label{alg:refclustering}}
\begin{algorithmic}[1]
    \STATE {\bfseries input} node $u$, graph $G$, ordering $\pi$, pivot array $p$.
    \STATE Compute $p(u)=\arg \min_{w\in N[u]} \pi(w)$.
    \STATE \textbf{if} $p(u)=u$:
    \STATE \ourIndent[1]  Mark $u$ as a pivot.
\STATE \ourIndent[1]     Create a new cluster with $u$ in it.
    \STATE \ourIndent[1] Run \textsc{explore}$(u,G,\pi,p)$
    \STATE \textbf{else if} $p(u)$ is a pivot:
    \STATE \ourIndent[1] Put $u$ in $p(u)$'s cluster.
    \STATE \textbf{else}:
    \STATE \ourIndent[1] Make $u$ a singleton.
    \end{algorithmic}
\end{algorithm}

\begin{algorithm}[tb]
\caption{\textsc{Explore} \label{alg:explore}}
\begin{algorithmic}[1]
    \STATE {\bfseries input} pivot $u$, graph $G$, ordering $\pi$, pivot array $p$.
    \STATE \textbf{for all} $w\in N[u]$:
    \STATE \ourIndent[1] \textbf{if} $\pi(p(w))>\pi(u)$:
    \STATE \ourIndent[2] \textbf{if} $w$ is a pivot, i.e., $p(w) = w$:
    \STATE \ourIndent[3] \textbf{for all} $z$ where $p(z)=w$: 
    \STATE \ourIndent[4] \textbf{if} $z \in N(u)$, then $p(z) \gets u$, 
    \STATE \ourIndent[4] \textbf{else} make $z$ a singleton
    \STATE \ourIndent[2] $p(w)\gets u$.
    \end{algorithmic}
\end{algorithm}


\subsection{Making the \refClustering algorithm faster}

The challenge with \refClustering is that, after each insertion, it scans the entire neighborhood of the inserted node $u$, which can be as large as $\Theta(n)$, where $n$ is the size of the current graph. This makes this \refClustering prohibitively expensive. To improve the efficiency, we perform this exhaustive search only for pivots. 
For non-pivot nodes $u$, we attempt to recover the unknown pivot $v$ by sampling $O(\log n)$ neighbors of $u$, examining the set of pivots of these sampled neighbors, and setting $p(u)$ to the neighbor with the smallest ranked among the neighbors scanned in this process (which could be the samples or their pivots).  
We show that this approach succeeds when the cluster $C$ in the reference clustering is sufficiently dense and $u$ does not have too many neighbors outside of $C$.

Instead of selecting a random ordering of vertices $\pi$, our algorithm assigns each value $\pi(u)$ uniformly at random from the interval $[0, 1]$. This simplifies maintaining a random ordering in dynamic settings, where the exact number of nodes is not known in advance and also makes the logic of our algorithm a bit simpler.

We present our algorithm for node insertion in \cref{alg:node-dynamic-pivot}.  
We use three main ideas:  

\noindent (1) For vertices $u$ with $\pi(u) \leq \frac{L}{d(u)}$, where $L = O(\log n)$, we run \textsc{explore} on $u$ if $u$ is a pivot, similarly to \refClustering. If $u$ is not a pivot, we run \textsc{explore} on its pivot to update its cluster. 

\noindent (2) For vertices $u$ with $\pi(u) > \frac{L}{d(u)}$, we find a pivot by examining the set of pivots of $\Theta(\log n)$ random neighbors of $u$ and selecting the neighbor of $u$ with smallest ranked in that set as a pivot for $u$.  

(3) In the obtained tentative clustering, we identify certain nodes, remove them from their clusters, and place them into singleton clusters. Specifically, we remove nodes whose in-cluster degree is below a threshold $t$. The optimal value of $t$ is determined by trying out $O(\log n)$ possible choices. In particular, when a new cluster is made upon arrival of a pivot node, the subroutine \textsc{break-cluster} removes some nodes from the cluster and makes them singletons. When the node inserted is not a pivot and is assigned to pivot $v$, the subroutine \textsc{update-cluster} updates the set of nodes assigned to $v$ that are put in singletons upon insertion of the new node (See \cref{sec:cost-estimate} for details).

We now provide some motivation for the algorithm. First, we observe that the expected time required for finding pivots in items (1) and (2) is $\log^{O(1)} n$. In item (1), we spend $d(u) \log n$ time with probability $\log n / d(u)$. In item (2), we always spend $O(\log n)$ time by examining $O(\log n)$ random neighbors and their pivots. 

We run the step described in item (3) after (roughly) every $\epsilon s$ insertions, where $s$ is the size of the cluster. In this step, we test $O(\log n)$ different thresholds $t$ and, for each choice of $t$, estimate the cost of the split defined by $t$ by sampling random edges within the cluster (see \cref{sec:cost-estimate} for details). Hence, the total running time of one such step is $s\log^{O(1)} n$, and the amortized cost per insertion is $\log^{O(1)} n$.

We now discuss the approximation factor. Item (1) ensures that all pivots in the reference clustering are marked as pivots by our algorithm with high probability. Specifically, for every pivot $u$ in the reference clustering, we have $\pi(u) \leq L/d(u)$ with probability at least $1 - \text{poly}(1/n)$, ensuring that \textsc{explore} is called on every pivot $u$. This follows from the observation that if $\pi(u) > L/d(u)$, then $\pi(v) > L/d(u)$ for all neighbors $v \in N[u]$. The probability of this event is at most  
$
(1 - L/d(u))^{d(u)} \leq \text{poly}(1/n).
$  

We define \emph{good} nodes as follows: loosely speaking, a node is good if it is connected to most nodes within its cluster and to relatively few outside of it.
We show that \emph{good} nodes in our clustering are assigned the same pivots as in the reference clustering (when our algorithm and \refClustering use the same ordering $\pi$).  For now, let us assume that all nodes are good. If a good node $u$ arrives after its pivot $w$, then $u$ will join $w$'s cluster when $w$ scans all its neighbors, including $u$. If $u$ arrives after $w$ while more than $25\%$ nodes in $w$'s cluster have yet to arrive, then for at least one node $v$ arriving after $u$, we will call \textsc{explore}, which will assign $u$ the correct pivot $w$. Here, we rely on the assumption that all nodes are good.
Finally, if $u$ arrives among the last $25\%$ of nodes in $w$'s cluster, then by the time $u$ arrives, the algorithm will have assigned the correct pivot $w$ to most of its neighbors in the reference cluster. Consequently, a high fraction of $u$'s neighbors will have $w$ as their pivot, and some of these neighbors will be included in the random sample of neighbors (see item (2)). Hence, $u$ will also be assigned the correct pivot. Our full analysis can be found in \cref{sec:analysis}.

Let us now consider bad (not good) nodes. These nodes incur a very substantial cost in the reference clustering. If we could remove them from their reference clusters and place them into singleton clusters, the overall clustering cost would not increase significantly—in fact, it might even decrease. Unfortunately, our algorithm cannot identify these bad nodes; as a result, they may join clusters other than their own reference clusters. This can substantially increase the cost of those clusters. To address this issue, we partition each tentative cluster into two parts (see item (3)): the first part remains a cluster, while the second part is broken into singleton clusters.

\cref{thm:approximation} provides the approximation guarantees of our algorithm which we call $\ourPivot$. 
\begin{theorem}\label{thm:approximation}
For any $\epsilon<1/1000$, the expected cost of $\ourPivot$ with parameter $\epsilon$ is at most $4(1+O(\epsilon))$ times the expected cost of the \refClustering.
\end{theorem}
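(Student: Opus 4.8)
The plan is to compare the cost of \ourPivot\ with the cost of \refClustering\ by a charging argument that proceeds cluster by cluster in the reference clustering, splitting every reference cluster $C$ (with pivot $w$) into its \emph{good} nodes and its \emph{bad} nodes. First I would fix the shared random ordering $\pi$ and condition on the high-probability events established in the algorithm description: that every reference pivot $u$ satisfies $\pi(u) \le L/d(u)$ and hence has \textsc{explore} run on it, and that every good node $u$ is eventually assigned its reference pivot $p(u) = w$ by \ourPivot\ (the three-case arrival argument: $u$ arrives after $w$ and is caught by $w$'s scan; $u$ arrives while $>25\%$ of $C$ is still to come, so a later \textsc{explore} fixes it; or $u$ is among the last $25\%$, so enough of its neighbors already point to $w$ that the $\Theta(\log n)$-sample of item~(2) hits one). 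On the complementary low-probability event I would bound the contribution crudely — the worst-case cost is $O(n^2)$ and the failure probability is $\poly(1/n)$, so this contributes $o(1)$ to the expectation and can be absorbed.

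Next, conditioned on the good event, I would account for the cost incurred on the tentative clusters. For a reference cluster $C$ with pivot $w$, the tentative cluster $B_w$ built by \ourPivot\ contains exactly the good nodes of $C$ (which behave as in the reference clustering) plus possibly some bad nodes of $C$ and, crucially, no nodes from outside $C$ — this last point is where I expect to lean on the definition of good/bad and the fact that a node assigned to $w$ must have sampled $w$ or a neighbor with pivot $w$. The key structural step is then to analyze the split in item~(3): \textsc{break-cluster}/\textsc{update-cluster} pick the threshold $t \in \{(1+\eps)^i\}$ minimizing the estimated cost of declaring $\{v \in B_w : d(v) > t\}$ a cluster and the rest singletons. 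I would argue that this optimized split (i) never costs more than a constant times the reference cost restricted to $C$ — because one admissible choice is $t=0$, i.e., keep $B_w$ whole, whose cost differs from the reference cost of $C$ only by the bad nodes we added or the good nodes elsewhere, all of which are already paying heavily in the reference clustering — and (ii) additionally, breaking out low-degree bad nodes can only help, by the usual Pivot-style local accounting that charges each remaining disagreement to a distinct triangle/edge already charged in \refClustering. The cost-estimation error from sampling $\Theta(\log^{O(1)} n)$ intra-cluster edges contributes the $(1+O(\eps))$ multiplicative slack, via a Chernoff bound together with the fact that we only re-estimate every $\eps s$ insertions so the estimates stay accurate.

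The main obstacle, I expect, is item~(3) — controlling the interaction between the bad nodes and the threshold split. A bad node $v$ of $C$ that gets placed into $B_w$ by \ourPivot\ can, in principle, create up to $|B_w|$ new disagreements that the reference clustering does not pay for; the whole point of the split is that such a $v$ has small in-cluster degree and is therefore broken into a singleton by the chosen $t$, but making this quantitative requires showing that the badness of $v$ (few neighbors inside $C$, many outside) both forces $v$'s in-cluster degree in $B_w$ below the chosen threshold \emph{and} guarantees that $v$ was already charged $\Omega(|C|)$ disagreements in \refClustering, so that turning it into a singleton — which costs at most $d(v)$ inside plus the edges it cuts — is comparable. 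I would also need to handle bad nodes of $C$ that the algorithm routes \emph{out} of $B_w$ (into some other cluster or a singleton) symmetrically. Threading the constant through all of these cases to land at exactly $4(1+O(\eps))$ rather than a larger constant is the delicate part, and I anticipate it requires carefully matching the $25\%$ threshold in the good-node argument with the $\eps s$ re-estimation frequency and the geometric grid of thresholds; everything else is a routine Chernoff-plus-amortization bookkeeping exercise.
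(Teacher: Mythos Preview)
Your proposal has the right overall shape (compare \ourPivot\ to \refClustering\ pivot-by-pivot, separate good from bad nodes, charge the split step to bad nodes), but it rests on a structural claim that is false and misses the intermediate object the paper uses to make the argument go through.

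The critical error is the assertion that $B_w$ ``contains \dots\ no nodes from outside $C$.'' This is exactly what fails. A bad or lost node $u$ from some \emph{other} reference cluster $C'$ can have many edges into $C$, sample one of those neighbors (who correctly points to $w$), and get assigned pivot $w$. The paper says this explicitly: ``there might be many (bad or lost) nodes incorrectly assigned to $v$ that increase the number of the non-edges in $B_v$ significantly.'' So $B_w$ is not a subset of $C$, and your baseline ``take $t=0$ and keep $B_w$ whole'' can be arbitrarily worse than the reference cost of $C$. The break-cluster step is not a mild cleanup of a few stray bad nodes of $C$; it is the mechanism that repairs this contamination by foreign nodes, and its analysis (Lemma~\ref{lem:4-approx-break-cluster}) is where the factor $4$ actually appears. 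That lemma picks a specific threshold $t \approx \tfrac{2}{3}|C^*|$ and compares $cost(B_v\mid C_t)$ to $cost(B_v\mid C^*)$ node by node on the symmetric difference; when the contamination $|T| = |C_t \setminus C^*|$ is too large, it falls back to the all-singletons option. None of this is the ``routine Pivot-style triangle charging'' you sketch.

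The second gap is that you are trying to compare \ourPivot\ directly to \refClustering, whereas the paper inserts an intermediate clustering $A$: run \refClustering, then make all $ref$-bad, $ref$-lost, and bad-cluster nodes singletons. One first shows $\EE{cost(A)} \le (1+O(\eps))\EE{cost(ref)}$ (Lemmas~\ref{lem:bad-and-lost-single} and~\ref{lem:bad-cluster-singleton-cost}), and then compares \ourPivot\ to $A$ cluster-by-cluster. The point of $A$ is that in $A$ the nodes of $B_v$ are clustered as ``$C^*$ together, everything else singleton,'' so $cost(B_v\mid C^*)$ is precisely $A$'s cost on $B_v$, and Lemma~\ref{lem:4-approx-break-cluster} plus the cost-estimate theorem give the factor $4(1+O(\eps))$. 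Your proposal also omits the notion of \emph{lost} nodes (good nodes with too many bad neighbors), which is needed because a good node surrounded by bad ones cannot be recovered by sampling; and it omits the good/bad \emph{cluster} dichotomy, which drives the case split in the final argument (for bad or poor clusters, $A$ already makes everything singleton, so the comparison is to the all-singletons option rather than to $C^*$).
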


Our algorithm for soft deletions is very simple: We ignore them! In fact, we recompute the whole clustering again after $\Theta(\eps) N$ many updates, where $N$ is the number of nodes when we last recomputed the clustering. The recomputation is only necessary for deletions. Our running time guarantees are provided in \cref{thm:running-time}.

\begin{theorem}\label{thm:running-time}
    Let $T$ be the total number of updates. With high probability, \cref{alg:node-dynamic-pivot} runs in amortized $\poly(\log T, \frac{1}{\eps})$ time.
\end{theorem}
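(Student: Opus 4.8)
The plan is to split the total work over the $T$ updates into three parts and bound each by $T\cdot\poly(\log T,\tfrac1\eps)$ with high probability: (i) the per-insertion cost of finding the pivot of the newly inserted node (items (1)--(2) of \cref{alg:node-dynamic-pivot}); (ii) the cost of the \textsc{break-cluster} and \textsc{update-cluster} subroutines (item (3)); and (iii) the cost of the periodic recomputations that absorb soft deletions. Throughout we use that every node enters via an update, so the current number of nodes $n$ never exceeds $T$ and $\log n=O(\log T)$; hence it suffices to work with $\poly(\log T,\tfrac1\eps)$ bounds. Parts (ii) and (iii) are handled by (essentially deterministic) amortization, so the only genuinely probabilistic ingredient is part (i), whose randomness comes entirely from the i.i.d.\ values $\pi(\cdot)$.

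For part (i): if $\pi(u)>L/d(u)$ the algorithm only inspects $\Theta(\log n)$ random neighbors of $u$ and their stored pivots, which is $\poly(\log T)$ work deterministically. If $\pi(u)\le L/d(u)$ the algorithm may run \textsc{explore} on $u$ or on its pivot; using the degree checks that gate these calls and the cluster-maintenance data structures of \cref{sec:cost-estimate}, one argues that this costs at most $d(u)\cdot\poly(\log T,\tfrac1\eps)$. The key structural point is that the indicator $Z_u=\one\{\pi(u)\le L/d(u)\}$ depends only on the single value $\pi(u)$, so the $Z_u$ over inserted nodes are mutually independent even though the actual work inside \textsc{explore} depends on the whole current configuration. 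Pointwise the pivot-finding cost at the step inserting $u$ is at most $\poly(\log T,\tfrac1\eps)\cdot(1+Z_u\,d(u))$, so the total is at most $\poly(\log T,\tfrac1\eps)\cdot\bigl(T+\sum_u Z_u\,d(u)\bigr)$. Since $\E[Z_u d(u)]\le L=O(\log T)$ the sum has expectation $O(T\log T)$; for concentration I would bucket the inserted nodes by degree into $O(\log T)$ dyadic classes $D_k=\{u:2^k\le d(u)<2^{k+1}\}$, apply a Chernoff bound to the independent sum $\sum_{u\in D_k}Z_u$ (mean at most $|D_k|L/2^k$) to get $\sum_{u\in D_k}Z_u=O(|D_k|L/2^k+\log T)$ w.h.p., multiply by $2^{k+1}\le 2T$, sum over $k$, and union-bound over the $O(\log T)$ classes; this yields $\sum_u Z_u d(u)=O(T\log^2 T)$ w.h.p.

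For part (ii): by the design of item (3), \textsc{break-cluster} runs on a cluster only after roughly $\eps s$ insertions into it, where $s$ is its current size, and costs $s\cdot\poly(\log T,\tfrac1\eps)$; distributing this over those $\Theta(\eps s)$ insertions gives amortized $\poly(\log T,\tfrac1\eps)$ per insertion, and each insertion is charged to at most one cluster. Each non-pivot insertion additionally runs \textsc{update-cluster} at deterministic cost $\poly(\log T,\tfrac1\eps)$. For part (iii): a recomputation started when the graph has $N_j$ nodes simply rebuilds the whole clustering via $N_j$ fresh insertions, so by the part~(i) analysis applied afresh it costs $O\bigl(N_j\cdot\poly(\log T,\tfrac1\eps)\bigr)$ w.h.p.; since the next recomputation is triggered only after $\Theta(\eps N_j)$ further updates, this amortizes to $\poly(\log T,\tfrac1\eps)$ per update, and a union bound over the at most $T$ recomputations controls the failure probability. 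Combining the three parts and union-bounding over all invoked high-probability events gives the claimed amortized bound.

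The main obstacle is part (i), and specifically two things: (a) verifying from the algorithm's construction — the degree checks that decide whether \textsc{explore} is invoked, and the way clusters are represented and rebuilt — that a single ``heavy'' step for $u$ costs $\tO_\eps(d(u))$ rather than $\tO_\eps(n)$ (this is where the whole point of the $\pi(u)\le L/d(u)$ threshold is used), and (b) the concentration for $\sum_u Z_u d(u)$, whose individual summands can be as large as $\Theta(T)$, so a naive Chernoff/Bernstein bound on the raw sum fails and the dyadic bucketing by degree is essential.
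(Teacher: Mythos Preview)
Your decomposition into (i)--(iii) is natural and parts (ii)--(iii) are essentially fine, but the core claim in (i) has a genuine gap. You assert that when $\pi(u)\le L/d(u)$ the heavy step ``costs at most $d(u)\cdot\poly(\log T,\tfrac1\eps)$'', and then reduce everything to concentrating $\sum_u Z_u\,d(u)$. This per-step bound is not correct as stated, for two reasons. First, when $u$ is not a pivot the algorithm may call \textsc{explore}$(v)$ for $v=p(u)$, whose cost is $d^{(u)}(v)$; the gate $d^{(u)}(v)\le L/\pi(u)$ only bounds this by $L/\pi(u)$, which can be arbitrarily larger than $d(u)$. Second, inside any \textsc{explore} call, each neighbor that is currently a pivot gets demoted and its entire cluster (of size up to its own degree) is scanned; none of this is controlled by $d(u)$ either.

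The paper handles this via a budget-transfer argument rather than a per-step bound: each node $u$ with $\pi(u)\le L/d^{(u)}(u)$ is given a budget of $(L/\pi(u))\cdot\poly$, and costs are charged \emph{across} nodes. The cost of scanning a demoted pivot $w$'s cluster is charged to $w$'s own budget, using the crucial structural fact that once demoted a node never becomes a pivot again until the next recompute, so this charge happens at most once per node. The cost of \textsc{explore}$(v)$ triggered by a non-pivot $u$ is charged to $u$'s budget and fits precisely because of the degree check $d^{(u)}(v)\le L/\pi(u)$. With this accounting a pivot $v$ spends at most $d(v)\cdot\poly$ in total (its own scan, its one demotion, and \textsc{break-cluster}), and the paper then invokes \cref{lem:pivot-set} to get $d(v)<L/\pi(v)$ whp. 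So the right quantity to control is not your $Z_u\,d(u)$ but this per-node budget; your dyadic-bucket concentration would have to be rebuilt on top of that charging scheme. (A minor point: your description of \textsc{break-cluster} as running every $\Theta(\eps s)$ insertions is the experimental heuristic of \cref{sec:experiments}, not the analyzed algorithm; in \cref{alg:node-dynamic-pivot} it runs once at pivot creation and subsequent insertions use \textsc{update-cluster} at $\poly(\log n,1/\eps)$ cost each.)
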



\begin{algorithm}[h]
\caption{\insertNode \label{alg:node-dynamic-pivot}}

\begin{algorithmic}[1]
    \STATE \textbf{input} node $u$ to be inserted, current graph $G$ of size $n$, ordering $\pi$, pivot vector $p$.
        \STATE \ourIndent Let $\pi(u)\in [0,1]$ be chosen uniformly at random. 
        \STATE \ourIndent Let $p(\cdot)$ indicate the lowest rank neighbor of each node.
        \STATE \ourIndent Let $B_v$ indicate the set of nodes with pivot $v$, and let $C_v\subseteq B_v$ be the nodes of $B_v$ that are in $v$'s cluster. 
        \STATE \ourIndent \textbf{if} $\pi(u)\le \frac{L}{d(u)}$:
        \STATE \ourIndent[2] Find $v=\arg \min_{w\in N[u]} \pi(w)$.
        \STATE \ourIndent[2] \textbf{if} $v=u$:
        \STATE \ourIndent[3] Make $u$ a pivot: $p(u) \gets u$, and make a new cluster with $u$ in it: $B_u = \{u\}$, $t_v = 0$
        \STATE \ourIndent[3] \textsc{explore}$(u,G,\pi,p)$
        \STATE \ourIndent[3]$C_u\gets$ \textsc{break-cluster}$(B_u)$.
        
        \STATE \ourIndent[2] \textbf{else if} $v\neq u$ and $v$ is a pivot:
        \STATE \ourIndent[3] $p(u)\gets v$, $B_v\gets B_v\cup\{u\}$.
        \STATE \ourIndent[3] $C_v\gets $\textsc{update-cluster}($u,B_v$)
        \STATE \ourIndent[3] \textbf{if} $d(v)\le \frac{L}{\pi(u)}$: 
        \STATE \ourIndent[4] \textsc{explore}$(v,G,\pi,p)$
        \STATE \ourIndent[2] \textbf{else if} $v\neq u$ and $v$ is not a pivot:
        \STATE \ourIndent[3] make $u$ a singleton.
        \STATE \ourIndent \textbf{else}:
        \STATE \ourIndent[2] Let $S$ be a $O(\log n)$-sized sample of $N[u]$. \label{line:sample-log-neighbors}
        \STATE \ourIndent[2] Let $s^* = \arg\min_{s\in S, \{p(s), u\}\in E(G)} \pi(p(s))$. Let $v:=p(s^*)$
        \STATE \ourIndent[2] \textbf{if} $\pi(v)<\pi(u)$:
        \STATE \ourIndent[3] $p(u)\gets v$, $B_v\gets B_v\cup\{u\}$. 
        \STATE \ourIndent[3] $C_v\gets $\textsc{update-cluster}($u,B_v$)
        \STATE \ourIndent[2] \textbf{if} $u$ did not get clustered:
        \STATE \ourIndent[3]\label{line:singleton} make $u$ a singleton, $u$ does not have a pivot.
%
    \end{algorithmic}
\end{algorithm}

\subsection{Analysis Preliminaries}
We use subscript $ref$ to refer to \refClustering. We fix time, and we compare the cost of our algorithm to the cost of \refClustering at this time. We refer to the current graph as $G$.  
For a node $v$, ordering $\pi$ and clustering algorithm $A$, let $C_{A}^\pi(v)$ be the cluster of $v$ in $A$ with respect to the ordering $\pi$. We drop the superscript $\pi$ and subscript $A$ when it is clear from the context. The clustering algorithms that we consider throughout our analysis all define pivots for all non-singleton clusters. 

Given a ordering $\pi$ and clustering algorithm $A$, let $p_{A}^{\pi}(u)$ denote the pivot of $u$ chosen by algorithm $A$. For a set of nodes $S$, let $d_S(v)$ be the degree of $v$ in $S$, and let $d(v)$ be the degree of $v$ in the graph $G$. We will classify nodes depending on how their neighborhoods intersect the cluster to which they are assigned.
\begin{definition}\label{def:light-poor-heavy-lost}
     Let $\alpha<1$ and $\beta>1$. Let $A$ be a clustering algorithm. For a fixed ordering $\pi$, we call vertex $u$ 
     \begin{itemize}
         \item \textit{$A$-light} if $d_{C}(u) \le \frac{|C|}{3}$, where $C=C_{A}^\pi(u)$.
         \item  \textit{$(A,\alpha)$-poor} if $d(u)\le \alpha d(p_{A}(u))$ and $u$ is not light.
         \item \textit{$A$-heavy} if $d(u)\ge \beta |C|$, where $C=C_{A}^\pi(u)$.
         \item \textit{$A$-bad} if $u$ is $(A,3\alpha\beta)$-poor, $A$-heavy or $A$-light, and \textit{$A$-good} otherwise.
         \item \textit{$A$-lost} if the number of $A$-bad neighbors of $u$ in $C_{A}^\pi(u)$ is at least $\beta$ times the number of $A$-good neighbors of $u$ in $C_{A}^\pi(u)$.
     \end{itemize}
     We drop $A$- if the clustering algorithm is clear from the context.
\end{definition}

\begin{definition}[Poor clusters]
    For a fixed ordering $\pi$ and clustering algorithm $A$, let $C$ be a cluster with pivot $v$. 
    We call $C$ an $(A,\alpha)$-poor cluster if $C$ has at least one $(A,\alpha)$-poor node.
\end{definition}
\begin{definition}[Good and bad clusters]
    Let $\gamma<1$ be a constant. For a fixed ordering $\pi$ and clustering algorithm $A$, we call a cluster $C^\pi_A$ good if it is not $(A,\alpha)$-poor, and it has at least $\gamma|C^\pi_A|$ good nodes. 
    Otherwise, we call it bad.
\end{definition}
\begin{remark}
    Note that instead of fixing the clustering algorithm $A$ and ordering $\pi$, we can still have the above definitions if we fix the clusters and pivots. 
\end{remark}
\begin{definition}[Cost of a cluster]
    Given a clustering, the cost of a cluster $C$ is the number of non-edges inside $C$, and half of the number of edges with exactly one endpoint in $C$.
\end{definition}
The cost of a clustering is the sum of the cost of its clusters. 

\section{Analysis}\label{sec:analysis}
\subsection{Analysis outline}
First, we explain the intuition behind classifying nodes in \cref{def:light-poor-heavy-lost}. Let the cost of a node be half the number of non-neighbors it has in its cluster, plus half the number of neighbors it has outside the cluster. Note that the sum of the costs of nodes in a cluster equals the cost of the cluster. Consider $A$ to be the reference clustering in \cref{def:light-poor-heavy-lost}. A light node in cluster $C$ has a lot of non-edges attached to it in $C$, and a heavy node has a lot of edges attached to it that leave the cluster $C$. So both have a high cost. In fact, we show that if we make them singletons, the cost of the clustering does not change much. So, in a sense, in our algorithm, we do not care how $ref$-heavy or $ref$-light nodes are being clustered as long as their cost is somewhat comparable to their cost as singletons. 

Consider poor nodes. A poor node has a much lower degree than its pivot. We show that if a cluster has at least one poor node, then any node in this cluster that is not heavy or light must be poor (\cref{lem:all-poor}). Then we show that, in fact, if we make the whole cluster singleton, the cost of the clustering does not change much \emph{on average}. This is because the pivot of this cluster has a very high cost, and any node becomes the pivot of a poor cluster with low probability. 

In summary, we have shown that if we make the bad nodes (heavy, poor, or light) in the reference clustering singleton, the cost of the clustering does not change much (\cref{lem:bad-and-lost-single}). We further show that we can make all the nodes in a bad cluster singleton as well since the cost of this cluster is already too high. We call this clustering $ref'$.

We then show that in \ourPivot, not only do we detect pivots correctly with high probability (\cref{lem:pivot-set}), but also all the good nodes that are not lost, i.e., they do not have many bad neighbors, are assigned the correct pivot. We show this in two parts: Consider a pivot $v$, and suppose $C$ is the set of all the $ref$-good nodes with pivot $v$ that are not $ref$-lost. First, we show that if a good node $u\in C$ arrives rather early compared to other nodes in $C$, at some point its pivot runs \textsc{explore} and it detects if $u$ is clustered incorrectly (\cref{lem:1-eps-good-nodes}). If $u$ arrives rather late, then the sampling procedure will hit one of the neighbors of $u$ in $C$ that is correctly clustered and so correctly assigns $v$ as the pivot of $u$ (\cref{lem:misplaced}).

Now since we cluster the $ref$-good nodes correctly with high probability, if we could detect $ref$-bad nodes in \ourPivot, we could make them singletons, and thus get $ref'$. However, instead, if $B_v$ is the set of nodes that have $v$ as their pivot, we make a dense subset $C_v$ of $B_v$ one cluster, and make the rest of $B_v$ singleton. This step is crucial since there might be many (bad or lost) nodes incorrectly assigned to $v$ that increase the number of the non-edges in $B_v$ significantly. We show that the cost of making $B_v\setminus C_v$ singleton is at most $4$ times the cost of making the $ref$-bad nodes in $B_v$ singleton (\cref{lem:4-approx-break-cluster}).

Now we begin our formal analysis.
Let $\beta\ge \frac{4+\epsilon}{\epsilon}$, $\alpha <\min(\frac{\epsilon}{24\beta}, \frac{1}{39\beta})$, and $\gamma \le \frac{\epsilon}{2}$. Let $L\ge \frac{4c\beta}{\epsilon \gamma\alpha}\log n$ for some arbitrary large constant $c$. Let the number of samples when the inserted $u$ satisfies $\pi(u)>L/d(u)$ be at least $100 \log (\frac{1}{1-x})=O(\log n)$, where $x=(\frac{1}{\beta+1}-\epsilon)/\beta$.

\subsection{Making all the bad and lost nodes singleton in Reference clustering}

We show that if a cluster has one $\alpha$-poor node, then all the nodes in that cluster are $3\alpha\beta$-poor.
\begin{lemma}\label{lem:all-poor}
    Let $\pi$ be an ordering and $A$ be a clustering algorithm. If $C_A^{\pi}$ is an $(A,\alpha)$-poor cluster, then any $u\in C_A^{\pi}$ which is not light or heavy is $(A,3\alpha\beta)$-poor.
\end{lemma}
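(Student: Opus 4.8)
The plan is a short chaining of three facts. Fix the pivot $v$ of the cluster $C = C_A^\pi$, so that $p_A(w) = v$ for every $w \in C$; this is legitimate because the clustering algorithms under consideration assign a pivot to every non-singleton cluster, and $C$ is not a singleton (its unique node would be light, contradicting that $C$ contains an $(A,\alpha)$-poor node, which by definition is not light).

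Step 1: extract an upper bound on $|C|$. Since $C$ is $(A,\alpha)$-poor, it has an $(A,\alpha)$-poor node $w$. Unfolding the definition, $w$ is not light, i.e.\ $d_C(w) > |C|/3$, and $d(w) \le \alpha\, d(p_A(w)) = \alpha\, d(v)$. Combining with $d_C(w) \le d(w)$ yields $|C|/3 < \alpha\, d(v)$, hence $|C| < 3\alpha\, d(v)$.

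Step 2: conclude for $u$. Let $u \in C$ be neither light nor heavy. ``Not heavy'' means $d(u) < \beta\,|C|$, and substituting the bound from Step 1 gives $d(u) < \beta \cdot 3\alpha\, d(v) = 3\alpha\beta\, d(v) = 3\alpha\beta\, d(p_A(u))$. Together with the hypothesis that $u$ is not light, this is precisely the definition of $u$ being $(A,3\alpha\beta)$-poor.

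There is no genuinely hard step here; the only points requiring care are bookkeeping ones: keeping the inequality directions straight (``not light'' is $d_C(u) > |C|/3$, ``not heavy'' is $d(u) < \beta\,|C|$), and observing that every node of $C$ — in particular both $w$ and $u$ — shares the same pivot $v$, so that all the degree comparisons are against $d(v)$.
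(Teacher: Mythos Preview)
Your proof is correct and follows essentially the same route as the paper's: use the poor witness $w$ (not light) together with $d(w)\le\alpha\,d(v)$ to bound $|C|<3\alpha\,d(v)$, then combine with $u$ being not heavy to get $d(u)<3\alpha\beta\,d(v)$. Your added justification that $C$ cannot be a singleton (so a pivot $v$ exists) is a nice bit of extra care that the paper leaves implicit.
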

\begin{proof}
    Let $v$ be the pivot of $C=C_A^{\pi}$. 
    First, since $C$ is an $(A,\alpha)$-poor cluster there is a node $w\in C$ that is $(A,\alpha)$-poor. 
    Since, by definition of poor nodes, $w$ is not light, we have that $|C|/3\le d(u)\le \alpha d(v)$. So $3\alpha d(v)\ge |C|$. Now for any $u$ that is not heavy we have $d(u)\le \beta |C|\le 3\alpha\beta d(v)$. Thus if $u$ is not light, then $u$ is $(A,3\alpha\beta)$-poor. 
\end{proof}

The following lemma shows that if, in \refClustering, we make all the bad nodes and lost nodes singleton, the cost only increases by a factor of $(1+O(\epsilon))$.

\begin{lemma}[Cost of making bad and lost nodes singletons]\label{lem:bad-and-lost-single}
    Consider a clustering algorithm $A$ where the probability of any node $v$ being a pivot is at most $1/d(v)$. 
    Let $B$ be the algorithm that first runs $A$, and then makes $A$-heavy, $A$-light, $(A,3\alpha\beta)$-poor and $A$-lost nodes singletons. 
    Then $\EE[\pi]{cost_B}\le (1+7\epsilon)\EE[\pi]{cost_A}$
    
\end{lemma}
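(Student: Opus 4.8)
The plan is to bound, cluster by cluster, the increase in cost caused by removing the $A$-heavy, $A$-light, $(A,3\alpha\beta)$-poor and $A$-lost nodes and turning them into singletons. Fix an ordering $\pi$ and work with the clustering $A$ produces. For a cluster $C$ with pivot $v$, let $\mathrm{bad}(C)$ be the set of $A$-heavy, $A$-light, $(A,3\alpha\beta)$-poor, or $A$-lost nodes in $C$, and let $\mathrm{good}(C) = C \setminus \mathrm{bad}(C)$. When we pull every node of $\mathrm{bad}(C)$ out of $C$: (i) each former non-edge inside $C$ with at least one endpoint in $\mathrm{bad}(C)$ stops being charged (cost decreases), and each edge leaving $C$ with endpoint in $\mathrm{bad}(C)$ now is counted fully as a singleton-leaving edge rather than half-and-half — but this is still just its own cost as a singleton; (ii) for each $b \in \mathrm{bad}(C)$ we gain the cost of $b$ as a singleton, namely $d(b)/2$ edges leaving it (I will use the node-cost bookkeeping from the analysis outline: node cost $=$ half non-neighbors inside cluster plus half neighbors outside). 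The key point is that after the operation, the cost increase is at most $\sum_{b \in \mathrm{bad}(C)} (\text{singleton cost of } b) = \sum_{b} d(b)/2$, while one keeps the within-$\mathrm{good}(C)$ structure; and simultaneously the old cost of $C$ was at least the cost of its bad nodes, which I will show is comparable to $\sum_b d(b)$ up to the constants controlled by $\alpha,\beta$.

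\medskip

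\noindent\textbf{Step 1: heavy and light nodes pay for themselves.} For an $A$-light node $u$ in cluster $C$, $d_C(u)\le |C|/3$, so $u$ has at least $|C| - 1 - |C|/3 \ge |C|/3$ non-neighbors inside $C$ (for $|C|$ not tiny); its node cost in $A$ is therefore $\ge |C|/6$, whereas its singleton cost is $d(u)/2$. I need $d(u)/2 \le (1+O(\epsilon))\cdot(\text{node cost in }A) + (\text{freed cost})$; since $u$ is light it is in particular not $(A,3\alpha\beta)$-poor handled here only if not light — so I treat "light" directly: if $u$ is light and heavy simultaneously, $d(u)\ge\beta|C|$ but also most of $d(u)$ leaves $C$, so its leaving-edge cost alone is $\ge (d(u)-|C|/3)/2 \ge d(u)(1 - 1/(3\beta))/2$, again covering the singleton cost up to $(1+O(1/\beta))$; if $u$ is light but not heavy, $d(u)\le \beta|C|$, so $d(u)/2 \le \beta|C|/2 \le 3\beta \cdot(\text{node cost}_A(u))$, which is not $(1+O(\epsilon))$ — so this bound must instead be charged against the \emph{freed} non-edges, not against $u$'s own cost. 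This is the subtle accounting: the extra $d(u)/2$ is absorbed because pulling $u$ out \emph{removes} $\ge|C|/3$ non-edges from $C$'s cost, and $d(u)/2 \le \beta|C|/2$; so one wants $\beta|C|/2 \le (1+O(\epsilon))\cdot(\text{removed cost})$, i.e.\ roughly $\beta|C|/2$ against $|C|/3$, which fails unless we aggregate across all bad nodes in $C$ and use that the pivot $v$ is good/typical — hence Step 3.

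\medskip

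\noindent\textbf{Step 2: poor nodes via the probabilistic argument.} This is where the hypothesis "probability of any $v$ being a pivot is $\le 1/d(v)$" enters, exactly as flagged in the outline. By \cref{lem:all-poor}, if a cluster contains one $\alpha$-poor node then every non-heavy, non-light node in it is $(A,3\alpha\beta)$-poor; so a "poor cluster" gets entirely dissolved into singletons by $B$ (its heavy/light nodes handled by Step 1, the rest being $3\alpha\beta$-poor). For such a cluster $C$ with pivot $v$: every $u\in C$ has $d(u)\le 3\alpha\beta\, d(v)$, so the total singleton cost $\sum_{u\in C} d(u)/2 \le |C|\cdot 3\alpha\beta\, d(v)/2$. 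Meanwhile the cost of $C$ in $A$ is at least the cost of $v$ itself, but more usefully: I will charge $\EE[\pi]{\text{extra cost from poor clusters}} = \sum_v \prob{v \text{ is pivot of a poor cluster}}\cdot(\text{that extra cost})$. Since $v$ is a pivot with probability $\le 1/d(v)$, and on that event the extra cost is $\le |C| \cdot 3\alpha\beta\, d(v)/2 \le 3\alpha\beta\, d(v)^2/2$ (using $|C|\le d(v)+1$), we get $\sum_v \tfrac{1}{d(v)}\cdot \tfrac{3\alpha\beta}{2} d(v)^2 = \tfrac{3\alpha\beta}{2}\sum_v d(v) = 3\alpha\beta\,|E|$. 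And $|E| \le \mathrm{OPT} \le \EE[\pi]{cost_A}$ is false in general, but $|E|$ is comparable to $\EE[\pi]{cost_A}$ only up to the approximation ratio; more carefully, I will instead compare to the cost already \emph{inside} or \emph{leaving} these clusters. The cleanest route: the cost of a poor cluster $C$ in $A$ is $\ge \binom{|C|}{2} - |E(C)| + (\text{leaving}/2) \ge$ something, but since all nodes are poor, $\sum_{u\in C} d_C(u) = 2|E(C)| \le \sum_u d(u) \le |C|\cdot 3\alpha\beta\, d(v)$, and also $d(v)\ge |C|/(3\alpha)$ (from the poor node $w$: $|C|/3 \le d(w)\le \alpha d(v)$ when $w$ not light) — wait, if the witnessing poor node is light this breaks; but then use a heavy/light node instead. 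Combining $d(v)^2 \ge |C|^2/(9\alpha^2)$ with the per-$v$ expected bound $\tfrac{3\alpha\beta}{2}d(v)$ ... I will make this precise so that the poor-cluster surplus is $\le O(\alpha\beta)\cdot\EE[\pi]{cost_A}$, which is $\le \epsilon\,\EE[\pi]{cost_A}$ by the choice $\alpha < \epsilon/(24\beta)$.

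\medskip

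\noindent\textbf{Step 3: lost nodes and aggregation inside a single (non-poor) cluster.} For a non-poor cluster $C$, the remaining bad nodes to remove are heavy, light, or lost. An $A$-lost node $u$ has at least $\beta$ times as many bad neighbors in $C$ as good neighbors in $C$; so $d_C(u) \le (1+1/\beta)\cdot(\#\text{bad nbrs in }C) \le (1+1/\beta)|\mathrm{bad}(C)|$, hence $u$ has $\ge |C| - 1 - (1+1/\beta)|\mathrm{bad}(C)|$ non-neighbors inside $C$. The crucial structural fact I will establish is that in a non-poor cluster, $|\mathrm{bad}(C)|$ is a small fraction of $|C|$, say $|\mathrm{bad}(C)| \le c'|C|$ for a constant $c'$ I can push down using $\beta$ large: because a light node already contributes $\Omega(|C|)$ cost and there can't be too many of those without the cluster cost exceeding what we can afford; and lost nodes are "caused by" bad neighbors, so iterating/peeling shows the bad set can't blow up. Then, for $u$ light or lost with $d(u) \le \beta|C|$ (if not heavy), the total surplus $\sum_{b\in\mathrm{bad}(C), \text{not heavy}} d(b)/2 \le \beta|C|\cdot|\mathrm{bad}(C)|/2 \le \beta c'|C|^2/2$, and this is charged against the removed non-edge cost $\ge \sum_{b} (|C| - O(|\mathrm{bad}(C)|))/2 = \Omega(|C|^2)$ once $c'$ is small — yielding surplus $\le O(\beta c')\cdot(\text{cost}_A(C))$, made $\le \epsilon\cdot\mathrm{cost}_A(C)$ by choosing $c'$ small, i.e.\ $\beta$ large (consistent with $\beta \ge (4+\epsilon)/\epsilon$). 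Heavy nodes are handled separately by Step 1 and contribute $\le (1+O(1/\beta))\cdot\mathrm{cost}_A$-on-those-nodes. Summing over all clusters and adding the poor-cluster contribution from Step 2 gives $\EE[\pi]{cost_B} \le (1 + O(1/\beta) + O(\beta c') + O(\alpha\beta))\,\EE[\pi]{cost_A} \le (1+7\epsilon)\EE[\pi]{cost_A}$ with the stated parameter choices.

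\medskip

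\noindent The main obstacle is Step 3's structural claim that $|\mathrm{bad}(C)|$ is a small constant fraction of $|C|$ in a non-poor cluster — without it, the self-charging of light and lost nodes against freed non-edges does not close, since a single light node only frees $\Theta(|C|)$ cost while potentially carrying $d(u) = \Theta(\beta|C|)$ singleton cost. Establishing it requires a careful peeling argument (good nodes are $\ge\gamma$ fraction by definition of non-bad cluster — wait, that's for "good clusters" not "non-poor clusters", so I must be careful which definition licenses which bound) and trading off the constants $\gamma, \beta, c'$; I expect to need to use the non-poor-cluster hypothesis together with the definition of $A$-light to bound the light-node count, then a fixed-point/peeling argument on lost nodes to bound those, and finally handle heavy nodes by the direct per-node charge of Step 1.
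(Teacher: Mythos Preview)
Your proposal has a genuine gap: the structural claim in Step~3 --- that in a non-poor cluster $|\mathrm{bad}(C)|$ is a small constant fraction of $|C|$ --- is simply false. Consider a star: a pivot $v$ of degree $n$ whose cluster $C$ consists of $v$ together with $n$ leaves of degree~$1$. Every leaf is $A$-light (since $d_C(\mathrm{leaf})=1 < |C|/3$), but no leaf is poor (poor requires ``not light'' by definition), and the pivot is neither light nor poor. So this is a non-poor cluster with $n$ out of $n+1$ nodes bad. Your entire Step~3 charging scheme --- paying for $\sum_{b} d(b)/2$ against freed non-edges, which you argued needs $|\mathrm{bad}(C)|\le c'|C|$ with $c'$ small --- cannot close. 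You already flagged this as ``the main obstacle,'' and it is indeed fatal to the approach as written.

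The paper's proof avoids this by a different decomposition and two observations you are missing. First, it removes nodes in \emph{stages} (heavy/light, then poor, then lost) rather than all at once per cluster. Second, and crucially, the per-node accounting for light nodes is much better than you realize: for a light $u$ one has $d_{in}^A(u)\le |C|/3$, so $\bar d_{in}^A(u)\ge 2d_{in}^A(u)$, and hence making $u$ a singleton costs $d(u)+d_{in}^A(u)=d_{out}^A(u)+2d_{in}^A(u)\le d_{out}^A(u)+\bar d_{in}^A(u)$ --- i.e.\ light nodes \emph{never} increase the cost when singletoned, regardless of how many of them there are. Heavy nodes give a factor $\tfrac{\beta+1}{\beta-1}$ by a similar per-node charge. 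This is \cref{lem:heavynode}, and it disposes of the star example effortlessly. Third, after heavy/light/poor nodes have been removed (call the resulting clustering $A''$), any $A$-lost node $u$ becomes either $A''$-light or $A''$-heavy with parameter $\beta/3$: if $u$ is not $A''$-light then $d_{C''}(u)\ge|C''|/3$, and the lost condition gives $d(u)\ge d_{C'}(u)\ge\beta\, d_{C''}(u)\ge\tfrac{\beta}{3}|C''|$. So one applies \cref{lem:heavynode} a second time. The poor-node stage uses \cref{lem:poornodes} (the probabilistic argument you were circling around in Step~2, but carried out via a redistribution of cost rather than comparing to $|E|$). Multiplying the three stage factors $\tfrac{\beta+1}{\beta-1}\cdot(1+\delta)\cdot\tfrac{\beta+3}{\beta-3}$ and plugging in the parameter choices gives $(1+7\epsilon)$.
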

\begin{proof}
    Let $\alpha' = 3\alpha\beta$, and let $A'$ be the algorithm that runs $A$ and then makes the $A$-heavy and $A$-light nodes singleton. Let $A''$ be the algorithm that runs $A'$, and makes all the $(A,\alpha')$-poor nodes singleton. 
    For the clarity and brevity of notation, we use
        $\delta \eqdef 4\alpha'\frac{1+3/2 \cdot \alpha'}{1-5/2 \cdot \alpha'}$.
    For any fixed ordering $\pi$, by \cref{lem:heavynode}, $cost_{A'}^\pi\le \frac{\beta+1}{\beta-1}cost_A^{\pi}$, and
    \[
        cost_{A'} \le  \frac{\beta+1}{\beta-1}cost_A.
    \]
    
    Now note that the probability of any node $v$ being a pivot in $A'$ is at most the probability of any node being a pivot in $A$, which is at most $1/d(v)$. 
    The cost of making $(A, \alpha')$-poor nodes singletons is, at most, the sum of their degrees. 
    By \cref{lem:poornodes}, that expected sum of $(A, \alpha')$-poor nodes degrees is upper bounded by $\delta \cdot \EE[\pi]{cost_A}$.
    Hence, we have
    \[
        \EE[\pi]{cost_{A''}} \le \rb{1+ \delta} \EE[\pi]{cost_{A'}}.
    \]
    Finally, note that any $A$-lost node is a $A''$-light node or $A''$-heavy node with parameter $\beta/3$. 
    To see this, fix a ordering $\pi$. Consider a $A$-lost node $u$, and let $C=C_A^{\pi}(u)$. 
    Let $C'\subseteq C$ be the set of $A$-bad nodes in $C$. Let $C'':=C_{A''}^{\pi}(u)=C\setminus C'$. 
    By the definition of $A$-lost nodes, we have $d_{C'}(u)\ge \beta d_{C''}(u)$. If $u$ is not $A''$-light, then $d_{C''}(u)\ge |C''|/3$. So $d(u)\ge d_{C'}(u)\ge \beta d_{C''}(u)\ge \frac{\beta}{3}|C''|$. So $u$ is $A''$-heavy node with parameter $\beta/3$. By \cref{lem:heavynode}, $cost_{B}^\pi\le \frac{\beta+3}{\beta-3}cost_{A''}^{\pi}$. So $cost_{B}\le \frac{\beta+3}{\beta-3}cost_{A''}$. Putting all the steps together yields
    \[
        \EE[\pi]{cost_B} \le \frac{\beta+1}{\beta-1} \cdot \rb{1+\delta}\cdot \frac{\beta+3}{\beta-3} \cdot \EE[\pi]{cost_A}.
    \]
    Now since $\beta\ge \frac{4+\epsilon}{\epsilon}$, we have $\frac{\beta+1}{\beta-1}<\frac{\beta+3}{\beta-3}\le (1+\epsilon)$, and since $\alpha'<1/13$, we have $\frac{1+3/2 \cdot \alpha'}{1-5/2 \cdot \alpha'}<2$, and $\alpha'<\epsilon/8$ gives us $4\alpha'\frac{1+3/2 \cdot \alpha'}{1-5/2 \cdot \alpha'}<\epsilon$. So $\EE[\pi]{cost_B}\le (1+\epsilon)^3 \EE[\pi]{cost_A}\le (1+7\epsilon)\EE[\pi]{cost_A}$.
\end{proof}

\begin{lemma}\label{lem:bad-cluster-singleton-cost}
    Consider a fixed ordering $\pi$ and a clustering algorithm $A$. If $B$ is a clustering algorithm that runs $A$ and makes all $A$-bad and $A$-lost nodes, as well as all the nodes in bad clusters in $A$ singleton, then $cost(B)\le (1+8\epsilon) cost(A)$.
\end{lemma}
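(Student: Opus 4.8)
The plan is to bootstrap off \cref{lem:bad-and-lost-single}, which already handles making $A$-heavy, $A$-light, $(A,3\alpha\beta)$-poor, and $A$-lost nodes singletons at a multiplicative cost of $(1+7\epsilon)$ in expectation (and hence, since the statement here is for a fixed $\pi$, I will need the per-ordering version of that argument — the intermediate inequalities in the proof of \cref{lem:bad-and-lost-single} via \cref{lem:heavynode} are stated pointwise in $\pi$, and only the \cref{lem:poornodes} step is in expectation; I will remark that the poor-cluster step below is precisely what lets us avoid invoking the expectation bound pointwise). So it suffices to account for the \emph{additional} nodes that $B$ turns into singletons beyond those covered by \cref{lem:bad-and-lost-single}: namely, the good nodes that sit inside bad clusters. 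By the definition of good/bad clusters, a cluster $C$ is bad iff it is $(A,\alpha)$-poor or it has fewer than $\gamma|C|$ good nodes.

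First I would dispose of the $(A,\alpha)$-poor bad clusters. By \cref{lem:all-poor}, if $C$ is $(A,\alpha)$-poor then every node of $C$ that is not $A$-heavy or $A$-light is $(A,3\alpha\beta)$-poor — hence already made a singleton by the algorithm of \cref{lem:bad-and-lost-single}. So in a poor cluster there are \emph{no} extra good nodes to break off: the "good" nodes of a poor cluster are already singletons after the $B$ of \cref{lem:bad-and-lost-single}. Thus poor clusters contribute nothing new.

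The remaining case is a bad cluster $C$ that is not $(A,\alpha)$-poor but has fewer than $\gamma|C|$ good nodes, i.e.\ more than $(1-\gamma)|C|$ bad nodes. Here the extra cost incurred by $B$ is at most the sum of degrees, within $C$, of the good nodes of $C$ — when we singletonize a node $u$ we pay at most $d_C(u)$ for the edges inside $C$ it had (the half-edges leaving $C$ were already being paid); in fact, since we are comparing to $A$ where $u$ sat in $C$, the relevant increase is bounded by $d_C(u)$. I would then charge this to the cost already present in $C$ under $A$: since more than a $(1-\gamma)$ fraction of $C$ is bad and $C$ is not poor, each bad node is $A$-heavy, $A$-light, or $(A,3\alpha\beta)$-poor but-not-poor — wait, more carefully, each bad node is heavy, light, or $3\alpha\beta$-poor; a heavy or light node contributes $\Omega(|C|)$ to $cost_A(C)$, so $cost_A(C) = \Omega((1-\gamma)|C|\cdot|C|) = \Omega(|C|^2)$ unless the bad nodes are predominantly poor, and one checks the good nodes of $C$ number at most $\gamma|C|$, each of degree-in-$C$ at most $|C|$, for a total extra cost $\le \gamma|C|^2$. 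Comparing $\gamma|C|^2$ against $cost_A(C)$, and using $\gamma\le\epsilon/2$, gives the extra factor $(1+O(\epsilon))$. The poor-node sub-case needs the observation that the poor pivot's own cost is large, analogous to the mechanism in \cref{lem:bad-and-lost-single}; I expect this to be folded in by noting that after the $B$-of-\cref{lem:bad-and-lost-single} step a not-poor bad cluster with mostly heavy/light bad nodes already has the required cost.

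The main obstacle I anticipate is the bookkeeping to make the charging argument work \emph{pointwise in $\pi$} for the poor-type contributions — \cref{lem:bad-and-lost-single} leans on an expectation (\cref{lem:poornodes}) for the poor nodes, whereas the present lemma is stated for fixed $\pi$. The resolution is exactly the structural fact from \cref{lem:all-poor}: an $(A,\alpha)$-poor cluster has \emph{all} its non-heavy/non-light nodes already singletonized by the earlier step, so the only poor clusters contribute no new singletons and no new expectation bound is needed; and a non-poor bad cluster has a $(1-\gamma)$-majority of heavy/light bad nodes whose cost is pointwise $\Omega(|C|^2)$ by \cref{lem:heavynode}-type reasoning. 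So the whole argument for this lemma can be kept pointwise, and the final constant $(1+7\epsilon)(1+O(\epsilon)) \le (1+8\epsilon)$ comes out after choosing the hidden constant in $\gamma\le\epsilon/2$ appropriately. I would close by assembling: $cost(B) \le (1+O(\epsilon))\cdot(1+7\epsilon)\,cost(A) \le (1+8\epsilon)\,cost(A)$ for $\epsilon$ small enough.
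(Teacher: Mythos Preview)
Your approach is essentially the paper's: factor through \cref{lem:bad-and-lost-single} to an intermediate clustering $A'$, observe via \cref{lem:all-poor} that poor clusters are already fully singletonized in $A'$, and for each non-poor bad cluster $C$ charge the extra cost of singletonizing its at most $\gamma|C|$ good nodes against the $\tfrac{2}{3}(1-\gamma)|C|^2$ lower bound on the cost of $C$ coming from \cref{lem:bad-cluster-cost}. The paper sharpens your $\gamma|C|^2$ extra-cost bound to $\gamma^2|C|^2$ by noting that in $A'$ the bad and lost nodes are already singletons, so the only \emph{new} cost when breaking off the good nodes is the edges among those good nodes themselves; your pointwise-versus-expectation worry is legitimate but the paper glosses over it as well, simply writing the conclusion of \cref{lem:bad-and-lost-single} without the expectation (harmless in the end, since this lemma is only ever applied in expectation in the proof of \cref{thm:approximation}).
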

\begin{proof}
    Let $A'$ be the algorithm that runs $A$ and makes the $A$-bad nodes and $A$-lost nodes singleton. Then we can see $B$ as running $A'$ and making all the good nodes remaining in $A$-bad clusters singleton. Note that by \cref{lem:bad-and-lost-single} $cost(A')\le (1+7\epsilon) cost(A)$. Note that in $A'$ all poor clusters are singletons, since a poor cluster has at least one $(A,\alpha)$ poor node, and by \cref{lem:all-poor} all the non-pivot nodes in this poor cluster that are not heavy or light are $(A,3\alpha\beta)$-poor. 
    By definition of bad nodes, all the nodes in a poor cluster are bad.
    
    Consider a bad cluster $C$ in $A$. The cost of $C$ in $A$ is at least $\frac{2}{3}(1-\gamma)|C|^2$ by \cref{lem:bad-cluster-cost}. Making the good nodes in $C$ singleton in $A'$ adds at most $\gamma^2|C|^2$ to the cost of $A'$ since there are at most $\gamma|C|$ good nodes in $C$, and the only cost added by making these nodes singleton is through the edges between them. So the total cost added to the cost of $A'$ by making these good nodes singleton is at most $\frac{3\gamma^2}{2(1-\gamma)} cost(A)$. So $cost(B)\le cost(A')+\frac{3\gamma^2}{2(1-\gamma)}cost(A)$. Since $\gamma\le \epsilon/2$, we have $\frac{3\gamma^2}{2(1-\gamma)}\le \frac{4\gamma^2}{(1-\gamma)^2}\le \epsilon^2<\epsilon$. So $cost(B)\le (1+8\epsilon)cost(A)$. 
\end{proof}


\subsection{\ourPivot comparison to \refClustering}
Let $P_{alg}^\pi$ be the pivot set of our algorithm and $P_{ref}^\pi$ be the pivot set of the reference clustering.

\begin{lemma}\label{lem:pivot-set}
    Let $L'=L/2\beta$. If $v\in P_{ref}^\pi$, then $\pi(v)\le L'/d(v)$ holds with probability at least $1 - 1/n^{c-1}$.
\end{lemma}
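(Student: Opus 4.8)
The plan is to translate ``$v$ is a reference pivot'' into a statement about minimum ranks and then apply a one-line exponential bound. Recall that a node $v$ is a pivot of the reference clustering, i.e.\ $v \in P_{ref}^\pi$, exactly when $v$ has the smallest $\pi$-value in its closed neighborhood: $\pi(v) = \min_{w \in N[v]} \pi(w)$, and in particular $\pi(w) > \pi(v)$ for every $w \in N(v)$. Consequently, the ``bad'' event
\[
  \calE \;\eqdef\; \{\, v \in P_{ref}^\pi \,\} \;\cap\; \{\, \pi(v) > L'/d(v) \,\}
\]
is contained in the event that every one of the $d(v)+1$ ranks $\{\pi(w) : w \in N[v]\}$ exceeds $L'/d(v)$.

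First I would dispose of the degenerate cases: if $d(v) = 0$ then $L'/d(v)=\infty$ and the conclusion is vacuous, and if $L'/d(v) \ge 1$ the event $\{\pi(v) > L'/d(v)\}$ is empty; so assume $0 < L'/d(v) < 1$. Since the ranks $\pi(\cdot)$ are i.i.d.\ uniform on $[0,1]$, the $d(v)+1$ ranks attached to $N[v]$ are independent, so
\[
  \Pr[\calE] \;\le\; \Bigl(1 - \tfrac{L'}{d(v)}\Bigr)^{d(v)+1} \;\le\; \Bigl(1 - \tfrac{L'}{d(v)}\Bigr)^{d(v)} \;\le\; e^{-L'},
\]
using $1-x \le e^{-x}$.

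It then remains to plug in the constants. From $L \ge \frac{4c\beta}{\epsilon\gamma\alpha}\log n$ and $L' = L/(2\beta)$ we get $L' \ge \frac{2c}{\epsilon\gamma\alpha}\log n \ge 2c\log n$, since $\epsilon,\gamma,\alpha < 1$. Hence $e^{-L'} \le e^{-2c\log n} \le n^{-(c-1)}$ (with room to spare, independently of the base of the logarithm). This yields $\Pr[\calE] \le 1/n^{c-1}$, which is exactly the assertion: whenever $v$ is a reference pivot, it also satisfies $\pi(v) \le L'/d(v)$, except on an event of probability at most $1/n^{c-1}$. (If one prefers the conditional reading, dividing by $\Pr[v\in P_{ref}^\pi]=1/(d(v)+1)$ only introduces a harmless factor $d(v)+1 \le n$, which is absorbed by the slack in the choice of $c$.)

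I do not foresee a genuine obstacle here: once the pivot condition is rephrased via minimum ranks, the estimate is routine, and conditioning on $\{v\in P_{ref}^\pi\}$ only biases $\pi(v)$ toward smaller values, which can only help. The two points deserving a sentence of care are (i) making explicit the equivalence between being a reference pivot and having minimum closed-neighborhood rank, so that $\calE$ really does force all $d(v)+1$ ranks to be large, and (ii) isolating the cases $d(v)=0$ and $L'/d(v)\ge 1$ before invoking $1-x\le e^{-x}$.
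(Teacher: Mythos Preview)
Your proposal is correct and takes essentially the same approach as the paper: both rephrase ``$v$ is a reference pivot with $\pi(v)>L'/d(v)$'' as ``all $d(v)+1$ ranks in $N[v]$ exceed $L'/d(v)$'' and bound this by $(1-L'/d(v))^{d(v)+1}\le e^{-L'}$, then plug in $L'$. The only cosmetic difference is that the paper works directly with the conditional probability (picking up the factor $d(v)$ from dividing by $\Pr[v\in P_{ref}^\pi]$), whereas you bound the joint event first and defer the conditioning to your parenthetical; your treatment of the degenerate cases is also a bit more explicit than the paper's.
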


Note that not only does \cref{lem:pivot-set} say that each pivot in \refClustering is also a pivot in \ourPivot, but it also provides a stronger guarantee on $\pi(v)$.

For the next Lemma, note that the definitions of light, poor, etc, are well-defined if the clustering is fixed (and not necessarily the ordering $\pi$). We show that with high probability, one of the good nodes in $C_{ref}(v)$ triggers \textsc{explore} function for $v$ so that $v$ can correct its cluster.
\begin{lemma}\label{lem:1-eps-good-nodes}
Fix a clustering and its pivots that \refClustering algorithm can produce. 
Consider a pivot $v$ whose cluster $C_{ref}(v)$ is \emph{good}.
Let $C_{ref}(v)[good]$ be the good nodes in $C_{ref}(v)$. 
Then, with high probability, our algorithm assigns $v$ as the pivot of $u$, for any $(1-\epsilon)|C_{ref}(v)[good]|$ first nodes $u$ of $C_{ref}(v)[good]$. The ``first'' here is taken with respect to the dynamic ordering and the probability taken over rankings that produce the fixed clustering. 
\end{lemma}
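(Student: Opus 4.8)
The plan is to show that for each of the relevant nodes $u$ there is a moment \emph{after} $u$'s insertion at which \ourPivot calls \textsc{explore} on $v$, and that any such call (re)assigns $v$ as $u$'s pivot permanently. First I would condition on the event of \cref{lem:pivot-set} for $v$ (which occurs with probability $\ge 1-n^{-(c-1)}$), giving $\pi(v)\le L'/d(v)=\tfrac{L}{2\beta d(v)}$, and extract three deterministic facts. (a) When $v$ itself is inserted, its rank satisfies $\pi(v)\le L/d(v)\le L/d_{\mathrm{ins}}(v)$ and $v$ is its own lowest-rank closed neighbor — true over the final $N[v]$, hence over any earlier subset — so the algorithm marks $v$ a pivot and runs \textsc{explore}$(v)$. (b) $v$ then remains a pivot forever, since no neighbor of $v$ has smaller rank and therefore no later \textsc{explore} can re-home it. (c) For every $u\in C_{ref}(v)[good]$ with $u\ne v$ one has $u\in N(v)$ and $\pi(v)<\pi(u)$, so once $p(u)=v$ it stays so (no neighbor of $u$ has smaller rank), and any \textsc{explore}$(v)$ executed after $u$'s insertion sets $p(u)\gets v$: the loop reaches $w=u$, and $\pi(p(u))\ge\pi(v)$ with equality only if $p(u)=v$ already, so the assignment fires (dissolving $u$'s tentative cluster first if $u$ happens to be a tentative pivot). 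The case $u=v$ is immediate from (a)--(b).

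With this reduction, I would split on the arrival order of $u$ and $v$. If $v$ arrives after $u$, the \textsc{explore}$(v)$ call from (a) — which runs while $u$ is already present — is exactly the call we need. If $v$ arrives before $u$, I would produce the needed call from a single late-arriving node. Let $C:=C_{ref}(v)[good]$, so $|C|\ge\gamma|C_{ref}(v)|$ because the cluster is good, and let $C_{last}$ be the last $\lceil\epsilon|C|\rceil$ nodes of $C$ in arrival order; then every $u$ among the first $(1-\epsilon)|C|$ nodes of $C$ precedes all of $C_{last}$, and every $w\in C_{last}$ arrives after $v$. I would show that any $w\in C_{last}$ with $\pi(w)\le L'/d(v)$ triggers \textsc{explore}$(v)$ on insertion: $w$ is good hence not heavy, so $d(w)<\beta|C_{ref}(v)|\le 2\beta d(v)$ (using $|C_{ref}(v)|\le d(v)+1$), whence $\pi(w)<L/d(w)$ and $w$ enters the ``$\pi(w)\le L/d(w)$'' branch; there its lowest-rank neighbor is $v$ (since $p_{ref}(w)=v$, restricted to the present subset, which contains $v$), $v\ne w$ is a pivot, so $p(w)\gets v$; and $d_{\mathrm{ins}}(v)\le d(v)\le L/\pi(w)$, so the guard for calling \textsc{explore}$(v)$ holds. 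Thus one suitable $w$ handles all the Case-B nodes $u$ at once.

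The remaining, probabilistic step is that with high probability some $w\in C_{last}$ has $\pi(w)\le L'/d(v)$. Here I would use that a good cluster is not poor: for any good $z_0\in C_{ref}(v)$, not-poor gives $\alpha d(v)<d(z_0)$ and not-heavy gives $d(z_0)<\beta|C_{ref}(v)|$, so $d(v)<\tfrac{\beta}{\alpha}|C_{ref}(v)|$. Combined with $|C_{last}|\ge\epsilon\gamma|C_{ref}(v)|$, the expected number of $w\in C_{last}$ with $\pi(w)\le\tfrac{L}{2\beta d(v)}$ is (in the unconditional product measure) at least $\tfrac{\epsilon\gamma\alpha L}{2\beta^2}=\Omega(\log n)$ by the choice $L\ge\tfrac{4c\beta}{\epsilon\gamma\alpha}\log n$, with a constant as large as we wish via $c$; a Chernoff bound then makes the count positive with probability $1-n^{-\Omega(c)}$, and a union bound over this event, over \cref{lem:pivot-set}, and over all pivots $v$ finishes the argument (note that each cluster costs only $O(1)$ such events: one \textsc{explore}$(v)$ handles all Case-A nodes and one node $w$ handles all Case-B nodes).

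I expect the main obstacle to be the conditioning: the estimate just sketched lives in the unconditional i.i.d.\ measure, whereas the lemma conditions on $\pi$ producing the fixed reference clustering, and this conditioning forces each $\pi(w)$ to exceed $\pi(v)$ as well as the ranks $\pi(p_{ref}(z))$ of the reference pivots of $w$'s neighbors $z$ outside $C_{ref}(v)$ — constraints that can only raise $\pi(w)$. One then has to show that, conditionally, a constant fraction of $C_{last}$ still has rank below $L'/d(v)$; the awkward case is a good node whose out-of-cluster neighborhood contains a node $z$ whose reference pivot has tiny degree and hence possibly a non-negligible rank. I would try to handle this by revealing the ranks in a convenient order (those outside $C_{ref}(v)$ first, after which the ranks of $C_{last}$ are conditionally independent and uniform above an explicit threshold) and then either bounding how often such $z$'s occur among the good nodes of a good cluster, or passing to a sub-collection of $C_{last}$ of size still $\Omega(\epsilon\gamma|C_{ref}(v)|)$ that avoids them. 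All the remaining ingredients — the structural facts (a)--(c), the case split, and keeping the insertion-time-versus-final degree distinctions straight — are routine.
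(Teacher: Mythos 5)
Your proposal follows the same architecture as the paper's proof: invoke \cref{lem:pivot-set}, note that once \textsc{explore}$(v)$ runs after $u$ is inserted it (permanently) sets $p(u)=v$, handle the case where $v$ arrives after $u$ via $v$'s own insertion, and in the remaining case produce a late-arriving trigger $w$ from the last $\eps$-fraction of $C_{ref}(v)[good]$, using the good-cluster properties (not-$\alpha$-poor, not-$\beta$-heavy, $\gamma$-fraction good) to lower-bound $|C_{ref}(v)|$ relative to $d(v)$. The reduction steps (a)--(c) and the observation that one trigger from $C_{last}$ covers all earlier good nodes at once are correct and match the paper.

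The gap is in the probabilistic step, and it is not merely the conditioning issue you flag. You ask for a $w$ with $\pi(w)\le L'/d(v)$, which is \emph{exactly} the upper bound that \cref{lem:pivot-set} gives for $\pi(v)$. Conditioned on the fixed clustering, $\pi(w)>\pi(v)$, so the usable window for $\pi(w)$ is $(\pi(v),\,L'/d(v)]$, and its length $L'/d(v)-\pi(v)$ can be arbitrarily small (the lemma only upper-bounds $\pi(v)$; it does not separate $\pi(v)$ from $L'/d(v)$). Your expected-count calculation in the i.i.d.\ measure yields $\Omega(\log n)$, but after conditioning just on $\pi(w)>\pi(v)$ — never mind the full clustering — the count's expectation is proportional to $L'/d(v)-\pi(v)$, which you have not bounded away from zero. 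The paper's computation avoids this by asking only for $\pi(w)\le L/\max\bigl(d(w),d(v)\bigr)$; since $w$ is not heavy this threshold is at least $L/(\beta d(v))=2L'/d(v)$, so the window $(\pi(v),\,L/\max(d(w),d(v))]$ always has width $\ge L'/d(v)$, and the paper then does a genuine conditional-probability calculation $\frac{b-a}{1-a}$ with $a\le L'/d(v)$ and $b\ge 2L'/d(v)$ to get the per-node bound $\ge\frac{\alpha L/2}{d(w)}$. Your version can be repaired either by sharpening \cref{lem:pivot-set} to $\pi(v)\le L'/(2d(v))$ whp (the same proof gives this) or by adopting the paper's $d(w)$-dependent threshold, but as written the key step does not go through.

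Your final paragraph correctly identifies the deeper conditioning subtlety — the fixed reference clustering forces $\pi(w)$ above the ranks of the reference pivots of all of $w$'s neighbors, not merely above $\pi(v)$ — and sketches a reveal-outside-first plan. The paper's proof also does not explicitly grapple with these additional constraints (it conditions only on $\pi(v)<\pi(u)$), so flagging this is a plus; but the sketch is speculative, and combined with the zero-slack threshold it leaves the probabilistic core of the lemma unproved.
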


\cref{lem:misplaced} shows that \ourPivot identifies the pivot of all the good nodes that are not lost correctly.
\begin{lemma}\label{lem:misplaced}
    Fix a clustering and its pivots that the reference algorithm can produce. 
    Let $v$ be a pivot in that reference clustering and $u$ be in the $v$'s cluster. 
    With high probability, any $u$ that is ref-good and not ref-lost is assigned to $v$ by \cref{alg:node-dynamic-pivot}. 
\end{lemma}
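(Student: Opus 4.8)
The plan is to split on when $u$ arrives within its reference cluster $C = C_{ref}(v)$, mirroring the dichotomy sketched in the analysis outline. Fix a ranking $\pi$ consistent with the fixed clustering and its pivots. By \cref{lem:pivot-set}, with probability $1 - 1/n^{c-1}$ we have $\pi(v) \le L'/d(v)$, so $v$ is recognized as a pivot by \cref{alg:node-dynamic-pivot}; condition on this. Since $u$ is $ref$-good, it is in particular not $ref$-light and not $ref$-heavy, and not $(ref, 3\alpha\beta)$-poor, so $d_C(u) \ge |C|/3$ and $d(u) \le \beta|C|$ and $d(u) > 3\alpha\beta\, d(v)$; since $u$ is not $ref$-lost, the number of $ref$-good neighbors of $u$ in $C$ is at least a $1/(\beta+1)$ fraction of $d_C(u)$, hence at least $\frac{1}{3(\beta+1)}|C|$. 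Let $C[good]$ denote the good nodes of $C$; since $C$ is... wait — $C$ need not be a good \emph{cluster}. Actually $u$ being $ref$-good does not force $C$ to be a good cluster, but $u$ being not $ref$-lost means $u$ has many good neighbors in $C$, which is what we actually use; I will track good \emph{neighbors} of $u$ rather than invoking goodness of $C$. In fact, whenever a cluster contains a good node it cannot be a poor cluster (a poor cluster has a poor node, and by \cref{lem:all-poor} every non-heavy non-light node there is poor, contradicting $u$ good), so $C$ is a non-poor cluster; if additionally $C$ has a $\gamma$ fraction of good nodes it is good, and otherwise the few good nodes including $u$ get handled when $C$ is declared a bad cluster — but for \emph{this} lemma we only claim $u$ is assigned to $v$ by \cref{alg:node-dynamic-pivot}, independent of the later break-cluster step, so I can ignore that subtlety.

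\textbf{Case 1: $u$ is among the first $(1-\epsilon)$ fraction of the good nodes of $C$ to arrive.} I would like to apply \cref{lem:1-eps-good-nodes}, but that lemma is stated for pivots whose cluster is \emph{good}. If $C$ is good, \cref{lem:1-eps-good-nodes} directly gives that $v$ is assigned as the pivot of $u$ with high probability, and we are done. If $C$ is bad (but not poor), then $C$ has fewer than $\gamma|C|$ good nodes; I will instead argue directly: among the good neighbors of $u$ in $C$ (at least $\frac{1}{3(\beta+1)}|C|$ of them), consider the last one to arrive before... hmm, this needs care. The cleanest route is to strengthen the use of \cref{lem:1-eps-good-nodes}'s proof idea: if $u$ arrives and more than $\epsilon$ fraction of $C[good]$ is still to arrive, then in particular some good node $w$ of $C$ with $w \in N(u)$ arrives after $u$ (this needs that $u$ has enough good neighbors that not all of them can fit in the first $(1-\epsilon)|C[good]|$ arrivals — true since $u$ has $\ge \frac{1}{3(\beta+1)}|C| \ge \frac{1}{3(\beta+1)}|C[good]|$ good neighbors and $\epsilon$ is a constant fraction, provided the constants are set so that $\frac{1}{3(\beta+1)} > \epsilon$, i.e. $\beta$ not too large relative to $1/\epsilon$ — this is where I'd double-check the parameter regime). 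When that later good neighbor $w$ arrives, either $w$'s pivot is already $v$ (by induction on arrival order all earlier-arriving good non-lost nodes are correctly pivoted) and $w$ being good, not heavy/light, with $\pi(w) \le L/d(w)$ triggers \textsc{explore}$(v,\dots)$ when $d(v) \le L/\pi(w)$, which holds because $v$ is the pivot and $\pi(v) \le L'/d(v)$ plus $d(v) \le 3|C|/\alpha \le$ ... here I would use the degree comparisons to verify $d(v) \le L/\pi(w)$; or $w$ itself is a pivot arrival running \textsc{explore}. In either event \textsc{explore}$(v)$ scans $N[v] \ni u$ and sets $p(u) \gets v$ (or leaves it, if already correct). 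I expect this bookkeeping — showing \textsc{explore} is actually invoked on $v$, with the right degree threshold, and that it does not get ``stolen'' by an intermediate pivot of smaller rank (impossible since $v$ has the smallest rank in $N[u] \cap C$ by construction of the reference clustering) — to be the main obstacle.

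\textbf{Case 2: $u$ is among the last $\epsilon$ fraction of $C[good]$ to arrive.} Then at least $(1-\epsilon)|C[good]|$ good nodes of $C$ have already arrived; by \cref{lem:1-eps-good-nodes} (or the Case 1 argument applied inductively) all of them that are not $ref$-lost have been assigned pivot $v$. Even in the worst case that every $ref$-lost good node is among these, the fraction of $u$'s good neighbors that are correctly pivoted to $v$ is at least $\frac{1}{3(\beta+1)}|C| - \epsilon|C[good]| - (\#\text{lost good nodes})$; more robustly, I would argue: $u$ has $\ge \frac{1}{3(\beta+1)}|C|$ good non-lost neighbors in $C$ is \emph{not} what not-lost gives — let me instead just use that $u$ has $\ge \frac{1}{\beta+1} d_C(u) \ge \frac{1}{3(\beta+1)}|C|$ good neighbors in $C$, of which all but at most $\epsilon |C|$ (the yet-to-arrive ones) have arrived, and among the arrived good neighbors all the non-lost ones point to $v$. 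The quantity $x = (\frac{1}{\beta+1} - \epsilon)/\beta$ in the parameter setup is exactly the lower bound on the fraction of $N[u]$ that points to $v$ at this time; with $100\log\frac{1}{1-x}$ samples, a Chernoff/union bound shows that with high probability the sample $S$ in \cref{line:sample-log-neighbors} hits some $s$ with $p(s) = v$ and $\{v,u\} \in E$, so $s^*$ is chosen with $\pi(p(s^*)) \le \pi(v)$; since $v$ is the minimum-rank neighbor in $N[u]\cap C$ and $\pi(v) < \pi(u)$ (as $u$ is not a pivot and $v$ is its reference pivot), line 21 fires and $p(u) \gets v$. I would then take a union bound over the $O(n)$ nodes $u$ and the conditioning event from \cref{lem:pivot-set} to get the ``with high probability'' conclusion. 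The delicate points are (i) confirming the arithmetic that $x > 0$ and that the fraction of correctly-pointing neighbors in the sample pool is genuinely $\ge x$ after excluding lost nodes and late arrivals, and (ii) the induction on arrival order linking Cases 1 and 2 cleanly, which I would set up as a single induction over the arrival sequence proving the statement for each good non-lost node as it arrives.
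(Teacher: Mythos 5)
Your proposal follows essentially the same route as the paper's proof: split on whether $u$ arrives in the first $(1-\eps)$ fraction or the last $\eps$ fraction of $C[good]$, invoke \cref{lem:1-eps-good-nodes} for the early case, and for the late case lower-bound the fraction of $u$'s neighbors already pointing to $v$ and apply a Chernoff bound to the $O(\log n)$ samples in line~\ref{line:sample-log-neighbors}; the quantity $x$ you identify is the same one the paper's parameter setup uses. Two observations of yours are worth highlighting because they touch places where the paper is itself a little loose: (i) \cref{lem:1-eps-good-nodes} is stated only for pivots whose reference cluster is \emph{good}, and the paper's proof of \cref{lem:misplaced} invokes it without checking that hypothesis; in practice this is harmless because \cref{lem:misplaced} is only applied inside ``Case~1: $C_{ref}(v)$ is a good cluster'' of \cref{thm:approximation}, so you could simply add that hypothesis to the lemma and drop your (incomplete, hand-wavy) attempt to patch the bad-cluster subcase directly. (ii) You correctly note that ``not lost'' gives at least $\frac{1}{1+\beta}d_C(u)\ge\frac{1}{3(1+\beta)}|C|$ good neighbors of $u$ in $C$, whereas the paper writes $\frac{1}{1+\beta}|C|$, which cannot hold since $d_C(u)<|C|$; the factor of $3$ propagates into $x$ but only shifts constants and does not affect the argument's validity, as you flag. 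Your Case~2 additionally discards lost-but-good early neighbors, which is unnecessary (\cref{lem:1-eps-good-nodes} covers them) but only makes your bound more conservative.
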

\begin{proof}
    Let $C = \CCref(v)$.
    First, if $u$ is among the first $(1-\eps)$ fraction of the nodes of $C[good]$ with respect to the dynamic ordering, then by \cref{lem:1-eps-good-nodes}, with high probability \cref{alg:node-dynamic-pivot} assigns $u$ to $v$'s cluster.
    
    So, second, consider the case when $u$ is in the last $\eps$ fraction of $C[good]$.
    Let $D$ refer to the $ref$-good neighbors of $u$ that are among the first $1-\eps$ fraction of the nodes of $C[good]$. By \cref{lem:1-eps-good-nodes}, with high probability \cref{alg:node-dynamic-pivot} assigns the nodes in $D$ to $v$'s cluster. We now lower-bound $|D|$. 
    We will use that lower bound to show that the sampling process in \cref{alg:node-dynamic-pivot} (when $\pi(u)<L/d(u)$) will sample at least one node from $D$, and hence assign $v$ to $u$'s cluster.

    Since $u$ is not $ref$-lost, by \cref{def:light-poor-heavy-lost}, $u$ has at most $\beta$ times more $ref$-bad than $ref$-good neighbors in $C$. 
    Hence, at least $1/(1+\beta) \cdot |C|$ neighbors of $u$ in $C$ are $ref$-good.
    Also, observe that $1/(1+\beta) \cdot |C| - \eps \cdot |C[good]| \ge \rb{1/(1+\beta) - \eps} \cdot |C|$ of those neighbors are among the first $1-\eps$ fraction of $C[good]$. Therefore, $|D| \ge \rb{1/(1+\beta) - \eps} \cdot |C|$. On the other hand, since $u$ is a $ref$-good node, we have that $d(u) < \beta \cdot |C|$. 

    Finally, we conclude that
    \[
        \frac{|D|}{d_{insert}(u)}\ge \frac{|D|}{d(u)} \ge \frac{1/(1+\beta) - \eps}{\beta}.
    \]
    where $d_{insert}(u)$ is the degree of $u$ at the time of insertion. 
    Let $\bar{n}$ be the number of nodes when inserting $u$. If $x:=\frac{1/(1+\beta) - \eps}{\beta}$, the probability of not sampling any node in $D$ is at most $(1-x)^{|S|}$, where $S$ is the sample set. Recall that $|S|\ge 100\log(\frac{1}{1-x})\cdot \log \bar{n}$, so this probability is at most $1/\bar{n}^{100}$. Note that if $n$ is the current number of nodes, $\bar{n}\ge n/(1+\epsilon)$ since we recompute everything when the number of updates is at most $\eps$ times the number of nodes at last \recompute. So the probability that $u$ is clustered correctly is at most $1-(1+\epsilon)/n^{100}$.  
\end{proof}

Next, \cref{lem:4-approx-break-cluster} aids us to compare \refClustering with \ourPivot clustering. 
Given pivot $v$ and set $B_v$, let $C_t$ be the set of nodes in $B_v$ with degree at least $t$. Let $cost(B_v|C_t)$ be the cost of the clustering on $B_v$ where all the nodes in $C_t$ are clustered as one cluster and the nodes in $B_v\setminus C_t$ as singletons. In particular, this cost equals half of the number of edges from $B_v$ to outside of $B_v$, plus the number of edges with at least one endpoint in $B_v\setminus C_t$, plus the number of non-edges in $C_t$.
\begin{lemma}\label{lem:4-approx-break-cluster}
    Consider a pivot $v$, and let $C^*$ be the set of good nodes that are not lost in $C_{ref}(v)$. There is a threshold $t\in \{1,(1+\epsilon),\ldots, (1+\epsilon)^{\ceil{\log n}+1}\}$, such that $cost(B_v|C_t)\le \frac{4}{1-2\eps}cost(B_v|C^*)$.
\end{lemma}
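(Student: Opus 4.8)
The plan is to pick the threshold $t$ to be the largest power of $(1+\epsilon)$ that is at most $|\CCref(v)|/3$, and show this value works (if $\CCref(v)$ is so small, say $|\CCref(v)|<3$, that this is uninformative, one instead takes $t>n$, so $C_t=\emptyset$, which already makes $cost(B_v|C_t)=cost(B_v|C^*)$ up to the common lower-order terms). First I would record three easy facts. (i) By \cref{lem:misplaced}, every node of $C^*$ is placed in $v$'s cluster by \cref{alg:node-dynamic-pivot}, so $C^*\subseteq B_v$. (ii) Inspecting \cref{alg:node-dynamic-pivot} and \textsc{explore}, a node ever receives pivot $v$ only if it is a neighbor of $v$ in $G$ (either $v$ is its lowest-rank closed neighbor, or the sampled node witnessing $v$ is adjacent to it, or \textsc{explore}$(v,\cdot)$ reaches it and it lies in $N[v]$); hence $B_v\subseteq N[v]$ and $|B_v|\le d(v)+1$. (iii) Each $u\in C^*$ is $ref$-good, hence not $ref$-light, so $d(u)\ge d_{\CCref(v)}(u)>|\CCref(v)|/3\ge t$, which gives $C^*\subseteq C_t$.

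Granting $C^*\subseteq C_t$, write $m=|\CCref(v)|$ and $R=C_t\setminus C^*$, and note $B_v\setminus C_t\subseteq B_v\setminus C^*$, so the ``singleton part'' of $cost(B_v|C_t)$ is no larger than that of $cost(B_v|C^*)$; comparing the three terms in the definition of these two costs, it remains to bound the extra in-cluster non-edges $\overline e(C_t)-\overline e(C^*)=\overline e(C^*,R)+\overline e(R)$ by roughly three times $cost(B_v|C^*)$. The key charging is: for each $w\in R$ the number of non-neighbors of $w$ inside $C_t$ equals $(|C_t|-1)-d_{C_t}(w)\le (|C_t|-t)+d_{\overline{C_t}}(w)$, using $d(w)\ge t$; summing over $w\in R$ bounds the extra non-edges by $|R|\cdot(|C_t|-t)+e(R,\overline{C_t})$. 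Both pieces are charged against $cost(B_v|C^*)$: since every node of $R$ lies in $B_v\setminus C^*$ it is a singleton in the clustering $B_v|C^*$ and pays (fully inside $B_v$, with weight $\tfrac12$ outside) for all of its $\ge t$ incident edges, so a careful accounting gives $\sum_{w\in R}d(w)\le 2\,cost(B_v|C^*)$, whence $|R|t\le 2\,cost(B_v|C^*)$; likewise the edges of $R$ leaving $C_t$ land either outside $B_v$ (counted with weight $\tfrac12$ in the cost) or in $B_v\setminus C_t\subseteq B_v\setminus C^*$ (counted in the ``singleton part''), so $e(R,\overline{C_t})\le 2\,cost(B_v|C^*)$; and $|C_t|-t=|C^*|+|R|-t\le (2+O(\epsilon))t+|R|$ because $t>m/(3(1+\epsilon))\ge |C^*|/(3(1+\epsilon))$.

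The delicate point, and what I expect to be the main obstacle, is the residual quadratic-in-$|R|$ contribution from $|R|\cdot(|C_t|-t)\le (2+O(\epsilon))|R|t+|R|^2$ — i.e.\ bounding the non-edges among the ``extra'' high-degree nodes $R$ that our clustering placed into $v$'s cluster while the reference clustering did not retain them (these include all $ref$-good-but-$ref$-lost nodes of $\CCref(v)$, so $R$ need not be small). When $v\notin C_t$ this is immediate: $|B_v|\le d(v)+1\le t+1$ forces $|R|\le t$, so $|R|^2\lesssim |R|t\le 2\,cost(B_v|C^*)$. When $v\in C_t$ every node of $C_t$ is adjacent to $v$, and one must argue that the non-edges inside $R$ are already paid for by the singleton cost of $R$ in $B_v|C^*$ together with the degree budget $\sum_{w\in R}d(w)\le 2\,cost(B_v|C^*)$. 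Keeping the overall multiplicative loss down to exactly $4$ (rather than a larger absolute constant) is where this counting must be done most carefully, and it is where the $\tfrac{1}{1-2\epsilon}$ slack in the statement is consumed; everything else — the choice of $t$, the reduction, and the two charging budgets — is routine bookkeeping once the threshold is fixed.
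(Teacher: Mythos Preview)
Your plan differs from the paper's proof in two essential ways, and the second of them leaves a genuine gap.

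\textbf{Different threshold.} You take $t$ to be (roughly) $|\CCref(v)|/3$, chosen so that every $ref$-good node of $\CCref(v)$ has degree above $t$ and hence $C^*\subseteq C_t$. The paper instead sets $k=|C^*|$ and takes $t$ to be the smallest power of $(1+\eps)$ with $t\ge \tfrac{2k}{3}$; with that choice $C^*\setminus C_t$ may be \emph{nonempty}, and the paper handles those nodes $u\in S=C^*\setminus C_t$ directly by observing that as singletons they cost $\le t$, while in $\mathscr{C}^*$ they already pay $\ge (k-t)/2$ non-edges inside $C^*$; the ratio is $\le \tfrac{4}{1-2\eps}$ precisely because $t\le \tfrac{2k(1+\eps)}{3}$. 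Your containment $C^*\subseteq C_t$ removes the set $S$ but enlarges $C_t$, which feeds the real difficulty below.

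\textbf{The unresolved $|R|^2$ term.} You correctly isolate $R=C_t\setminus C^*$ and reduce the comparison to bounding the non-edges of $C_t$ not already in $C^*$, arriving at a term of order $|R|(|C_t|-t)\le (2+O(\eps))|R|t+|R|^2$. You then acknowledge that $|R|^2$ is ``the main obstacle'' and offer only a heuristic (``one must argue that the non-edges inside $R$ are already paid for\dots''). This step is not a proof, and in fact the approach cannot be completed as stated: nodes of $R$ need not lie in $\CCref(v)$ at all (they can be $ref$-bad or $ref$-lost nodes that \ourPivot misassigned to $v$), so $|R|$ is bounded only by $|B_v|\le d(v)+1$, which can be much larger than $t\approx |\CCref(v)|/3$. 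In that regime $|R|^2$ can dominate $|R|t$, and the singleton cost $\sum_{w\in R} d(w)\le 2\,cost(B_v|C^*)$ gives only $|R|t\lesssim cost(B_v|C^*)$, not $|R|^2\lesssim cost(B_v|C^*)$. Your fallback ``when $v\notin C_t$'' is essentially vacuous: since $\CCref(v)\subseteq N[v]$ we have $d(v)\ge |\CCref(v)|-1$, so $d(v)\ge t$ except in trivial cases.

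\textbf{What the paper does instead.} The paper does not try to make a single threshold work. It splits on whether $|T|=|C_t\setminus C^*|\le k$. If $|T|\le k$ it bounds node-by-node costs in $S$ and $T$ to get the $\tfrac{4}{1-2\eps}$ ratio. If $|T|>k$ it abandons $C_t$ entirely and uses the \emph{all-singletons} threshold $t=(1+\eps)^{\ceil{\log n}+1}$, which is also in the candidate set; then $cost(B_v|\emptyset)\le cost(B_v|C^*)+\binom{k}{2}$, and since every node of $T$ is a singleton in $\mathscr{C}^*$ with degree $\ge t\ge \tfrac{2k}{3}$, one gets $cost(B_v|C^*)\ge |T|t/2\ge k^2/3$, so $cost(B_v|\emptyset)\le 4\,cost(B_v|C^*)$. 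This case split is the missing idea in your outline; without it (or an equivalent device), the $|R|^2$ contribution is not controlled and the argument does not close.
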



\subsection{Putting it all together}

\textit{Proof of \cref{thm:approximation}.}
We refer to \refClustering as $ref$ and the clustering of \ourPivot by $B$. Let $A$ be the clustering algorithm that runs \refClustering to obtain $ref$, and then makes all $ref$-bad,  $ref$-lost nodes as well as all the nodes in $ref$-bad clusters singleton. By \cref{lem:bad-cluster-singleton-cost}, $\mathbb{E}(cost(A))\le (1+8\epsilon)\mathbb{E}(cost(ref))$. Note that all the nodes that are not singletons in $A$ are good nodes that are not lost and are not in bad clusters in $ref$. 



Next we show that $\mathbb{E}(cost(B))\le 4(1+O(\epsilon)) \mathbb{E}(cost(A))$. Consider a pivot $v$ in \refClustering. By \cref{lem:pivot-set} $v$ is also a pivot in \ourPivot with high probability. We will compare clustering costs by dividing up the clusters into groups: We consider all the nodes that are in $B_v$ for a pivot $v$ together and compare the cost of clustering these nodes in $A$ and in $B$.
We have two cases:
\paragraph{Case 1: $C_{ref}(v)$ is a good cluster.} Consider $B_v$, the set of nodes that are assigned to $v$ as their pivot, and $C_v\subseteq B_v$, the set of nodes in $B_v$ that are clustered with $v$ and are not singletons. 

Let $C_t=\{u\in B_v | d(u)\ge t \}$, and let $cost(B_v|C_t)$ be the cost of clustering all nodes in $C_t$ as one cluster and the nodes in $B_v\setminus C_t$ as singletons. 
 Let ${t^*}$ be the threshold in $\{1,(1+\epsilon),\ldots,(1+\epsilon)^{\ceil{\log n}+1}\}$ where $cost(B_v|C_{t^*})$ is minimized. 
By \cref{thm:cost-estimate}, $cost(B_v|C_v)\le (1+220\eps)cost(B_v|C_{t^*})$.

If $C^*_v$ is the set of good nodes that are not lost in $C_{ref}(v)$, then by \cref{lem:misplaced} all the nodes in $C_v^*$ are correctly assigned to $v$, and so they are in $B_v$.
Note that $A$ clusters the node in $B_v$ as follows: put all the nodes in $C_v^*$ in one cluster and make all the nodes in $B_v\setminus C_v$ singleton. 
By \cref{lem:4-approx-break-cluster}, we have that $cost(B_v|C_{t^*})\le \frac{4}{1-2\eps}cost(B_v|C_v^*)$, so $cost(B_v|C_{v})\le \frac{4(1+220\eps)}{(1-2\eps)}cost(B_v|C_v^*)\le 4(1+230\eps)cost(B_v|C_v^{*})$ since $\eps\le 1/1000$. 

\paragraph{Case 2: $C_{ref}(v)$ is a bad or poor cluster.} We know that all the nodes in $C_{ref}(v)$ are singletons in $A$. Let $t^*$ be the threshold in $\{1,(1+\epsilon),\ldots,(1+\epsilon)^{\ceil{\log n}+1}\}$ where $cost(B_v|C_{t^*})$ is minimized. So $cost(B_{t^*}|C_v)$ is at most the cost of making $B_v$ singleton, i.e. $cost(B_v|C_{(1+\epsilon)^{\ceil{\log n}+1}})=cost(B_v|\emptyset)$. By \cref{thm:cost-estimate} $cost(B_v|C_v)\le (1+220\eps)cost(B_v|C_{t^*})$. So $cost(B_v|C_v)\le (1+220\eps)cost(B_v|\emptyset)$.

Finally, note that a node not in any $B_v$ is a singleton in both $B$ and $A$. 
Putting the above two cases together, 
So we have $\mathbb{E}(cost(B))\le 4(1+50\epsilon) \mathbb{E}(cost(A))\le 4(1+230\epsilon)(1+8\epsilon)\mathbb{E}(cost(ref))\le 4(1+1000\epsilon)\mathbb{E}(cost(ref))$, where the last inequality uses the fact that $\epsilon<1/1000$.

\section{Random deletions}
Let $n_0$ be the number of nodes in the graph just after the last recomputation.
Our algorithm for deletions is quite simple:

\noindent 1. Ignore deletions.

\noindent 2. After $\eps n_0/6$ updates, counting both insertions and deletions, recompute the clustering from scratch by treating all non-deleted nodes as if they had been inserted one by one again. 
    These insertions are processed by \cref{alg:node-dynamic-pivot}.
    Our recomputing procedure is described in \cref{sec:recompute}.

Recall that on a deletion update, a node to be deleted is chosen uniformly at random among the existing nodes.
Observe that the choice of deletions is independent of the randomness used by our algorithm.
At time $t$ of the algorithm, let $D_t$ be the nodes deleted since the last clustering recomputation. 
We think of $D_t$ as the nodes waiting to be deleted.

By construction, we have $|D_t| \le \eps n_0 /6$. 
Since after the recomputation there are $n_0$ nodes in the graph, $D_t$ is a subset of (at least) $n_0$ nodes, and hence $\prob{u \in D_t} \le |D_t| / n_0 \le \eps / 6$. There is inequality instead of equality, as the $\eps n_0/6$ updates might contain insertions, resulting in a reduced probability of a node appearing in $D_t$.

Let $\cCnodel$ be the clustering obtained by $5$-approximate \refClustering at time $t$ where deletions $D_t$ are ignored. 
Let $\cCdel$ be the clustering obtained by \refClustering at time $t$ in which deletions $D_t$ are considered.
Let $P_t \subseteq V \times V$ be the node pairs that $\cCnodel$ pays for. 
We aim to lower-bound the expected number of pairs in $P_t$ that $\cCdel$ also pays for.
Consider a pair $e = \{u, v\} \in P_t$. We will lower-bound the probability that the clustering of $u$ and $v$ is the same in $\cCdel$ as in $\cCnodel$

Consider a node $u$; the exact same analysis applies to $v$. 

First, assume that $u$ is a singleton in $\cCnodel$. 
This implies that the neighbor of $u$ with smallest rank, $w$, is not a pivot. 
    Node $w$ is not a pivot because it has a neighbor $w'$ with a rank smaller than $w$. 
    If none of $w, w'$, or $u$ is in $D_t$, then $u$ is a singleton in $\cCdel$. 
    Since we have $\prob{w \in D_t \text{ or } w'  \in D_t \text{ or }  u \in D_t} \le \prob{w \in D_t} + \prob{w' \in D_t} + \prob{u \in D_t} \le \eps/2$, then in this case, the clustering of $u$ in $\cCdel$ is the same as in $\cCnodel$ with probability at least $1 - \eps/2$.

Second, assume that $u$ is not a singleton in $\cCnodel$.
    This implies that the highest-rank neighbor $w$ of $u$ is a pivot. 
    By deleting nodes, and unless $w$ is deleted, $w$ remains a pivot. So, unless $u$ or $w$ are in $D_t$, the clustering of $u$ is the same in $\cCdel$ and $\cCnodel$.
    Since we have $\prob{w \in D_t \text{ or } u \in D_t} \le \eps/3$, in this case, the clustering of $u$ in $\cCdel$ is the same as in $\cCnodel$ with probability at least $1 - \eps/3$.

This analysis implies that the clustering of a pair $\{u, v\}$ is the same in $\cCdel$ as in $\cCnodel$ with probability at least $1 - 5\eps/6$. Hence, by the linearity of expectation,
\[
    \EE{\cost(\cCdel)} \ge (1 - 5\eps/6) |P_t| \ge (1 - \eps) \cost(\cCnodel).
\]

\section{Experiments}\label{sec:experiments}
In this section, we empirically demonstrate that our approximation guarantee is better than that of \refClustering and the algorithm of \cite{cohendynamic}. In the rest, we use \textsc{dynamic agreement} to refer to the approach in \cite{cohendynamic}. 
\paragraph{Algorithm Parameters}In \ourPivot, for the \textsc{Break-cluster} and \textsc{update-cluster} subroutines we do the following: in \textsc{Break-cluster} we consider $O(\log{n})$ many candidates for $C_v$, estimate their costs and pick one with the lowest cost. In \textsc{update-cluster} we update our $O(\log{n})$ estimates by adding the new node, and again pick the one with lowest cost. 
To simplify the code, we heuristically alter the \textsc{Break-cluster} and \textsc{update-cluster} subroutines as follows.
In \textsc{Break-cluster}, 
for each node $u\in B_v$, we sample $O(\log n)$ nodes in $B_v$. If $u$ is attached to half of them, we add $u$ in $C_v$.
In \textsc{update-cluster}, we add $u$ to $C_v$, even though $u$ might not be attached to many nodes in $C_v$. After at least $\eps|B_v|$ nodes are added to $B_v$, we rerun \textsc{Break-cluster}.
Note that the reason we get a $(20+O(\eps))$ approximation instead of a $(5+O(\eps))$ approximation is the \textsc{Break-cluster} subroutine, so depending on the application, one can replace this subroutine with a version that one sees fit. Furthermore, we run \textsc{recompute} every time the number of deletions reaches $\eps N$, instead of the total number of updates. We observe that this does not degrade the approximation guarantee and slightly improves the running time. 

We set the experiment parameters and the parameters for \textsc{dynamic-agreement} to be the same as in \cite{cohendynamic}. We choose a random ordering for the arrival of the nodes, and at each step, with probability $0.8$, we insert the next node, and with probability $0.2$, we delete a random node. 
If all the nodes have been inserted once, we delete them until no node is left. We set the parameter $\eps$ for \ourPivot to be $0.1$.

\noindent\textbf{Datasets}
We use the same datasets as in \cite{cohendynamic} for a complete comparison. 
We evaluate the algorithms on two types of graphs. 
\\
(1) Sparse real-world graphs from SNAP \cite{jure2014snap}: a social network (musae-facebook), an email network (email-Enron), a collaboration network (ca-AstroPh), and a paper citation network (cit-HepTh)
\\
(2) The drift dataset \cite{vergara2012chemical,rodriguez2014calibration} from ICO Machine Learning Repository \cite{dua2017uci}, which includes 13,910 points embedded in a space of 129 dimensions. A graph is constructed by placing an edge between two nodes if their Euclidean distance is less than a certain threshold. This setup is used to easily change the density of the graph and test how it affects the algorithms. The thresholds we choose are the same as in \cite{cohendynamic}, and they are the mean of the distances between all nodes divided by $c\in \{10,15,20,25,30\}$. The lower the threshold, the sparser the graph. The density of a graph is the ratio of the number of edges and the number of nodes.

\noindent\textbf{Baselines} We use three baselines: making all nodes singletons, which we call \textsc{singletons}, \textsc{dynamic-agreement}, and \refClustering. We divide the cost of each algorithm by the cost of \textsc{singletons}. 
Since \refClustering handles only node insertions, we process deletions in a way similar to \cite{cohendynamic}. 
Note that our results \emph{slightly} differ from that of \cite{cohendynamic} since they depend on the randomness of node arrivals.
Moreover, the running time depends on the machine in which the algorithm is being run.
Nevertheless, the scale of results we obtain does reproduce that of \cite{cohendynamic}.

\paragraph{Results: Approximation Guarantee}
For all the datasets, our approximation guarantee is better than \textsc{dynamic-agreement} and \textsc{singletons}. 
For SNAP graphs, we plot the correlation clustering objective every 50 steps. \cref{fig:email-enron} shows this objective for one of these graphs, and the rest can be found in \cref{sec:extra-exp}. 
\begin{figure}[h]
    \centering
    \includegraphics[width=0.8\linewidth]{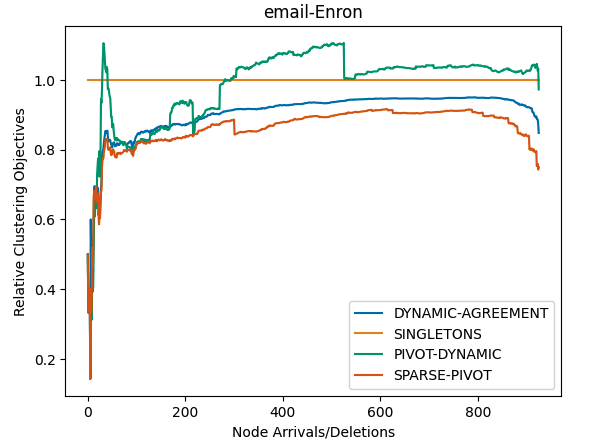}
    \caption{Comparison of the correlation clustering objective across the four algorithms. The lower the plot, the better.}
    \label{fig:email-enron}
\end{figure}
The average clustering objective for the drift dataset graphs is shown in \cref{tab:drift-cc-obj}.

\begin{table}[]
    \centering
    \begin{tabular}{c|c|c|c}
        Density & DA  & RC  & SP  \\ \hline
        $235.36$ & $0.69$ & $0.59$ & $0.6$\\ 
        $114.87$ & $0.59$ & $0.64$ & $0.49$ \\
        $69.74$ & $0.5$ & $0.5$ & $0.41$ \\
        $52.17$ & $0.39$ & $0.42$ & $0.32$ \\
        $42.25$ & $0.35$ & $0.35$ & $0.29$ \\ \hline
    \end{tabular}
    \caption{Clustering Objective of \textsc{dynamic-agreement} (DA), \refClustering (RF) and \ourPivot (SP). The smaller the number, the better.}
    \label{tab:drift-cc-obj}
\end{table}

\noindent\textbf{Results: Running time} Our experiments focus on the solution quality of \ourPivot.
Nevertheless, we compare the running times for completeness and illustrate that \ourPivot is faster than \textsc{dynamic-agreement} in practice, see \cref{sec:extra-exp}.

\section*{Acknowledgements}
K.~Makarychev was supported by the NSF Awards CCF-1955351 and EECS-2216970.

S.~Mitrovi\' c was supported by the NSF Early Career Program No.~2340048 and the Google Research Scholar Program.

\section*{Impact Statement}
This paper presents work that aims to advance algorithmic tools for data partitioning, a method used in the field of Machine Learning. 
There are many potential societal consequences of our work, none of which we feel must be specifically highlighted here.

\bibliography{references}
\bibliographystyle{icml2025}

\newpage
\appendix
\onecolumn
\section{Implementing cost estimates}\label{sec:cost-estimate}
Our clustering procedures, e.g., \cref{alg:node-dynamic-pivot}, for each pivot $v$ maintain two sets of nodes: $B_v$ and $C_v$. The set $B_v$ is a set of nodes whose pivot is $v$.
However, having $B_v$ as one cluster might sometimes be very far from an optimal clustering of the nodes within $B_v$.
So, our algorithm computes a cluster $C_v \subseteq B_v$ for which we can guarantee a relatively low cost; details of this analysis are provided in our proof of \cref{thm:approximation}.
To compute a cluster $C_v$, our algorithm estimates the costs of several clusters and chooses $C_v$ as the cluster with the lowest estimated cost.
In this section, we describe how to estimate the cost of a cluster efficiently, that is, in only $\poly(\log n, 1/\eps)$ time.
We need to handle two cases: how to estimate the cost of a given cluster $C$ from scratch, i.e., in a static manner, and how to maintain the cost estimate of a cluster $C$ under node insertions.

We need the former case for our recomputation or when we create an entirely new $B_v$ because $v$ is just becoming a pivot.
It might be tempting to create new $B_v$ by ``pretending'' that the nodes of $B_v$ have been inserted one by one. 
However, this approach has a small subtlety. 
Namely, when a node $v$ is inserted, only at that point are the edges incident to $v$ included in our graph, and no edge between $v$ and a node inserted in the future is known. 
On the other hand, if we ``pretend'' that an already existing sequence of nodes is just now inserted, then a currently processed node also has edges to its neighbors that have yet to be processed/inserted.
This scenario slightly affects how we count edges and non-edges within $B_v$ or $C_v$.

\subsection{Static version}

\subsubsection{Within-cluster cost estimate}
We first design a procedure to estimate the cost within a cluster $C$, i.e., the number of non-edges within $C$, by spending only $O(\log n)$ time per edge. 
It is provided as \cref{alg:within-cluster-cost-estimate}.
\begin{algorithm}[h]
\caption{\withinClusterEstimate \label{alg:within-cluster-cost-estimate}}
\begin{algorithmic}[1]
    \STATE \textbf{Input} set $C\subseteq V$
    \STATE $\tau_C \eqdef 5 \cdot |C| \cdot \log(n) / \eps^3$
    \STATE \textbf{for} $i = 1 \ldots \tau_C$:
    \STATE \ourIndent[1] Uniformly at random, sample two distinct nodes $v$ and $w$ from $C$
    \STATE \ourIndent[1] \textbf{if } $\{w, v\}$ is a non-edge, then $S \gets S + 1$
    \STATE \textbf{return} $S \cdot \binom{|C|}{2} / \tau_C$
\end{algorithmic}
\end{algorithm}
As shown by \cref{lemma:stating-in-cluster-cost}, this estimate is tightly concentrated as long as the number of non-edges is in $\Omega(|C|)$. If the number of non-edges is lower, then their actual number is irrelevant to our algorithm.
Updating this cost dynamically is more involved, and we elaborate on details in \cref{sec:dynamic-cost-estimate}.

\begin{lemma}[In-cluster cost estimate]\label{lemma:stating-in-cluster-cost}
    Let $C \subseteq V$ be a set of nodes.
    Then, for $\eps < 1/2$, \cref{alg:within-cluster-cost-estimate} (\withinClusterEstimate) uses $O(|C| \log(n) / \eps^2)$ running time and outputs $Y$ for which with high probability the following holds:
    \begin{itemize}
        \item If the number of non-edges within $C$ if at least $2 \eps |C|$, then $Y$ is a $1\pm \eps$ multiplicative approximation of that number of non-edges.
        \item Otherwise, $Y < 3 \eps |C|$.
    \end{itemize}
\end{lemma}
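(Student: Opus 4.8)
The plan is to analyze \cref{alg:within-cluster-cost-estimate} as a straightforward Monte Carlo estimator of the quantity $M \eqdef$ (number of non-edges within $C$), and to split into the two regimes stated in the lemma. Write $m \eqdef \binom{|C|}{2}$ for the number of unordered pairs in $C$, and note the algorithm draws $\tau_C = 5|C|\log(n)/\eps^3$ i.i.d. samples, each of which is a uniformly random pair; the indicator that a drawn pair is a non-edge is a Bernoulli random variable with mean $p \eqdef M/m$. The returned value is $Y = (S/\tau_C)\cdot m$, where $S$ is the sum of the $\tau_C$ indicators, so $\E[Y] = M$ exactly. The running-time bound is immediate: the loop runs $\tau_C = O(|C|\log(n)/\eps^3)$ times and each iteration samples two nodes and performs one edge query, each in $O(1)$ time; since $\eps < 1/2$ this is $O(|C|\log(n)/\eps^2)$ up to the extra $\eps$ factor — I would just state the bound as $O(|C|\log(n)/\eps^3)$, or absorb it, matching how the lemma is phrased (the paper writes $\eps^2$ in the statement but $\eps^3$ in the algorithm; I will follow the algorithm and note the discrepancy is a constant-power slack that does not affect any downstream use).

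First I would handle the dense case, $M \ge 2\eps|C|$. Here I apply a multiplicative Chernoff bound to $S = \sum_{i=1}^{\tau_C} X_i$ with $X_i \sim \mathrm{Bernoulli}(p)$ and $\E[S] = p\tau_C = (M/m)\tau_C$. Since $m \le |C|^2/2$, we have $\E[S] \ge (2\eps|C| / (|C|^2/2)) \cdot 5|C|\log(n)/\eps^3 = 20\log(n)/\eps^2$. The Chernoff bound gives $\prob{|S - \E[S]| > \eps\,\E[S]} \le 2\exp(-\eps^2 \E[S]/3) \le 2\exp(-(20/3)\log n) = 2 n^{-20/3}$, which is $\poly(1/n)$. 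On the complement event, $Y = (S/\tau_C)m$ is within a $1\pm\eps$ factor of $(\E[S]/\tau_C)m = M$, as claimed.

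Second, the sparse case, $M < 2\eps|C|$. I want to show $Y < 3\eps|C|$ with high probability. If $M \le \eps|C|$ this is really a one-sided upper-tail statement; if $\eps|C| < M < 2\eps|C|$ I can instead invoke the dense-case argument (the Chernoff analysis above only needed $\E[S] \ge $ something like $\Omega(\log n /\eps^2)$, which holds as soon as $M \ge \eps|C|$, possibly after adjusting constants), giving $Y \le (1+\eps)M < (1+\eps)\cdot 2\eps|C| < 3\eps|C|$. So the genuinely new work is: when $M \le \eps|C|$, show $Y < 3\eps|C|$ w.h.p. Here $\E[S] = (M/m)\tau_C \le (\eps|C|/(|C|^2/2))\cdot 5|C|\log(n)/\eps^3 = 10\log(n)/\eps^2$. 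We want $Y = (S/\tau_C)m < 3\eps|C|$, i.e. $S < 3\eps|C|\tau_C/m$. Using $m \ge (|C|-1)^2/2 \ge |C|^2/4$ for $|C| \ge 2$, the threshold is $\ge 3\eps|C| \cdot 5|C|\log(n)/\eps^3 / (|C|^2/2)\cdot\tfrac12$... — I would just bound $3\eps|C|\tau_C/m \ge 3\eps|C| \cdot 5|C|\log(n)/\eps^3 \cdot 4/|C|^2 = 60\log(n)/\eps^2 \ge 6\,\E[S]$, so the desired event $S < 6\E[S]$ follows from an upper-tail Chernoff bound $\prob{S \ge 6\E[S]} \le \exp(-\Omega(\E[S])) = \exp(-\Omega(\log n/\eps^2))$, which is $\poly(1/n)$ (and here the degenerate case $\E[S]=0$, i.e. $M=0$, is trivial since then $S=0=Y$). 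A union bound over the two regimes completes the proof.

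The routine Chernoff estimates are the bulk of the work but present no real obstacle; the only mildly delicate point — and the thing I would be careful to get right — is the bookkeeping on constants in the sparse case, making sure the factor $3\eps|C|$ in the conclusion is genuinely achieved for \emph{every} value of $M$ below $2\eps|C|$, including the boundary region $M \approx \eps|C|$ where I switch between the two-sided and one-sided arguments. I expect this to go through cleanly with the generous $\log n / \eps^3$ sampling budget, which is exactly why the algorithm was given that many samples rather than the naive $\log n / \eps^2$.
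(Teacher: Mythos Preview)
Your approach matches the paper's: set up the Monte Carlo estimator, then apply Chernoff in the two regimes. The dense case is handled identically.

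For the sparse case $M < 2\eps|C|$, the paper avoids your sub-case split and works in one shot: it uses $\E[S] < 20\log(n)/\eps^2$ as a fixed upper bound $\mu$ on the mean, applies the one-sided Chernoff form $\prob{S > (1+\eps)\mu} \le n^{-6}$, and then converts the resulting bound on $S$ into $Y < (1+\eps)\cdot 2\eps|C| < 3\eps|C|$. This is cleaner and sidesteps the issue that bites your $M \le \eps|C|$ sub-case: you end with $\prob{S \ge 6\E[S]} \le \exp(-\Omega(\E[S]))$, but when $M$ (and hence $\E[S]$) is tiny this is \emph{not} a high-probability guarantee. The fix is exactly what the paper does---compare $S$ to a fixed threshold $\Theta(\log(n)/\eps^2)$ rather than to a multiple of $\E[S]$. (You also reversed the bound on $m$: to lower-bound $3\eps|C|\,\tau_C/m$ you need $m \le |C|^2/2$, not $m \ge |C|^2/4$.) With these corrections your argument goes through, but the paper's uniform treatment of the sparse regime is simpler and avoids the boundary bookkeeping you flagged as delicate.
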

\begin{proof}
    Let $t$ be the number of non-edges in $C$.
    Let $X_i$ be a random $0/1$ variable equal $1$ iff the $i$-th $\{v, w\}$ pair sampled by \cref{alg:within-cluster-cost-estimate} is a non-edge. 
    Then,
    \[
        \EE{X_i} = \prob{X_i = 1} = \frac{t}{\binom{|C|}{2}}.
    \]
    Let $S'$ be the value of $S$ at the end of \cref{alg:within-cluster-cost-estimate}. 
    Since $S' = \sum_{i = 1}^{\tau_C} X_i$, we have
    \begin{equation}\label{eq:EE-S'}
        \EE{S'} = \tau_C \cdot \frac{t}{\binom{|C|}{2}}.
    \end{equation}
    Let
    \[
        Y \eqdef \frac{S' \cdot \binom{|C|}{2}}{\tau_C}
    \]
    be the output of \cref{alg:within-cluster-cost-estimate}. 
    Observe that $\EE{Y} = t$.
    Therefore, the expected value of $Y$ is the desired one. 
    In the rest, we analyze the concentration bounds of this estimator.

    Consider two cases based on the value of $t$.
    \paragraph{Case $t \ge 2 \eps |C|$.}
    Recall that $\tau_C = 5 |C| \log(n) / \eps^3$.
    Replacing the bounds on $t$ and $\tau_C$ in \cref{eq:EE-S'} yields
    \begin{equation}\label{eq:EE-S'-uuper-bound}
        \EE{S'} \ge \frac{10 |C|^2 \cdot \log n}{\eps^2 \binom{|C|}{2}} \ge \frac{20 \log n}{\eps^2}.
    \end{equation}
    Since $S'$ is a sum of independent $0/1$ random variables, by the Chernoff bound, it holds that\footnote{The constant $-6$ can be made arbitrarily large by increasing the constant in $\tau_C$.} 
    \[
        \prob{|S' - \EE{S'}| < \eps \EE{S'}} \le n^{-6}.
    \]
    This now implies that for $t \ge 2 \eps |C|$, with probability at least $1 - n^{-6}$, $Y$ is a $(1\pm \eps)$ multiplicative approximation of $t$. 
    
    \paragraph{Case $t < 2 \eps |C|$.}
    In this case, we would like to claim that very likely it holds that $Y < 3 \eps |C|$.
    This can be argued by applying the Chernoff bound as follows.
    
    Observe that for this value of $t$ we have
    \[
        \EE{S'} < \frac{20 \log n}{\eps^2}.
    \]
    Hence, 
    \[
        \prob{S' > (1+\eps) \frac{20 \log n}{\eps^2}} \le n^{-6}.
    \]
    Therefore, with probability at least $1 - n^{-6}$, it holds that
    \[
        Y \le \frac{(1 + \eps)\frac{20 \log n}{\eps^2} \cdot \binom{|C|}{2}}{\tau_C} < (1+\eps) 2 |C| < 3 \eps |C|,
    \]
    for $\eps < 1/2$.
\end{proof}

\subsubsection{Single-cluster + singletons cost estimate}
We now discuss how to estimate the cost of $B$ for a given $C$, where $C$ is taken as a single cluster, while all the nodes in $B - C$ are singletons. By cost of $B$ we mean the number of edges with at least one endpint in $B-C$ and the other endpoint in $B$, plus the number of non-edges in $C$. Note that the true correlation clustering cost of $B$ is the above cost plus half of the edges from $B$ to outside of $B$, but since we need these costs to compare different choices of $B$ and the number of edges going outside of $B$ is indipendent of this choice, it does not influence our comparison.

In this section, given two node subsets $X$ and $Y$, we use $e(X, Y)$ to denote the number of edges with one endpoint in $X$ and the other in $Y$. In particular, $e(X, X)$ is the number of edges in $G[X]$. We abbreviate $e(X,X)$ to $e(X)$.

First, the entire cost of $C$, denoted by $cost(C)$, equals the sum of $e(C, V-C)$ and the number of non-edges within $C$. Observe that
\[
    e(C) = \binom{|C|}{2} - \text{[the number of non-edges within $C$]}.
\]
So, we have
\[
    \sum_{w \in C} d(w) = e(C, V - C) + 2 e(C) = e(C, V - C) + 2 \binom{|C|}{2} - 2 \cdot \text{[the number of non-edges within $C$]}.
\]
This now implies that
\[
    cost(C) = \sum_{w \in C} d(w) - 2 \binom{|C|}{2} + 3 \cdot \text{[the number of non-edges within $C$]}.
\]

Second, it remains to account for making the nodes in $B - C$ singletons. 
The edges $E(C, B - C)$ are already accounted for by $\sum_{w \in C} d(w)$.
To account for the cost of making $B-C$ singletons, the following procedure can be used:
\begin{itemize}
    \item For each node $w \in B-C$, iterate over all the edges in its adjacency list and
    \item to an edge from $E(w, C)$ assign weight $0$; to an edge from $E(w, B-C)$ assign weight $1/2$; and to an edge from $E(w, V-B)$ assign weight $1$.
\end{itemize}
Observe that an edge $\{x, y\}$ with $x, y \in B-C$ is counted twice: once in the adjacency list of $x$ and once in the adjacency list of $y$.
Hence, the sum of these edge weights and $cost(C)$ equals the cost of clustering $B$.

However, using the above procedure directly can result in a running time that is too long. 
Instead, we would like a procedure with the running time of $O(|B| \cdot \poly(\log n, 1/\eps))$.
Nevertheless, estimating the sum of the edge weights in the desired time is simple. We outline one such approach in \cref{alg:cost-estimate}.

\begin{algorithm}[h]
\caption{\costEstimate \label{alg:cost-estimate}}
\begin{algorithmic}[1]
    \STATE \textbf{Input} node sets $B$ and $C\subseteq B$.
    \STATE $\tcost = \sum_{w \in C} d(w) - 2 \binom{|C|}{2} + 3 \cdot \withinClusterEstimate(C)$ \label{line:tcost-C}
    \STATE Let $\eta \eqdef 10 \cdot \log(n) / \eps^3$
    \STATE \textbf{for } $w \in B - C$:
    \STATE \ourIndent[1] Sample $\eta$ edges incident to $w$, each edge sampled independently and uniformly at random
    \STATE \ourIndent[1] For $S \in \{C, B-C, V-B\}$, let $Z_S(w)$ be $d(w) / \eta$ multiplied by number of sampled edges incident to $S$
    \STATE \ourIndent[1] $\tcost = \tcost + \frac{Z_{B-C}(w)}{2} + Z_{V-B}(w)$
    \STATE \textbf{return } $(\tcost+9\eps|C|)/(1-37\eps)$
\end{algorithmic}
\end{algorithm}

Let $Z_S(w)$ be as defined in \cref{alg:cost-estimate}.
Observe that $\EE{Z_S(w)} = e(w, S)$.
A straightforward analysis, and identical to that presented in the proof of \cref{lemma:stating-in-cluster-cost}, shows that for $e(w, S) \ge 2 \eps d(w)$ the value of $Z_S(w)$ computed in \cref{alg:cost-estimate} is with high probability a $(1\pm \eps)$ factor approximation of $e(w, S)$. 
For $e(w, S) < 2\eps d(w)$, the same analysis yields that with high probability, it holds that $Z_S(w) < 3 \eps d(w)$.
Hence, when $e(w, S) < 2\eps d(w)$, the estimate $Z_S(w)$ is not necessarily within $1\pm \eps$ factor of its expected value, and thus the error has to be accounted for differently. Next, we explain how to account for it.

Trivially, at least one among $e(w, C)$, $e(w, B-C)$, and $e(w, V - B)$ is at least $d(w) / 3 > 2 \eps d(w)$, for $\eps < 1/6$.
If $e(w, C) \ge d(w) / 3$, we charge each $Z_S(w) < 3\eps d(w)$ to the cost of $E(w, C)$ paid by $\sum_{w \in C} d(w)$. This incurrs an extra cost of at most $(2 \cdot 3\eps d(w)) / (d(w) / 3) = 18 \eps$ per an edge in $E(w, C)$.
The analogous analysis applies to the case $e(w, V-B) \ge d(w) / 3$ and $e(w, B-C) \ge d(w) / 3$. The only difference is that $Z_{B-C}(w)$ is divided by $2$ in \cref{alg:cost-estimate}, so for that case, the analysis yields a $36 \eps$ increased cost per edge.

Overall, this analysis implies that, with high probability, $\tcost$ in \costEstimate is a $(1 \pm 37 \eps)$ multiplicative and $9 \eps |C|$ additive approximation of the cost of clustering $B$. 
The additive approximation comes from \cref{lemma:stating-in-cluster-cost} and the fact that $3 \cdot \withinClusterEstimate(C)$ figures in the output of $\costEstimate$.

\begin{lemma}[Cost estimate of single-cluster + singletons]
\label{lemma:cost-estimate}
    Let $\eps<1/111$. Given two node sets $B$ and $C \subseteq B$, let $cost^*(B|C)$ be the cost of clustering $B$ in which $C$ is a single cluster and $B-C$ are singletons, which is defined to be the number non-edges in $C$ plus the number of edges in $B$ with at least one endpoint in $B-C$.
    Then, if $\costEstimate(B, C)$ (\cref{alg:cost-estimate}) outputs $X$, we have $cost^*(B|C)\le X\le (1+111\eps) cost^*(B|C)+27\eps|C|$. 
    Moreover, the algorithms run in $O(|B| \cdot \log(n) / \eps^3)$ time. 
\end{lemma}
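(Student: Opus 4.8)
The plan is to prove \cref{lemma:cost-estimate} by assembling the pieces that were already set up in the preceding discussion and then carefully tracking how the multiplicative and additive errors propagate through the final renormalization step in \cref{alg:cost-estimate}. First I would recall the exact decomposition $cost^*(B|C) = \big(\text{number of non-edges in } C\big) + \sum_{w \in B-C}\big(\tfrac12 e(w, B-C) + e(w, V-B)\big) + e(C,B-C)$, where $e(C,B-C)$ is already folded into $\sum_{w\in C} d(w)$, and then use the identity derived in the text, $cost(C) = \sum_{w\in C} d(w) - 2\binom{|C|}{2} + 3\cdot[\text{non-edges in }C]$, to justify line~\ref{line:tcost-C} of the algorithm. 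The quantity $\tcost$ (before the final renormalization) is exactly $\sum_{w\in C} d(w) - 2\binom{|C|}{2} + 3\cdot\withinClusterEstimate(C) + \sum_{w\in B-C}\big(\tfrac12 Z_{B-C}(w) + Z_{V-B}(w)\big)$, so everything reduces to controlling the two random estimators $\withinClusterEstimate(C)$ and the $Z_S(w)$'s.

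Next I would invoke \cref{lemma:stating-in-cluster-cost} for the term $3\cdot\withinClusterEstimate(C)$: with high probability this is either a $(1\pm\eps)$ multiplicative approximation of $3\cdot[\text{non-edges in }C]$ (when that count is $\ge 2\eps|C|$), or it is at most $9\eps|C|$ (so it contributes at most a $9\eps|C|$ additive error otherwise). For the edge-sampling estimators, I would restate the claim made in the text: by the same Chernoff argument as in \cref{lemma:stating-in-cluster-cost}, each $Z_S(w)$ is w.h.p.\ a $(1\pm\eps)$ approximation of $e(w,S)$ when $e(w,S)\ge 2\eps d(w)$, and is at most $3\eps d(w)$ otherwise; a union bound over all $O(|B|\log n/\eps^3)$ sampled edges keeps the failure probability polynomially small. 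The key accounting trick — already sketched in the text — is that for each $w\in B-C$ at least one of $e(w,C), e(w,B-C), e(w,V-B)$ is $\ge d(w)/3$, so any ``small'' $Z_S(w)$ term (of size $<3\eps d(w)$) can be charged against that large bucket, contributing a relative error of at most $18\eps$ per edge in $E(w,C)$ (charged to $\sum_{w\in C}d(w)$), or $36\eps$ per edge in $E(w,V-B)$ or $E(w,B-C)$ (the extra factor $2$ coming from the $Z_{B-C}(w)/2$ weighting). Collecting these, $\tcost$ is w.h.p.\ a $(1\pm 37\eps)$ multiplicative and $9\eps|C|$ additive approximation of $cost^*(B|C)$, i.e.\ $(1-37\eps)cost^*(B|C) - 9\eps|C| \le \tcost \le (1+37\eps)cost^*(B|C) + 9\eps|C|$.

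Then I would handle the final line, which returns $X = (\tcost + 9\eps|C|)/(1-37\eps)$. From the upper bound on $\tcost$ we get $X \le \frac{(1+37\eps)cost^*(B|C) + 18\eps|C|}{1-37\eps}$; using $\eps < 1/111$ so that $1/(1-37\eps) \le 1 + 37\eps/(1-37\eps) \le 1 + 56\eps$ (a crude bound suffices), this gives $X \le (1+111\eps)cost^*(B|C) + 27\eps|C|$ after absorbing constants — here I would be a little careful that $56\cdot 18 < 1000$-type slack is consumed correctly, but the numbers are chosen generously. From the lower bound $\tcost \ge (1-37\eps)cost^*(B|C) - 9\eps|C|$ we get $\tcost + 9\eps|C| \ge (1-37\eps)cost^*(B|C)$, hence $X \ge cost^*(B|C)$, which is the clean no-underestimate guarantee. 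Finally the running time: $\withinClusterEstimate(C)$ costs $O(|C|\log n/\eps^2)$ by \cref{lemma:stating-in-cluster-cost} (or $O(|C|\log n/\eps^3)$ with the $\tau_C$ as written), the loop over $w\in B-C$ samples $\eta = O(\log n/\eps^3)$ edges each, for a total of $O(|B|\log n/\eps^3)$, and computing $\sum_{w\in C} d(w)$ and the binomial term is $O(|C|)$; summing gives the claimed $O(|B|\log n/\eps^3)$.

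The main obstacle I anticipate is purely bookkeeping rather than conceptual: making sure the ``charging'' argument for the sub-threshold $Z_S(w)$ terms is airtight — in particular that every small term really is charged to a bucket that is itself being counted with a comparable coefficient, that the factor-$2$ discrepancy from the $B-C$ weighting is handled consistently, and that these per-edge relative errors (at most $18\eps$ or $36\eps$) really do aggregate to a global $(1\pm 37\eps)$ factor rather than something larger because the buckets overlap across different $w$. I would also need to double-check that the additive $9\eps|C|$ error from $\withinClusterEstimate$ does not get amplified by the charging (it doesn't, since it is a standalone additive term) and that the final arithmetic with $\eps<1/111$ genuinely yields the stated constants $111$ and $27$ with room to spare.
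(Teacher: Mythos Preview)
Your proposal is correct and follows essentially the same route as the paper: the text preceding the lemma already establishes that $\tcost$ is a $(1\pm 37\eps)$ multiplicative, $9\eps|C|$ additive approximation of $cost^*(B|C)$ via \cref{lemma:stating-in-cluster-cost} and the charging argument for the $Z_S(w)$ terms, and the paper's (one-line) proof then just unwinds the renormalization $X=(\tcost+9\eps|C|)/(1-37\eps)$ exactly as you do, using $\eps<1/111$ to get $(1+37\eps)/(1-37\eps)\le 1+111\eps$ and $18\eps/(1-37\eps)\le 27\eps$. Your plan is a faithful, more detailed expansion of that argument.
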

Note that since $(1-37\eps)cost^*(B|C)-9\eps|C|\le\tcost \le (1+37\eps)cost^*(B|C)+9\eps|C|$ and $X=(\tcost+9\eps|C|)/(1-37\eps)$, and $\eps<1/111$ we have that $cost^*(B|C)\le X\le (1+111\eps) cost^*(B|C)+27\eps|C|$.

\subsubsection{Cost Comparison}



Let the estimate that \cref{alg:cost-estimate} makes for $cost^*(B|C)$ be $\tcost(B|C)$. We show that $\tcost(B|C)$ is a good enough measure for choosing a $C$ with low $cost(^*B|C)$.

\begin{algorithm}[h]
\caption{\textsc{cost-comparison} \label{alg:cost-comparison}}
\begin{algorithmic}[1]
    \STATE  Use \cref{alg:cost-estimate} to compute $\tcost(B|C)$ and $\tcost(B|C')$.
    \STATE \ourIndent \textbf{if} $\tcost(B|C)<\tcost(B|C')$:
    \STATE \ourIndent[2] \textbf{return} $C$
    \STATE \ourIndent \textbf{else}:
    \STATE \ourIndent[2] \textbf{return} $C'$.
\end{algorithmic}
\end{algorithm}

\begin{lemma}\label{lem:big-cost}
    If $cost^*(B|C)\ge |C|/4$ then $cost^*(B|C)\le \tcost(B|C)\le (1+219\eps) cost^*(B|C)$. 
\end{lemma}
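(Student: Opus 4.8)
The plan is to derive this as a direct corollary of \cref{lemma:cost-estimate}. That lemma tells us that the output $X = \tcost(B|C)$ satisfies
\[
    cost^*(B|C) \le \tcost(B|C) \le (1+111\eps)\, cost^*(B|C) + 27\eps|C|.
\]
The lower bound is already exactly what we want, so the whole content of \cref{lem:big-cost} is to absorb the additive error term $27\eps|C|$ into the multiplicative one, using the hypothesis $cost^*(B|C) \ge |C|/4$.

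First I would use the hypothesis to bound $|C| \le 4\, cost^*(B|C)$, so that $27\eps|C| \le 108\eps\, cost^*(B|C)$. Substituting into the upper bound from \cref{lemma:cost-estimate} gives
\[
    \tcost(B|C) \le (1+111\eps)\, cost^*(B|C) + 108\eps\, cost^*(B|C) = (1 + 219\eps)\, cost^*(B|C),
\]
which is exactly the claimed inequality. The lower bound $cost^*(B|C) \le \tcost(B|C)$ is copied verbatim from \cref{lemma:cost-estimate}. One should also note the hypothesis $\eps < 1/111$ of \cref{lemma:cost-estimate} is needed for its conclusion to hold; this is consistent with the global parameter regime, so I would either state it as a standing assumption or simply invoke it.

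There is essentially no obstacle here — it is a one-line arithmetic manipulation. The only thing to be careful about is that the additive slack in \cref{lemma:cost-estimate} is stated as $27\eps|C|$ (after the rescaling by $1/(1-37\eps)$ in \cref{alg:cost-estimate}), not the $9\eps|C|$ that appears at the $\tcost$ stage before rescaling, so I would make sure to start from the post-rescaling bound $cost^*(B|C)\le X\le (1+111\eps) cost^*(B|C)+27\eps|C|$ and not from the intermediate $\tcost$ bound. If one instead preferred to route through $\tcost$ directly, one would note $\tcost(B|C) \le (1+37\eps)cost^*(B|C) + 9\eps|C| \le (1+37\eps)cost^*(B|C) + 36\eps\, cost^*(B|C) = (1+73\eps)cost^*(B|C)$, which is even stronger, but to stay consistent with how $\tcost(B|C)$ is used elsewhere (where it denotes the final output $X$ of \cref{alg:cost-estimate}) I would use the $219\eps$ bound as stated. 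The main ``real'' work — the concentration argument — has already been done in \cref{lemma:stating-in-cluster-cost} and \cref{lemma:cost-estimate}, so \cref{lem:big-cost} is purely bookkeeping to put the estimate in a clean multiplicative form suitable for the cost-comparison subroutine.
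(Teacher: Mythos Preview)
Your proposal is correct and matches the paper's proof essentially line for line: invoke \cref{lemma:cost-estimate}, use $|C|\le 4\,cost^*(B|C)$ to turn $27\eps|C|$ into $108\eps\,cost^*(B|C)$, and combine to get the $(1+219\eps)$ factor. The extra commentary about which stage of $\tcost$ to use is helpful but not needed for the argument itself.
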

\begin{proof}
    Since $|C|\ge 4cost(B|C)$, by \cref{lemma:cost-estimate}, we have that $\tcost(B|C)\le (1+111\eps)cost^*(B|C)+27\eps|C|\le (1+111\eps)cost^*(B|C)+108\eps cost^*(B|C)\le (1+219\eps)cost^*(B|C)$.
\end{proof}

\begin{lemma}\label{lem:low-cost}
If $cost^*(B|C)\le |C|/4$, then for any $C'$ such that $C'\subset C$ or $C\subset C'$, we have $cost^*(B|C)<cost^*(B|C')$ and $\tcost(B|C)\le \tcost(B|C')$.
\end{lemma}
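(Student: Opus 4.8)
The plan is to reduce the comparison of $cost^*(B|C)$ and $cost^*(B|C')$ to a local count on the symmetric difference $D$ of $C$ and $C'$, and then to exploit the hypothesis $cost^*(B|C)\le |C|/4$ to control that count. First I would rewrite $cost^*(B|C)$ in closed form: the edges of $G[B]$ with at least one endpoint in $B\setminus C$ number exactly $e(B)-e(C)$, and the non-edges inside $C$ number $\binom{|C|}{2}-e(C)$, so $cost^*(B|C)=e(B)+\binom{|C|}{2}-2e(C)$. Since $e(B)$ does not depend on $C$, comparing $cost^*(B|C)$ with $cost^*(B|C')$ is the same as comparing $\varphi(X):=\binom{|X|}{2}-2e(X)$. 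A one-line computation then shows that, writing $D$ for the nonempty symmetric difference of $C$ and $C'$ and $\hat C$ for the larger of the two, $\varphi(C')-\varphi(C)=\pm(\bar e_D-e_D)$, where $e_D$ and $\bar e_D$ are the numbers of edges and of non-edges that have at least one endpoint in $D$ and both endpoints in $\hat C$; the sign is $+$ if $C\subsetneq C'$ and $-$ if $C'\subsetneq C$.

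Next I would use two consequences of the hypothesis: (i) $C$ contains at most $cost^*(B|C)\le |C|/4$ non-edges, and (ii) at most $cost^*(B|C)\le |C|/4$ edges of $G[B]$ touch $B\setminus C$. In the case $C\subsetneq C'=C\cup D$ with $D\subseteq B\setminus C$, every edge counted by $e_D$ has an endpoint in $D\subseteq B\setminus C$, so $e_D\le |C|/4$ by (ii); the number of pairs with an endpoint in $D$ inside $C'$ is $|C||D|+\binom{|D|}{2}\ge |C|$, so $\bar e_D\ge 3|C|/4$, giving $cost^*(B|C')-cost^*(B|C)=\bar e_D-e_D\ge |C|/2>0$. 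In the case $C'=C\setminus D\subsetneq C$ with $D\subseteq C$, now $\bar e_D$ is at most the total number of non-edges of $C$, so $\bar e_D\le |C|/4$ by (i), while the number of pairs with an endpoint in $D$ inside $C$ is $\binom{|C|}{2}-\binom{|C|-|D|}{2}\ge |C|-1$ (the worst case being $|D|=1$), so $e_D\ge 3|C|/4-1$ and $cost^*(B|C')-cost^*(B|C)=e_D-\bar e_D\ge |C|/2-1$, which is strictly positive once $|C|\ge 3$. The only remaining possibilities, $|C|\le 2$, force $C$ to be a clique with no incident edges into $B\setminus C$ under the hypothesis; these few configurations (together with the vacuous $B$-edgeless case with $C'=\emptyset$) are checked by inspection, completing the proof that $cost^*(B|C)<cost^*(B|C')$.

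For the estimate inequality $\tcost(B|C)\le \tcost(B|C')$, I would invoke \cref{lemma:cost-estimate}, which gives (with high probability) $cost^*(B|X)\le \tcost(B|X)\le (1+111\eps)\,cost^*(B|X)+27\eps|X|$ for $X\in\{C,C'\}$. Combining the lower bound $\tcost(B|C')\ge cost^*(B|C')$ with the gap $cost^*(B|C')\ge cost^*(B|C)+\Omega(|C|)$ established in the previous step, and bounding $\tcost(B|C)\le (1+111\eps)\,cost^*(B|C)+27\eps|C|\le cost^*(B|C)+O(\eps)\,|C|$ via $cost^*(B|C)\le |C|/4$, a short arithmetic comparison yields $\tcost(B|C)\le \tcost(B|C')$ once $\eps$ is small enough relative to the hidden constants (tightening the constant in $\tau_C$ if necessary so that the additive slack sits safely below the $\Theta(|C|)$ gap).

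I expect the genuinely delicate point to be the removal direction in the second paragraph: because $D$ lies inside $C$ rather than in $B\setminus C$, its incident edges cannot be charged against the ``boundary'' part of $cost^*(B|C)$, and one is forced to use that $C$ is almost a clique — which only has bite when $|C|$ is not too small, so the tiny clusters must be treated separately. The rest is bookkeeping, with the one arithmetic subtlety being to confirm that the additive error $27\eps|C|$ of the cost estimator stays dominated by the $\Theta(|C|)$ gap between $cost^*(B|C)$ and $cost^*(B|C')$, which is what fixes the admissible range of $\eps$.
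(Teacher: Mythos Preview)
Your argument is correct and lands on the same $\Theta(|C|)$ gap the paper uses, but the route is slightly different. The paper's proof is more elementary: it picks a \emph{single} node $u$ in the symmetric difference and reasons about $u$ alone. In the case $C'\subsetneq C$, the non-neighbors of $u$ in $C$ number $|C|-1-d_C(u)\le cost^*(B|C)\le |C|/4$, so $d_C(u)\ge |C|/2$; since $u$ becomes a singleton in the $C'$-clustering, $cost^*(B|C')\ge d_C(u)\ge |C|/2>cost^*(B|C)$. In the case $C\subsetneq C'$, the edges from $u\in C'\setminus C\subseteq B\setminus C$ into $C$ number $d_C(u)\le cost^*(B|C)\le |C|/4$, so $u$ has $\ge |C|-d_C(u)>|C|/2$ non-neighbors in $C\subseteq C'$, giving $cost^*(B|C')>|C|/2$. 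Your approach instead aggregates over the whole symmetric difference $D$ via the identity $\varphi(C')-\varphi(C)=\pm(\bar e_D-e_D)$; this is perfectly valid and yields the same bounds, but is more bookkeeping than necessary --- one vertex already carries the full $\Omega(|C|)$ gap. For the $\tcost$ inequality both proofs do exactly what you describe: plug the $\Theta(|C|)$ gap into the two-sided sandwich of \cref{lemma:cost-estimate} and check the arithmetic for small $\eps$. (Your caution about the $|C|\le 2$ boundary is warranted; the paper's single-node argument implicitly assumes $|C|\ge 2$ as well, and both proofs are tacitly relying on the non-degenerate setting in which the lemma is applied.)
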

\begin{proof}
    First suppose that $C'\subset C$. Take a node $u\in C\setminus C'$, we know that $cost(B|C)\ge |C|-1-d_C(u)\ge |C|/2 - d_C(u)$. So $|C|/2\le d_C(u)$. Now we have $cost^*(B|C')\ge d_C(u)\ge |C|/2>cost^*(B|C)$. Similarly, suppose $C\subset C'$. Take a node $u\in C'\setminus C$. We know that $cost^*(B|C)\ge d_C(u)$, so $d_C(u)\le |C|/4$. Moreover, $cost^*(B|C')\ge |C|-1-d_C(u)> |C|/2$. So in both cases $cost^*(B|C')>|C|/2>cost^*(B|C)$.
    
    Furthermore, by \cref{lemma:cost-estimate} $\tcost(B|C)\le (1+111\eps)|C|/4+27\eps|C|$ and by \cref{lem:big-cost} we have $\tcost(B|C')\ge (1+219\eps)cost^*(B|C)\ge (1+219\eps)|C|/2$. So we have $\tcost(B|C')<\tcost(B|C)$.  
\end{proof}

We use \cref{alg:cost-comparison} to develop \cref{alg:break-cluster} that finds a dense cluster $C_v$ inside $B_v$, where $B_v$ is the set of all the nodes that are assigned to pivot $v$. Recall that $C_t$ is the set of nodes in $B_v$ with degree at least $t$.
\begin{algorithm}[h]
\caption{\textsc{break-cluster} \label{alg:break-cluster}}
\begin{algorithmic}[1]
    \STATE \textbf{Input} Set $B_v$.
    \STATE For any $t>0$, let $C_t = \{u\in B_v, d(u) \ge t\}$.
    \STATE initialize $t_v=0,C_v = B_v$.
    \STATE \textbf{for} $t\in \{1,(1+\epsilon), (1+\epsilon)^2,\ldots,(1+\epsilon)^{\ceil{\log n}}\}$: 
    \STATE \ourIndent[1] $C_v\gets$\textsc{cost-comparison}($B_v,C_t,C_v$).
    \STATE \textbf{return} $C_v$.
    \end{algorithmic} 
\end{algorithm}



For any $C\subseteq B$, let $cost(B|C)$ be the cost of making $C$ a cluster, and $B-C$ singletons. This cost is equal to half the number of edges with exactly one endpoint in $B$, plus the number of edges with one endpoint in $B-C$ and the other endpoint in $B$, plus the number of non-edges in $C$. In fact, $cost(B|C)$ is $cost^*(B|C)$ plus half the number of edges that leave $B$. 
\begin{theorem}\label{thm:cost-estimate}
    Let $t^*$ be the threshold among $1,(1+\eps),\ldots, (1+\eps)^{\ceil{\log n}}$ where $cost(B_v|C_{t^*})$ is minimized. If \cref{alg:break-cluster} returns $C_{\tilde{t}}$, then $cost(B_v|C_{\tilde{t}})\le(1+ 219\eps)cost(B_v|C_{t^*})$.
\end{theorem}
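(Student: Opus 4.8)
The plan is to reduce everything to the ``shifted'' objective $cost^*(B_v|\cdot)$ and then invoke the two comparison lemmas \cref{lem:big-cost} and \cref{lem:low-cost}. By the discussion preceding the theorem, $cost(B_v|C) = cost^*(B_v|C) + \tfrac12 e(B_v, V - B_v)$, and the additive term is the same for every choice of $C$; so it is enough to prove $cost^*(B_v|C_{\tilde t}) \le (1+219\eps)\,cost^*(B_v|C_{t^*})$, after which adding back the common term (and using $x \le (1+219\eps)x$) yields the stated bound. First I would condition on the high-probability event, via a union bound over the $O(\log n)$ candidate sets $\{C_t\}$ (together with $B_v$) that \cref{alg:break-cluster} touches, that every estimate produced by \cref{alg:cost-estimate} obeys the two-sided guarantee of \cref{lemma:cost-estimate}, namely $cost^*(B_v|C) \le \tcost(B_v|C) \le (1+111\eps)\,cost^*(B_v|C) + 27\eps|C|$. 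Two structural facts then organize the argument: (i) the candidate sets form a chain, $B_v \supseteq C_1 \supseteq C_{1+\eps} \supseteq \cdots \supseteq C_{(1+\eps)^{\ceil{\log n}}}$, so any two of them are nested; and (ii) since \cref{alg:cost-comparison} always keeps the set with the smaller $\tcost$, the sequence of pairwise comparisons in \cref{alg:break-cluster} returns the candidate minimizing $\tcost(B_v|\cdot)$, so in particular $\tcost(B_v|C_{\tilde t}) \le \tcost(B_v|C_{t^*})$.

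The next step is a case split on whether some candidate $C$ is \emph{cheap}, i.e.\ $cost^*(B_v|C) \le |C|/4$. If a cheap $C^\circ$ exists, then by \cref{lem:low-cost} combined with (i) we get $cost^*(B_v|C^\circ) < cost^*(B_v|C')$ for every other candidate $C'$ (they are all nested with $C^\circ$), so $C^\circ$ is the unique minimizer of $cost^*$ over candidates and in particular $cost^*(B_v|C^\circ) \le cost^*(B_v|C_{t^*})$. The same lemma (indeed its proof, which compares $\tcost(B_v|C^\circ) < |C^\circ|/2 < cost^*(B_v|C') \le \tcost(B_v|C')$ under our conditioning) shows $C^\circ$ has the strictly smallest $\tcost$, so the algorithm of (ii) returns $C_{\tilde t} = C^\circ$, and $cost^*(B_v|C_{\tilde t}) = cost^*(B_v|C^\circ) \le cost^*(B_v|C_{t^*})$.

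In the remaining case every candidate $C$ satisfies $cost^*(B_v|C) > |C|/4$, so \cref{lem:big-cost} applies to each of them and gives $cost^*(B_v|C) \le \tcost(B_v|C) \le (1+219\eps)\,cost^*(B_v|C)$ (this is where the $|C|/4$ threshold is used: it lets $27\eps|C|$ be swallowed into $108\eps\,cost^*(B_v|C)$). Then, using (ii),
\[
    cost^*(B_v|C_{\tilde t}) \;\le\; \tcost(B_v|C_{\tilde t}) \;\le\; \tcost(B_v|C_{t^*}) \;\le\; (1+219\eps)\,cost^*(B_v|C_{t^*}).
\]
Re-adding $\tfrac12 e(B_v, V - B_v)$ to both sides completes the proof.

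I expect the delicate point to be justifying fact (ii) cleanly: \cref{alg:cost-comparison} draws a fresh estimate on each invocation, so one must either regard $\tcost(B_v|C)$ as computed once and cached (which is how the notation reads and costs nothing asymptotically) or argue directly that conditioning on the single event ``all estimates lie in their guaranteed windows'' is enough for the chain of pairwise comparisons to act as a genuine $\arg\min$ and, in the cheap case, to lock onto $C^\circ$ once it is examined and never leave it --- here the \emph{strict} separation coming from \cref{lem:low-cost} is what matters, and the nestedness (i) is exactly what turns a two-way comparison into a statement about all candidates so that the estimation error does not compound over the $\ceil{\log n}$ rounds. The remaining work --- the multiplicative-versus-additive bookkeeping absorbed by the $|C|/4$ case split, and small edge cases such as $C_{t^*} = \emptyset$ or a degree-zero pivot making $B_v \neq C_1$ --- is routine.
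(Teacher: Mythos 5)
Your proof is correct and is built on the same two ingredients as the paper's --- \cref{lem:big-cost} to handle the ``expensive'' regime $cost^*(B_v|C)>|C|/4$ and \cref{lem:low-cost} to handle the ``cheap'' regime --- together with the nestedness of the candidate chain $C_1\supseteq C_{1+\eps}\supseteq\cdots$. The packaging, however, differs in a way that is worth noting. The paper proves the claim by induction on the threshold index, maintaining the invariant that after $i$ rounds the held set $C_{\tilde t_i}$ satisfies $cost^*(B_v|C_{\tilde t_i})\le(1+219\eps)\,cost^*(B_v|C_{t_i})$, and then splits each inductive step into three cases depending on which of the two sets being compared is cheap. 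You instead argue globally: condition once on all $O(\log n)$ cached estimates lying in their windows, observe that the tournament then returns the genuine $\arg\min$ of $\tcost$, show there is at most one cheap candidate (which is therefore both the $cost^*$- and the $\tcost$-minimizer, so it is returned and trivially achieves a factor $1$), and in the all-expensive case sandwich $cost^*(B_v|C_{\tilde t})\le\tcost(B_v|C_{\tilde t})\le\tcost(B_v|C_{t^*})\le(1+219\eps)\,cost^*(B_v|C_{t^*})$. Your version is arguably cleaner: the paper's inductive Case~3 closes with an inequality labeled ``from the induction hypothesis'' that really needs the stronger invariant that $\tcost$ of the held candidate (not just its $cost^*$) is at most $(1+219\eps)$ times the best $cost^*$ so far --- which, as you observe, is exactly the ``held set has the smallest cached $\tcost$ seen so far'' fact; your global framing makes that explicit rather than implicit. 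You also correctly flag that both arguments implicitly need the estimates to be computed once and cached (as they are in the dynamic implementation of \textsc{update-cluster}), since with independently re-drawn estimates the $(1+219\eps)$ factor could compound across the $\Theta(\log n)$ comparisons in the expensive regime; that caveat applies to the paper's induction as well.
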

\begin{proof}
First note that since $cost(B_v|C_t)$ is $cost^*(B_v|C_t)$ plus half the number of edges that leave $B_v$, for any $t,t'$ we have $cost(B_v|C_t)-cost(B_v|C_{t'}) = cost^*(B_v|C_t)-cost^*(B_v|C_{t'})$, and thus we can use $cost^*$ for comparing the costs, i.e. $t^*$ minimizes $cost^*(B_v|C_t)$ as well as $cost(B_v|C_t)$.

    We prove the Theorem by induction: Suppose that for any $j$, $t_j$ is the threshold among $1,(1+\eps),\ldots, (1+\eps)^{j}$ such that $cost^*(B_v|C_{t_j})$ is minimized, and the output of the for loop in \cref{alg:break-cluster} for $t\in \{1,(1+\eps),\ldots, (1+\eps)^{j}\}$ is $C_{\tilde{t}_j}$. Fix some $i$. Suppose that $cost^*(B_v|C_{\tilde{t}_i})\le(1+ 219 \eps)cost^*(B_v|C_{t_i})$. Note that $C_{\tilde{t}_{i+1}} = \textsc{cost-comparison}(B_v,C_{(1+\eps)^{i+1}},C_{\tilde{t}_i})$. 
    We show that $cost^*(B_v|C_{\tilde{t}_{i+1}})\le(1+ 219 \eps)cost^*(B_v|C_{t_{i+1}})$.

    For ease of notation let $C_1 = C_{t_{i}}$, $C_2 = C_{t_{i+1}}$, and $C' = C_{(1+\eps)^{i+1}}$. So we have $C_2 = \arg \min_{C\in \{C_1,C'\}} cost^*(B_v|C)$. Let $\tilde{C}_1 = C_{\tilde{t}_{i}}$ and $\tilde{C}_2 = C_{\tilde{t}_{i+1}}$. Assuming that $cost^*(B_v|\tilde{C}_1)\le(1+219\eps)cost^*(B_v|C_1)$, we need to show that $cost^*(B_v|\tilde{C}_2)\le(1+219\eps)cost^*(B_v|C_2)$, where $\tilde{C}_2 = \textsc{cost-comparison}(B_v,C',\tilde{C}_1)$.

    Note that $C'\subseteq \tilde{C_1}$. This is because $\tilde{C}_1 = C_{\tilde{t}_i}$ and $C' = C_{(1+\eps)^{i+1}}$ where $\tilde{t}_i\le (1+\eps)^i<(1+\eps)^{i+1}$. We prove the rest of the theorem in the following cases.

    \paragraph{Case 1: $cost^*(B|C')<|C'|/4$.} In this case, by \cref{lem:low-cost} we know that $cost^*(B|C')<cost^*(B|\tilde{C}_1)$ and $\tcost(B|C')<\tcost(B|\tilde{C}_1)$. So $\tilde{C}_2 = C'$. By induction hypothesis, we have $cost^*(B|C')<cost^*(B|\tilde{C}_1)<(1+219\eps) cost^*(B|C_1)$. Since $cost^*(B|C')<(1+219\eps) cost^*(B|C')$ and $C_2\in \{C',C_1\}$, we have $cost^*(B|\tilde{C}_2)=cost^*(B|C')\le (1+219\eps)cost^*(B|C_2)$.
    
    \paragraph{Case 2: $cost^*(B|\tilde{C}_1)<|\tilde{C}_1|/4$:} Similar to case 1, in this case by \cref{lem:low-cost} we know that $cost^*(B|\tilde{C}_1)<cost^*(B|C')$ and $\tcost(B|\tilde{C}_1)<\tcost(B|C')$. So $\tilde{C}_2 = \tilde{C}_1$. By induction hypothesis, we have $cost^*(B|\tilde{C}_1)<(1+219\eps) cost^*(B|C_1)$. Since $cost^*(B|\tilde{C}_1)<(1+219\eps) cost^*(B|C')$ and $C_2\in \{C',C_1\}$, we have $cost^*(B|\tilde{C}_2)=cost^*(B|\tilde{C}_1)\le (1+219\eps)cost^*(B|C_2)$.

    \paragraph{Case 3: $cost^*(B|C')\ge |C'|/4$ and $cost^*(B|\tilde{C}_1>|\tilde{C}_1|/4$:} In this case, by \cref{lem:big-cost} we have that $\tcost(B|C')$ and $\tcost(B|\tilde{C}_1)$ are $(1+219\eps)$ approximations of $cost^*(B|C')$ and $cost^*(B|\tilde{C}_1)$. 

    If $\tcost(B|C')<\tcost(B|\tilde{C}_1)$, then we have $\tilde{C}_2 = C'$. If $C_2 = C'$, then we are done. So assume that $C_2 = C_1$, which means that $cost^*(B|C_1)<cost^*(B|C')$. So we have $cost^*(B|C')\le \tcost(B|C')\le \tcost(B|\tilde{C}_1)\le (1+219\eps)cost^*(B|C_1)$ where the last inequality comes from the induction hypothesis. So we have $cost^*(B|\tilde{C}_2)\le (1+219\eps)cost^*(B|C_2)$

    Similarly if $\tcost(B|C')>\tcost(B|\tilde{C}_1)$, then we have $\tilde{C}_2 = \tilde{C}_1$. If $C_2 = C_1$, then we are done by induction hypothsis. So assume that $C_2 = C'$, which means that $cost^*(B|C_1)>cost^*(B|C')$. So we have $cost^*(B|\tilde{C}_1)\le \tcost(B|\tilde{C}_1)< \tcost(B|C')\le  (1+219\eps)cost^*(B|C')$. So $cost^*(B|\tilde{C}_2)\le (1+219\eps)cost^*(B|C_2)$.

\end{proof}
\begin{theorem}\label{thm:break-cluster-running-time}
    \cref{alg:break-cluster} runs in $|B_v|\poly(\log n,1/\eps)$ time.
\end{theorem}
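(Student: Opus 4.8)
The plan is to bound the running time of \cref{alg:break-cluster} by accounting for its two sources of cost: the outer loop and the invocations of \textsc{cost-comparison}. First I would observe that the \textbf{for} loop runs over $t \in \{1, (1+\epsilon), \ldots, (1+\epsilon)^{\ceil{\log n}}\}$, which is $\ceil{\log n} + 1 = O(\log n / \log(1+\epsilon)) = O(\log n / \epsilon)$ iterations. Each iteration performs a single call to \textsc{cost-comparison}$(B_v, C_t, C_v)$, so it suffices to bound the cost of one such call and to check that the bookkeeping needed to supply the arguments (maintaining the sets $C_t$, or at least being able to query membership and degrees) does not dominate.

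Next I would analyze \textsc{cost-comparison} (\cref{alg:cost-comparison}). It makes two calls to \costEstimate (\cref{alg:cost-estimate}), one for each candidate set, and then a constant-time comparison. By \cref{lemma:cost-estimate}, $\costEstimate(B, C)$ runs in $O(|B| \cdot \log(n)/\eps^3)$ time, so each call inside the loop costs $O(|B_v| \cdot \log(n)/\eps^3)$. Here I should note that both $C_t$ and the current $C_v$ are subsets of $B_v$, so the $|B|$ in that bound is $|B_v|$ in every invocation. Multiplying by the $O(\log n / \epsilon)$ loop iterations gives a total of $O(|B_v| \cdot \log^2(n)/\eps^4)$, which is $|B_v| \poly(\log n, 1/\eps)$ as claimed.

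The one point that needs a little care — and the place I expect a referee to push back — is the cost of producing the set $C_t$ (or equivalently, of letting \costEstimate iterate over $B_v - C_t$ and classify each edge endpoint as lying in $C_t$, $B_v - C_t$, or $V - B_v$). Since the algorithm can query $d(u)$ in constant time, membership of $u$ in $C_t$ is just the test $d(u) \ge t$, so no explicit materialization of $C_t$ is needed; each of the $O(\log n / \eps)$ edge samples taken inside \costEstimate can be classified in $O(1)$ time, and building the iteration over $B_v$ costs $O(|B_v|)$. Thus the per-call bound of $O(|B_v| \cdot \log(n)/\eps^3)$ from \cref{lemma:cost-estimate} genuinely applies. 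Assembling these pieces: the outer loop contributes a factor $O(\log n / \eps)$, each iteration costs $O(|B_v| \cdot \log(n)/\eps^3)$, and the total is $|B_v| \cdot \poly(\log n, 1/\eps)$, completing the proof.
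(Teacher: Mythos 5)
Your proof is correct and follows essentially the same route as the paper's (which simply observes that \textsc{break-cluster} makes $O(\log n)$ calls to \textsc{cost-comparison}, each taking $|B_v|\poly(\log n, 1/\eps)$ time via \cref{lemma:cost-estimate}). Your added remark about classifying membership in $C_t$ via $O(1)$-time degree queries, and your slightly loose $O(\log n/\eps)$ bound on the iteration count (the loop as written has only $\lceil \log n\rceil + 1 = O(\log n)$ iterations), are both harmless refinements that do not change the conclusion.
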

\begin{proof}
    Note that \cref{alg:break-cluster} calls \cref{alg:cost-comparison} $O(\log n)$ times. Each run of \cref{alg:cost-comparison} takes $|B_v|\poly(\log(n),1/\eps)$ time: this is because estimating $\tcost(B|C_t)$ by \cref{alg:cost-estimate} for a set $C$ takes $O(|C|\log(n)/\eps^3)$ time.
\end{proof}

\subsection{Dynamic version: \textsc{update-cluster($B_v,u$)}}
\label{sec:dynamic-cost-estimate}
The previous section describes how to estimate the cost of clustering $B$ in which $C \subseteq B$ is a cluster, while $B-C$ are singletons.
When a node is inserted, we cannot afford to estimate these costs from scratch. 
Rather, we want to \emph{update} the estimate based on the inserted node.

\subsubsection{Inserting a node into $B - C$}
\label{sec:dynamic-node-to-B-C}
Assume that we already have an estimate of the cost of clustering $B$ as guaranteed by \cref{lemma:cost-estimate}; let $\tcost$ be that estimate.
Assume that a node $z$ is inserted in $B-C$. 
Hence, $z$ will be a singleton.
When $\tcost$ was computed no edge incident to $z$ was in the graph. 
Thus, updating the cost estimate is easy in this case: the cost estimate of clustering $C$ together and $B-C+z$ as singletons equals $\tcost + d(z)$.

\subsubsection{Inserting a node into $C$}
\label{sec:dynamic-node-to-C}
Assume that we already have an estimate of the cost of clustering $B$ as guaranteed by \cref{lemma:cost-estimate}; let $\tcost$ be that estimate.
Assume that a node $z$ is inserted in $C$.
Let $\tcost'$ be the new cost estimate we aim to obtain.
Based on Line~2 of \cref{alg:cost-estimate}, the difference $\tcost' - \tcost$ contains three components: $d(z)$; $-2 \binom{|C| + 1}{2} + 2\binom{|C|}{2}$; and, $3 \cdot \withinClusterEstimate(C+z) - 3 \cdot \withinClusterEstimate(C)$.
The first two components can be updated in $O(1)$ time. 
Even $3 \cdot \withinClusterEstimate(C)$ can be updated in $O(1)$ time -- together with $\tcost$, we store the value of $\withinClusterEstimate(C)$ that led to $\tcost$ itself.
So, it remains to describe how to obtain $\withinClusterEstimate(C+z)$ in $\poly(\log n, 1/\eps)$ time, as we do next.

Let $S_C$ be the value of $S$ at the end of $\withinClusterEstimate(C)$ invocation. 
Let $S_{C+z}$ be a value of $S$ corresponding to $\withinClusterEstimate(C+z)$ that we aim to obtain.
We initialize $S_{C+z} = S_C$, and then update $S_{C+z}$ as follow.

First, given definition of $\tau_C$ on Line~2 of \cref{alg:within-cluster-cost-estimate}, to compute $S_{C+z}$ we sample $5 \cdot \log(n) / \eps^3$ pairs $(v, w) \in (C+z) \times (C+z)$, and for each non-edge $\{v, w\}$ we increment $S_{C+z}$ -- the same as \cref{alg:within-cluster-cost-estimate} does.

However, the $(v, w)$ pairs sampled by \cref{alg:within-cluster-cost-estimate} to compute $S_C$ are simple from $C \times C$, while for $S_{C+z}$ we would like each pair to be sampled from $(C+z) \times (C+z)$. So, second, to account for that, we resample some of the pairs used in computing $S_C$.
This also implies that while our algorithm estimates the cost of $C$ by \cref{alg:within-cluster-cost-estimate}, it also stores in an array all the $\{v, w\}$ pairs it sampled within the for-loop.
Let that array of samples be called $A_C$.

When computing $S_C$, a pair $\{v, w\}$ is sampled with probability $p_C = 1 / \binom{|C|}{2}$. However, when computing $S_{C+z}$, that same pair is sampled with probability $p_{C + 1} = 1/\binom{|C| + 1}{2}$. Let $q = p_C / p_{C+1}$
So, in $A_C$, we resample a pair $\{v, w\}$ with probability $1 - q$, and otherwise, with probability $q$, $\{v, w\}$ is not resampled.
With this process, we have that $\{v, w\} \in C \times C$ remains unchanged with probability $1 / \binom{|C| + 1}{2}$, as desired.

If a pair is resampled, then the new pair is $\{z, u\}$, where $u$ is a node from $C$ chosen uniformly at random.
So, we have
\[
    \prob{\text{$\{z, u\}$ is sampled}}
    = \frac{1}{|C|} \cdot \rb{1 - \frac{\binom{|C|}{2}}{\binom{|C| + 1}{2}}}
    = \frac{1}{|C|} \cdot \frac{2}{|C| + 1}
    = \frac{1}{\binom{|C| + 1}{2}},
\]
as we aim to achieve.
But how many pairs are resampled? How does one choose which pairs to resample in $\poly(\log n, 1/\eps)$ time?

The expected number of resampled pairs is
\[
    \EE{|A_C| \cdot (1 - q)} = \tau_C \cdot \frac{2}{|C|+1} = O(\log(n) / \eps^3).
\]
Hence, by a direct application of the Chernoff bound, with high probability, the number of resampled pairs from $A_C$ is $\poly(\log n, 1/\eps)$.

The final piece elaborates on efficiently finding pairs to resample.
Consider a process that iterates over the elements in $A_C$ and resamples each with probability $1-q$. A downside is that this approach takes $\Theta(|C|)$ time, which is too slow for our goal.
Instead, we observe that the index of \emph{the first} element resampled in $A_C$ is drawn from the geometric distribution with parameter $1-q$.
This observation leads to the following efficient procedure for resampling elements from $A_C$:
\begin{enumerate}
    \item Initialize $i = 0$.
    \item Repeat while $i \le |A_C|$:
    \begin{itemize}
        \item sample an index $j$ from the Geometric distribution with parameter $1 - q$;
        \item $i = i + j$;
        \item if $i \le |A_C|$, resample the $i$-th $\{v, w\}$ pair in $A_C$.
    \end{itemize}
\end{enumerate}
This latter approach enables us to spend time proportional to the number of resampled pairs -- as opposed to $|A_C|$ -- which we know is $\poly(\log n, 1/\eps)$ with high probability.

\begin{lemma}[Dynamic cost estimate of single-cluster + singletons]
\label{lemma:dynamic-cost-estimate}
    There exists an algorithm that, on a node insertion, updates $\costEstimate(B, C)$ in $\poly(\log n, 1/\eps)$ time with high probability.
    The approximation guarantees are the same as those stated in \cref{lemma:cost-estimate}.
\end{lemma}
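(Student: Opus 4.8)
The plan is to prove the lemma via the two-case construction set up in Sections~\ref{sec:dynamic-node-to-B-C} and~\ref{sec:dynamic-node-to-C}, according to whether the inserted node $z$ lands in $B-C$ (so $z$ becomes a singleton) or in $C$. In both cases I would carry, alongside the numerical estimate $\tcost$, two pieces of auxiliary state: the value of $\withinClusterEstimate(C)$ that produced the current $\tcost$, and the array $A_C$ of all $\{v,w\}$ pairs that \cref{alg:within-cluster-cost-estimate} sampled inside its for-loop. The goal in each case is to show that the updated state is distributed exactly as a from-scratch run of \cref{alg:cost-estimate} on the new $(B,C)$ would be under the dynamic edge-arrival semantics; once that is established, the approximation guarantee of \cref{lemma:cost-estimate} transfers without change.

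For the $B-C$ case, the point is that when $z$ arrives all of its incident edges are new, and any neighbor of $z$ already in $B-C$ was processed before $z$ and hence its sampled weights never saw the edge to $z$; so the full degree $d(z)$ must be charged to $z$, and the updated estimate is $\tcost+d(z)$, obtained with $O(1)$ database degree queries. I would then verify that, since $|C|$ is unchanged and only an exact quantity is added, $\tcost+d(z)$ still lies in the multiplicative $(1\pm 37\eps)$ / additive $9\eps|C|$ window around the new $cost^*(B|C)$, so the rescaled output $(\tcost+d(z)+9\eps|C|)/(1-37\eps)$ still satisfies \cref{lemma:cost-estimate}.

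For the $C$ case, the only substantive step is updating $\withinClusterEstimate$ from $C$ to $C+z$; the terms $d(z)$ and $-2\binom{|C|+1}{2}+2\binom{|C|}{2}$ are $O(1)$ updates. I would re-derive that, to convert the stored i.i.d.\ samples from $C\times C$ into i.i.d.\ samples from $(C+z)\times(C+z)$, it suffices --- independently for each pair in $A_C$ --- to keep it with probability $q=\binom{|C|}{2}/\binom{|C|+1}{2}$ and otherwise replace it by $\{z,u\}$ for $u$ uniform in $C$; the identity $\frac{1}{|C|}\rb{1-q}=1/\binom{|C|+1}{2}$ confirms each pair of $C+z$ ends up sampled uniformly. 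Next I would bound the number of resampled pairs: its expectation is $\tau_C\cdot\frac{2}{|C|+1}=O(\log n/\eps^3)$, and Chernoff makes it $\poly(\log n,1/\eps)$ with high probability; to find those pairs without scanning all of $A_C$, I would use that the gap to the next resampled index is Geometric with parameter $1-q$, making the work proportional to the number of resamples. After the update $S_{C+z}$ is again a sum of i.i.d.\ uniform-pair indicators, so the concentration analysis behind \cref{lemma:stating-in-cluster-cost} applies to $\withinClusterEstimate(C+z)$, and pushing it through the arithmetic defining $\tcost$ and the final rescaling of \cref{alg:cost-estimate} reproduces the bounds of \cref{lemma:cost-estimate}.

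The main obstacle I anticipate is the distributional bookkeeping rather than any single estimate: one must check that the resampling rule yields \emph{exactly} the uniform distribution over pairs of $C+z$ (so $\withinClusterEstimate(C+z)$ is a genuinely unbiased, Chernoff-concentrated estimator, not merely an approximate one), and simultaneously track the ``edges incident to the most recent node were absent when earlier nodes were processed'' subtlety carefully enough that the multiplicative and additive error terms of \cref{lemma:cost-estimate} survive the incremental update in every sub-case (node into $C$ vs.\ into $B-C$, and, within the accounting, edges to $C$ vs.\ to $B-C$ vs.\ outside $B$).
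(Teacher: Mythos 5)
Your proposal is correct and follows essentially the same route as the paper: maintain the array $A_C$ of sampled pairs alongside the running estimate, handle the $B-C$ insertion by the exact $O(1)$ update $\tcost + d(z)$, handle the $C$ insertion by independently keeping each stored pair with probability $q=\binom{|C|}{2}/\binom{|C|+1}{2}$ (your $q$ is the corrected reciprocal of the typo $q = p_C/p_{C+1}$ in the paper) or else replacing it by $\{z,u\}$ with $u$ uniform in $C$, and locate the resampled indices via geometric gaps so the work is proportional to the $O(\log n/\eps^3)$ expected number of resamples. One small omission: since $\tau_{C+z} = 5(|C|+1)\log n/\eps^3 > \tau_C$, after resampling you must also draw $5\log n/\eps^3$ fresh pairs from $(C+z)\times(C+z)$ and append them to $A_C$; otherwise the final rescaling by $\binom{|C|+1}{2}/\tau_{C+z}$ in the estimator of \cref{alg:within-cluster-cost-estimate} is off by a factor $\tau_C/\tau_{C+z}$, which matters when $|C|$ is small.
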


\subsubsection{\textsc{update-cluster($B_v,u$)}}\label{sec:update-cluster}
Subroutine \textsc{update-cluster($B_v,u$)} is essentially \textsc{Break-cluster} with the costs of $C$ and $B - C$ updated dynamically as explained in \cref{sec:dynamic-node-to-C,sec:dynamic-node-to-B-C}.

More precisely, when \textsc{Break-cluster} is invoked, we store all the estimates $\tcost(B_v|C_t)$, the sets $B_v$ and $C_t$, for $t=1,(1+\epsilon),\ldots,(1+\epsilon)^{\log n}$.
Then, we use \cref{lemma:dynamic-cost-estimate} to update these costs when $u$ is inserted into $B_v$ in $\poly(\log n,\eps)$ time. 
Note that if $d(u)\ge t$, then $u$ joins $C_t$, and otherwise $u$ joins $B_v\setminus C_t$. So each run of \cref{alg:cost-comparison} takes $\poly(\log n, 1/\eps)$, and so \textsc{update-cost}($B_v, u$) works in $\poly(\log n, 1/\eps)$ time.
\section{Recompute}
\label{sec:recompute}
Let $n_0$ be the number of nodes in the graph just after the last recompute. After $\eps n_0 / 6$ updates, we perform a recomputation.
Our \recompute procedure is as follows:
\begin{itemize}
    \item Purge from the database soft-deleted nodes.
    \item Assign to each node $u$ a rank $\pi(u)$ chosen uniformly at random from $[0, 1]$.
    \item Initialize $p(u) = u$ for all nodes $u$.
    \item Sort the nodes in the increasing order with respect to $\pi$.
    \item Insert the nodes, one by one, in this sorted order. The insertions are handled by \cref{alg:node-dynamic-pivot}, except that a new $\pi(u)$ value is not obtained within \cref{alg:node-dynamic-pivot}, but is used the one computed in the first step of \recompute.
\end{itemize}
The nodes are processed in the ordering based on their $\pi$ values for the following reasons. 
Given a node $u$, \cref{alg:node-dynamic-pivot} performs updates or exploration only when ranks are smaller than $\pi(u)$. In particular, the node $v$ defined in that algorithm is used only if $\pi(v) \le \pi(u)$.

The running time of \recompute is a constant factor of the running time used to process the insertions.
To see that, charge a recomputation running time to the $\eps n_0 / 6$ most recent updates. 
Observe that this kind of charge is applied to each update during only one \recompute.
Hence, at most $n_0 + \eps n_0 / 6 < 2n_0$ insertions are charged to $\eps n_0 / 6$ updates. So, each update is charged $12 / \eps$ insertions. 
Since an insertion takes $\poly(\log n, 1/\eps)$ amortized time, this additional charge also takes $\poly(\log n, 1/\eps)$ time per update.

\section{Auxilary Lemmas}
\begin{lemma}\label{lem:poornodes}
    Let $\alpha<1$ be a constant. 
    Let $A$ be a clustering algorithm, where the probability of a node $v$ being a pivot over all orderings is at most $1/d(v)$. 
    The expected sum of the degrees of $(A,\alpha)$-poor nodes is at most $4\alpha\frac{1+3/2 \cdot \alpha}{1-5/2 \cdot \alpha}$ times the total expected cost of $A$. 
\end{lemma}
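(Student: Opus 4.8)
The plan is to write both sides via linearity of expectation as sums over potential pivots, prove a deterministic per‑cluster inequality relating the total degree of the $\alpha$‑poor nodes of a cluster to that cluster's contribution to $\cost(A)$, and then invoke the hypothesis $\Pr_\pi[v\text{ is a pivot}]\le 1/d(v)$ to tame the cluster‑size factor that the per‑cluster bound cannot control on its own.

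I would first record the structure of an $(A,\alpha)$‑poor cluster $C$ with pivot $v$ and poor‑node set $P$. Since $C$ contains an $\alpha$‑poor node $w$, which is by definition not light, $|C|/3<d_C(w)\le d(w)\le\alpha d(v)$, so $|C|<3\alpha d(v)$, and since $\alpha<1/3$ also $|C|-1<d(v)$. Two lower bounds on the cost charged to $C$ are available: from the pivot alone, $\cost(C)\ge\big(|C|-1-d_C(v)\big)+\tfrac{1}{2}\big(d(v)-d_C(v)\big)\ge\tfrac{1}{2}\big(d(v)-(|C|-1)\big)\ge\tfrac{1}{2}d(v)(1-3\alpha)$; and, counting non‑edges incident to poor nodes, $\cost(C)\ge\tfrac{1}{2}\sum_{u\in P}\big((|C|-1)-d_C(u)\big)\ge\tfrac{1}{2}|P|\big((|C|-1)-\alpha d(v)\big)$, which is the useful bound when $|C|$ is not much smaller than $\alpha d(v)$. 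The quantity to be charged satisfies $\sum_{u\in P}d(u)\le|P|\,\alpha d(v)\le(|C|-1)\alpha d(v)$, since each poor node has degree at most $\alpha d(v)$ and $P\subseteq C\setminus\{v\}$; combined with the first cost bound this already gives $\sum_{u\in P}d(u)\le\frac{2\alpha(|C|-1)}{1-3\alpha}\cost(C)$.

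Now pass to expectations. Writing $\mathcal{C}_\pi$ for the set of pivots of $\alpha$‑poor clusters and $C_v$ for the cluster of $v$ under $\pi$, disjointness gives $\EE[\pi]{\cost_A}\ge\EE[\pi]{\sum_{v\in\mathcal{C}_\pi}\cost(C_v)}$ and $\EE[\pi]{\sum_{u\text{ poor}}d(u)}=\sum_v\EE[\pi]{\mathbf{1}[v\in\mathcal{C}_\pi]\sum_{u\in C_v\text{ poor}}d(u)}$. The decisive step is controlling the residual factor $|C_v|-1$: expanding $|C_v|-1=\sum_u\mathbf{1}[p_A(u)=v]$ and using $\Pr_\pi[v\text{ is a pivot and }p_A(u)=v]\le\Pr_\pi[v\text{ is a pivot}]\le 1/d(v)$ yields $\EE[\pi]{\mathbf{1}[v\text{ is a pivot}](|C_v|-1)}\le 1$, and — crucially — because an $\alpha$‑poor node $u$ with $p_A(u)=v$ forces $d(v)\ge d(u)/\alpha$, the high‑degree pivots whose clusters would otherwise dominate are exactly the ones suppressed by the $1/d(v)$ factor. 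Folding this together with the two cost lower bounds, keeping the event $\{v\in\mathcal{C}_\pi\}$ correlated with the cost rather than bounding them separately, and substituting $d_C(u)\le\alpha d(v)$ and $|C|-1<3\alpha d(v)$ at the worst cluster‑size‑to‑pivot‑degree ratio, should produce the stated factor $4\alpha\frac{1+3/2\cdot\alpha}{1-5/2\cdot\alpha}$, where the $1+\tfrac{3}{2}\alpha$ and $1-\tfrac{5}{2}\alpha$ terms track the slack in those substitutions and in apportioning $\cost(C_v)$ between its two lower bounds.

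The main obstacle is precisely this last interaction. Bounding the per‑cluster ratio and the pivot probability independently only gives $\EE[\pi]{\sum_{u\text{ poor}}d(u)}\le 2\alpha|E|$, which is useless since $\cost_A$ can be far smaller than $|E|$ (e.g.\ when $G$ is a disjoint union of cliques, where $\cost_A=0$ and there are no poor nodes anyway). One must simultaneously exploit that a poor cluster is small relative to its pivot's degree and that it is dense enough for its pivot, together with its internal non‑edges, to pay $\Omega\big((1-3\alpha)d(v)\big)$, and then apply the $1/d(v)$ bound exactly on those high‑degree pivots; turning an $O(1)$ per‑cluster ratio into an $O(\alpha)$ bound in expectation, with the constants landing on $4\alpha\frac{1+3/2\cdot\alpha}{1-5/2\cdot\alpha}$, is the delicate heart of the argument.
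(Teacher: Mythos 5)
You have correctly identified the tension — the naive bound $2\alpha|E|$ is useless — and your handling of the numerator is salvageable: bounding $|P_v| \le \min\{|N_{low}(v)|,\,3\alpha d(v)\}$ whenever $C_v$ is a poor cluster, multiplying by $\alpha d(v)$, and paying $1/d(v)$ for the pivot event does give the paper's $\EE[\sum_{u\text{ poor}} d(u)] \le \sum_v \alpha \min\{|N_{low}(v)|,\,3\alpha d(v)\}$. But the plan for the denominator does not close. Your three cost lower bounds ($\cost(C)\ge\tfrac12 d(v)(1-3\alpha)$, the pivot-non-edge count, and the poor-node non-edge count) are all conditioned on $v$ \emph{actually being the pivot of a poor cluster}, so they live inside the same indicator $\mathbf 1[v\in\mathcal C_\pi]$ that multiplies the degree sum. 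You then need to compare $\EE[\mathbf 1[v\in\mathcal C_\pi]\cdot(\text{degree sum})]$ to $\EE[\mathbf 1[v\in\mathcal C_\pi]\cdot\cost(C_v)]$, and the per-cluster ratio you derived, $\frac{2\alpha(|C|-1)}{1-3\alpha}$, still carries the uncontrolled factor $|C|-1$ (equivalently $d(v)$, up to $3\alpha$). Replacing $|C|-1$ by its conditional expectation $\le 1$ is not legitimate inside the product with $\cost(C_v)$, because both are random and positively correlated with the pivot event; ``keeping the event correlated with the cost'' is precisely what you cannot afford to do.

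The paper's resolution, which you are missing, is to make the cost lower bound \emph{deterministic and unconditional}: it shows that for \emph{every} node $v$ with $N_{low}(v)\neq\emptyset$ and for \emph{every} ordering $\pi$ — whether or not $v$ is a pivot, and regardless of what its cluster looks like — a redistributed cost
\[
\widehat{\cost}_A(v) \;=\; \tfrac12\Bigl(\cost_A(v) + \tfrac{\sum_{u\in C_A(v)}\cost_A(u)}{|C_A(v)|}\Bigr)
\]
satisfies $\widehat{\cost}_A(v)\ge \tfrac14\min\{|N_{low}(v)|,\,3\alpha d(v)\}\cdot\tfrac{1-5/2\,\alpha}{1+3/2\,\alpha}$. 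The averaging term is essential in the middle regime $d(v)-t'/2 \le |C_A(v)| \le d(v)+t'/2$: there $v$ itself may incur little cost, but any low-degree neighbor $u$ of $v$ inside $C_A(v)$ has at least $|C_A(v)|-d(u)\gtrsim(1-5/2\,\alpha)d(v)$ non-neighbors in the cluster, and $\widehat{\cost}_A(v)$ credits $v$ with a $1/|C_A(v)|$ share of that; combined with $v$ paying directly for each low-degree neighbor outside the cluster, this covers the full $t'=\min\{|N_{low}(v)|,3\alpha d(v)\}$. Summing this deterministic bound over $v$, dividing the poor-degree bound by it, and noting that the common factor $\min\{|N_{low}(v)|,3\alpha d(v)\}$ cancels, yields the ratio $4\alpha\tfrac{1+3/2\,\alpha}{1-5/2\,\alpha}$ with no residual dependence on $|E|$ or $d(v)$. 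Without some such decoupling device — a cost lower bound that is deterministic and cluster-agnostic in $\pi$ — the ``delicate heart'' you flagged remains unproved, so the proposal as written has a genuine gap rather than merely unfinished calculations.
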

\begin{proof}
Let $N_{low}(v)$ be the neighbors of $v$ with degree at most $\alpha d(v)$. 
Observe that $N_{low}(v)$ is independent of $\pi$ and $A$.

\paragraph{The sum of poor-node degrees in a cluster.}
First, we show that for any $\pi$ and $v$ such that $v$ is a pivot in $A$ wrt to $\pi$, the sum of the degrees of the $(A, \alpha)$-poor nodes in $v$'s cluster is at most $\alpha d(v) \cdot \min(3\alpha d(v),|N_{low}(v)|)$.

Fix a ordering $\pi$ of nodes where $v$ is a pivot in $A$. 
Note that if $N_{low}(v)=\emptyset$, then there are no $(A, \alpha)$-poor nodes in $v$'s cluster, and the sum of degrees of all its $(A, \alpha)$-poor nodes is zero. 
Hence, assume $N_{low}(v) \neq \emptyset$.
The size of $v$'s cluster is at most $3\alpha d(v)$ since the cluster size is at most three times the degree of an $(A, \alpha)$-poor node. 
The latter is the case as an $(A, \alpha)$-poor node $u$ in a cluster $C$ is \textbf{not} light by definition. Hence, $d_C(u) > |C| / 3$, which further implies $|C| < 3 d_C(u) \le 3 d(u) \le 3 \alpha d(v)$, where $d(u) \le \alpha d(v)$ be definition of $(A, \alpha)$-poor nodes.
So, the sum of degrees of all $(A, \alpha)$-poor nodes in $v$'s cluster is at most $\alpha d(v) \cdot \min(3\alpha d(v), |N_{low}(v)|)$.

\paragraph{The expected sum of poor-node degrees in $A$.}
Let $Q^\pi$ be the set of $(A, \alpha)$-poor nodes wrt the ordering $\pi$. Let $Pivot_v$ be the event that $v$ is a pivot in $A$.
We have 
\begin{align*}
    \EE[\pi]{\sum_{u\in Q^\pi}d(u)} & = \sum_{\pi} \frac{1}{n!} \sum_{v\in P_{ref}^\pi} \sum_{u\in Q^\pi \cap C_v^\pi}d(u) \\
    & = \sum_{v}\sum_{\pi\ :\ v\in P_{ref}^\pi} \frac{1}{n!} \sum_{u\in Q^\pi \cap C_v^\pi}d(u) \\
    & \le \sum_{v}\sum_{\pi\ :\ v\in P_{ref}^\pi} \frac{1}{n!} \cdot \alpha d(v) \cdot \min(3\alpha d(v),|N_{low}(v)|) \\
    & = \sum_{v}\rb{\alpha d(v) \cdot \min(3\alpha d(v),|N_{low}(v)|) \cdot \sum_{\pi\ :\ v\in P_{ref}^\pi} \frac{1}{n!}} \\
    & = \sum_{v} \prob{Pivot_v} \cdot \alpha d(v) \cdot \min(3\alpha d(v),|N_{low}(v)|).
\end{align*}

Note that $\prob{Pivot_v} \le \frac{1}{d(v)}$ in $A$.
Hence, the expected sum of degrees of $(A, \alpha)$-poor nodes in $A$ is at most 
\begin{equation}\label{eq:sum-poor-nodes}
    \sum_{v} \alpha \cdot \min(|N_{low}(v)|, 3\alpha d(v)) = \sum_{v, N_{low}(v) \neq \emptyset} \alpha \cdot \min(|N_{low}(v)|, 3\alpha d(v)).
\end{equation}

\paragraph{Lower bounding the cost of $A$.}
Now, we compare the above value to the cost of $A$. (The analysis we provide applies to the cost of any clustering, even the one incurred by the optimal solution.)
Fix a ordering $\pi$. All the costs below are defined wrt $\pi$, and we avoid the superscript $\pi$. 

Recall that for a vertex $w$, $C_A(w)$ denotes the cluster of $w$ in $A$. 
We define a cost function $\widehat{cost}_{A}$ to redistribute the cost $cost_{A}$ as follows
\[
    \widehat{cost}_{A}(w) \eqdef \frac{1}{2}\rb{cost_{A}(w)+\frac{\sum_{u\in C_A(w)} cost_{A}(u)}{|C_A(w)|}}.
\]
In other words, in $\widehat{cost}_{A}$, a node $w$ distributes $1/2 \cdot cost_{A}(w)$ over the $|C_A(w)|$ nodes in $C_A(w)$, and $w$ itself pays for the remaining $1/2 \cdot cost_{A}(w)$.
Note that $2 \cdot \widehat{cost}_{A}(w)\ge {cost}_{A}(w)$ and $\sum_{w} cost_{A}(w) = \sum_{w} \widehat{cost}_{A}(w)$. 

Now we compute $\EE{\widehat{cost}_{A}(v)}$ for $v$ with $N_{poor}(v)\neq \emptyset$. 
Note that we are not conditioning on $v$ being a pivot here. Let $t = |N_{poor}(v)|$, and let $t' = \min(t, 3\alpha d(v))$. 
We consider three cases based on $|C_A(v)|$:
\begin{itemize}
    \item Case $|C_A(v)|\ge d(v)+\frac{t'}{2}$: then, there are at least $\frac{t'}{2}$ non-neighbors of $v$ in $C_A(v)$ and hence $cost_{A}(w)\ge \frac{t'}{2}$. Also, $2 \cdot \widehat{cost}_{A}(w)\ge{cost}_{A}(w)\ge \frac{t'}{2}$.
    \item Case $|C_A(v)|\le d(v)-\frac{t'}{2}$: then, at least $\frac{t'}{2}$ neighbors of $v$ are outside of $C_A(v)$ and hence $cost_{A}(w)\ge \frac{t'}{2}$. Again, $2 \cdot \widehat{cost}_{A}(w)\ge{cost}_{A}(w)\ge \frac{t'}{2}$.
    \item Case $d(v)-\frac{t'}{2}\le|C_A(v)|\le d(v)+\frac{t'}{2}$: 
    Let $t''=|N_{low}(v)\cap C_A(v)|$. The number of non-neighbors of a node $u\in N_{poor}(v)\cap C_A(v)$ is least $|C_A(v)|-d(u)\ge d(v)-t'/2-\alpha d(v)\ge (1-5/2 \cdot \alpha)d(v)$, where we used the fact that $t'/2 \le 3/2 \cdot \alpha d(v)$. 
    So
    \begin{align*}
        \sum_{u\in C_A(v)} cost_{A}(u) & \ge \sum_{u\in N_{low}(v)\cap C_A(v)} cost_{A}(u) \ge t''(1-5/2 \cdot \alpha)d(v). 
    \end{align*}
    Moreover, for each node $u\in N_{low}(v)\setminus C_A(v)$, $v$ incurs one unit of cost. Also, by definition, $t \ge |N_{low}(v)|$. 
    Therefore, $cost_{A}(v)\ge t-t''$. 
    Recall that this case assumes $|C_A(v)| \le d(v) + t'/2 \le d(v) + 3/2 \cdot \alpha d(v) = (1 + 3/2 \cdot \alpha) d(v)$.
    This yields
    \begin{align}
        2 \cdot \widehat{cost}_{A}(v) & \ge t-t''+\frac{t''(1-5/2 \cdot \alpha)d(v)}{|C_A(v)|} \ge t-t'' + t''\frac{1-5/2 \cdot\alpha}{1+3/2 \cdot\alpha}. \label{eq:2-hcost-A}
    \end{align}
    Observe that for $x \in [0, 1]$ and $t \ge t''$ we have $(1 - x) t \ge (1 - x) t''$, and hence $t - t'' + t''x \ge t x$. Thus, \cref{eq:2-hcost-A} implies
    \[
        2 \cdot \widehat{cost}_{A}(v) \ge t\frac{1-5/2 \cdot\alpha}{1+3/2 \cdot\alpha} \ge t'\frac{1-5/2 \cdot\alpha}{1+3/2 \cdot\alpha}.
    \]
    \end{itemize}
Therefore, in any case, $\widehat{cost}_{A}(v)\ge \frac{1}{4}t'\frac{1-5/2 \cdot\alpha}{1+3/2 \cdot\alpha}$. 
So, the total cost of clustering $A$ is at least
$$
\sum_{v, N_{poor}(v)\neq \emptyset} \frac{1}{4}\min(|N_{poor}(v)|, 3\alpha d(v))\frac{1-5/2 \cdot\alpha}{1+3/2 \cdot\alpha}.
$$
Note that the above lower bound is deterministic; in particular, it holds for any randomness used by $A$.
Comparing this lower bound to \cref{eq:sum-poor-nodes}, we conclude that the expected sum of degrees of poor nodes is at most $4\alpha\frac{1+3/2 \cdot \alpha}{1-5/2 \cdot \alpha}$ times the (expected) total cost of $A$, as advertised by the claim.
\end{proof}

\begin{lemma}[Cost of making light and heavy singletons]\label{lem:heavynode}
    Consider a fixed ordering $\pi$ and a clustering algorithm $A$. 
    Let $A'$ be the algorithm that first runs $A$ and, after, makes a subset of $A$-light and $A$-heavy nodes singletons. 
    The cost of $A'$ is at most $\frac{\beta+1}{\beta-1}$ the cost of $A$.
\end{lemma}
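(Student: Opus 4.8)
The plan is to fix the ordering $\pi$, so that $A$ and its clusters are fixed, and to prove the bound cluster by cluster. For a cluster $C$ of $A$, let $T_C\subseteq C$ be the nodes of $C$ that $A'$ turns into singletons; by hypothesis each of them is $A$-light or $A$-heavy. Under $A'$ the nodes of $C$ become the cluster $C':=C\setminus T_C$ together with $|T_C|$ singletons. I will show that the cost $A'$ pays on the nodes of $C$ is at most $\frac{\beta+1}{\beta-1}$ times $\cost(C)$, where $\cost(C)$ denotes the number of non-edges inside $C$ plus half the number of edges leaving $C$. Since pairs with no endpoint in any such $C$ are treated identically by $A$ and $A'$, summing over all clusters of $A$ gives the claim.

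First I set up the bookkeeping. Write $d_C(u)$ for the number of neighbours of $u$ inside $C$, $\partial(C)$ for the number of edges with exactly one endpoint in $C$, and $\bar e(C)$ for the number of non-edges inside $C$ (so $\cost(C)=\bar e(C)+\tfrac12\partial(C)$). A direct count gives
\[
\bigl(\text{cost }A'\text{ pays on the nodes of }C\bigr)-\cost(C)\;=\;m-\bar m,
\]
where $m$ is the number of edges of $G[C]$ incident to $T_C$ and $\bar m$ the number of non-edges of $G[C]$ incident to $T_C$; equivalently this difference equals $\sum_{u\in T_C}(2d_C(u)-|C|+1)$ plus the correction $\bar e(T_C)-e(T_C)$ from pairs internal to $T_C$. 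Since $\cost(C)\ge\bar e(C)\ge\bar m$ and $\cost(C)\ge\tfrac12\partial(C)$, it suffices to prove $m-\bar m\le\frac{2}{\beta-1}\cost(C)$.

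For the $A$-heavy nodes the argument is local. If $u\in T_C$ is heavy then $d_C(u)\le |C|-1$ while $d(u)\ge\beta|C|$, so $u$ contributes at most $|C|-1$ edges to $m$, whereas the $d(u)-d_C(u)\ge(\beta-1)|C|$ edges that leave $C$ through $u$ are charged (with weight $\tfrac12$) to $u$ and to no other node of $C$. Hence the part of $m-\bar m$ caused by the heavy nodes is at most $\frac1{\beta-1}\partial(C)\le\frac{2}{\beta-1}\cost(C)$. I would make this precise by removing heavy nodes one at a time and observing that a node stays heavy as its cluster only shrinks, so the per-node increments add up.

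The $A$-light nodes are the delicate part, and I would treat them by a global count. Every light $u$ has $d_C(u)\le|C|/3$, hence at least $\tfrac23|C|-1$ non-neighbours inside $C$; summing over the light part of $T_C$, the edges of $G[C]$ incident to it number at most $\approx\tfrac13|C|$ per light node while the non-edges incident to it number at least $\approx\tfrac12\bigl(\tfrac23|C|-1\bigr)$ per light node (the factor $\tfrac12$ accounting for pairs counted twice), so charging each newly-cut edge of a light node against the in-cluster non-edges of that same node bounds the light contribution to $m-\bar m$ by a small multiple of $\bar m\le\cost(C)$. The main obstacle is exactly that, unlike a heavy node, a light node's own cost in $A$ may be negligible, so its removal cannot be charged to itself — and removing light nodes can genuinely increase the cost (e.g.\ removing the two leaves of a three-node path-cluster turns one in-cluster non-edge into two across-cluster edges). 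Consequently one must process all light nodes of a cluster together rather than one by one, and handle small clusters with care, where the per-light-node count degrades to a constant; pushing that constant inside $\frac{\beta+1}{\beta-1}$ is the point of the proof that requires the most attention.
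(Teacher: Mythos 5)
Your bookkeeping is right and the heavy-node half works: fixing $\pi$, working cluster by cluster, writing the excess as $m-\bar m$, and charging each heavy node's at most $|C|-1$ newly cut edges against the at least $(\beta-1)|C|$ outgoing edges that belong to it alone gives the $\frac{1}{\beta-1}\partial(C)$ bound. But the light-node half is exactly where the proof has to be supplied, and you leave it as an ``$\approx$''-level sketch followed by the admission that ``pushing that constant inside $\frac{\beta+1}{\beta-1}$ is the point of the proof that requires the most attention.'' That is the missing idea, not a loose end. The fix is to charge a light node's removal against \emph{that node's own in-cluster non-edges}, not against the cluster cost globally: lightness $d_C(u)\le|C|/3$ gives $\bar d_C(u)\ge 2 d_C(u)$ (up to an off-by-one, see below), so the increase a light node causes is already dominated by the non-edges it was paying for in $A$, with factor $1$, independent of $|C|$.

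The paper realizes this by a per-node decomposition rather than your per-cluster one: using $2\cost_B=\sum_u[d_{out}^B(u)+\bar d_{in}^B(u)]$, it shows $2\cost_{A'}\le\sum_{u\in L\cup H}[d(u)+d_{in}^A(u)]+\sum_{u\notin L\cup H}[d_{out}^A(u)+\bar d_{in}^A(u)]$, so each removed node $w$ absorbs both its singleton cost $d(w)$ and, via $d_{in}^A(w)$, the out-edges it creates for the nodes that stay; then $d(u)+d_{in}^A(u)=d_{out}^A(u)+2d_{in}^A(u)$ is bounded pointwise ($\le\frac{\beta+1}{\beta-1}d_{out}^A(u)$ if $u$ heavy, $\le d_{out}^A(u)+\bar d_{in}^A(u)$ if $u$ light). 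No global light-node charging, and no case analysis on $|C|$. One aside worth flagging: your three-node path example is sharper than you treat it. Removing the two light leaves doubles the cost ($1\to2$), while $\frac{\beta+1}{\beta-1}<2$ for $\beta>3$ and the paper takes $\beta\ge\frac{4+\eps}{\eps}$. What breaks there is precisely $d_{in}^A(u)\le\tfrac12\bar d_{in}^A(u)$: from $d_C(u)\le|C|/3$ one only gets $\bar d_C(u)=|C|-1-d_C(u)\ge 2d_C(u)-1$, and the $-1$ bites at $|C|=3$, $d_C(u)=1$. So your example does expose a small rounding slack in the lemma/proof (fixable by taking lightness to mean $d_C(u)\le(|C|-1)/3$, or by absorbing an additive $O(1)$ per light node into the constants), but it does not substitute for the light-node argument your write-up still lacks.
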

\begin{proof}
    For any algorithm $B$, let $d_{in}^B(u)$, $d_{out}^B(u)$ and $\bar{d}_{in}^{B}(u)$ be the number of neighbors of $u$ inside its cluster, number of neighbors of $u$ outside its cluster, and the number of non-neighbors of $u$ inside its cluster, respectively. 
    Note that since $\pi$ is fixed, algorithm $B$ determines the clusters. Let $cost_B$ denote the cost of algorithm $B$ (with respect to $\pi$). 
    We have $2 \cdot cost_B = \sum_u [d_{out}^B(u)+\bar{d}_{in}(u)]$. Let $L$ be the set of $A$-light nodes that $A'$ makes singletons, and $H$ be the set of $A$-heavy nodes that $A'$ makes singletons.

    We can write the cost of algorithm $A'$ as follows:
    \begin{align*}
    2 cost_{A'} &= \sum_{u\in L\cup H} d(u)+\sum_{u\notin L\cup H} [d_{out}^{A'}(u)+\bar{d}_{in}^{A'}(u)] \le \sum_{u\in L\cup H} [d(u)+d_{in}^A(u)]+\sum_{u\notin L\cup H} [d_{out}^{A}(u)+\bar{d}_{in}^{A}(u)] 
    \end{align*}
    where the inequality comes from the fact that going from $A'$ to $A$, any new cost on a node $u\notin L\cup H$ is due to an edge inside $u$'s cluster in $A$ that it attached to a node in $L\cup H$.
    We prove that $\sum_{u\in L\cup H} [d(u)+d_{in}^A(u)] \le \frac{\beta+1}{\beta-1}\sum_{u\in L\cup H} [d_{out}^{A}(u)+\bar{d}^{A}_{in}(u)]$. Given this, we will have that $cost_{A'}\le \frac{\beta+1}{\beta-1}cost_A$.
    
    If $u$ is $A$-heavy, then $d(u)\ge \beta|C_A(u)|\ge \beta d_{in}^A(u)$. So $d_{out}^A(u)\ge (\beta-1)d_{in}^A(u)$, and hence $d(u)+d_{in}^A(u)=d^A_{out}(u)+2d_{in}^A(u)\le \frac{\beta+1}{\beta-1}d_{out}^A(u)$.
    
    If $u$ is a light node, then we have that $d_{in}^{A}(u)\le |C_A(u)|/3$, so $d_{in}^{A}(u)\le \frac{1}{2}\bar{d}_{in}^A(u)$, and hence $d(u)+d^A_{in}(u)=d^A_{out}(u)+2d^A_{in}(u)\le d^A_{out}(u)+\bar{d}^A_{in}(u)$. 
\end{proof}

\begin{lemma}\label{lem:bad-cluster-cost}
    Consider a fixed ordering $\pi$ and a clustering algorithm $A$, and consider a $A$-bad cluster $C_{A}^\pi$. 
    Then, its cost is at least $\frac{2}{3}(1-\gamma)|C_{A}^\pi|^2$.
\end{lemma}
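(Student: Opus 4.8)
The plan is to use that $\cost(C)=\sum_{u\in C}\cost_A(u)$, where $\cost_A(u)=\tfrac12\bar d_C(u)+\tfrac12 d_{out}(u)$ is the per-node cost (half the non-neighbours of $u$ inside $C$ plus half the neighbours of $u$ outside $C$), and then to bound this sum from below by exploiting that a bad cluster must contain many bad nodes, each of which is expensive — either by itself (light and heavy nodes) or through the cluster's pivot (poor nodes).

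First I would record that $C$ contains at least $(1-\gamma)|C|$ $A$-bad nodes. If $C$ is not $(A,\alpha)$-poor this is the definition of a bad cluster. If $C$ is $(A,\alpha)$-poor, then by \cref{lem:all-poor} every node of $C$ that is neither $A$-light nor $A$-heavy is $(A,3\alpha\beta)$-poor, hence every node of $C$ is $A$-bad. Fix such a set of bad nodes and split it (breaking ties arbitrarily) into $A$-light, $A$-heavy, and $(A,3\alpha\beta)$-poor nodes.

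Next I would bound each type's contribution. For an $A$-light node $u$ we have $d_C(u)\le|C|/3$, so $\bar d_C(u)\ge|C|-1-|C|/3=\tfrac{2|C|}{3}-1$; summing $\bar d_C(u)$ over the light nodes and dividing by two (to account for in-cluster non-edges counted from both endpoints) already forces $\bar e(C)$, and hence $\cost(C)$, up to order $(1-\gamma)|C|^2$. For an $A$-heavy node $u$ we have $d(u)\ge\beta|C|$, so $d_{out}(u)\ge\beta|C|-(|C|-1)\ge(\beta-1)|C|$ and $\cost_A(u)\ge\tfrac{\beta-1}{2}|C|$, which for $\beta\ge\tfrac{4+\epsilon}{\epsilon}$ is far above the $\Theta(|C|)$ per-node amount we are aiming for. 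For an $(A,3\alpha\beta)$-poor node $u$, which is not light, we get $|C|<3d_C(u)\le 3d(u)\le 9\alpha\beta\,d(v)$ with $v=p_A(u)$, so $d(v)>\tfrac{|C|}{9\alpha\beta}$, which by $\alpha<\tfrac{\epsilon}{24\beta}$ is $\gg|C|$; hence $d_{out}(v)\ge d(v)-(|C|-1)$ is enormous and the pivot carries a huge cost. Combining these — charging the light and heavy bad nodes directly and the poor ones against the (shared) cost of the pivot, keeping track of the double counting of in-cluster non-edges and of the cut edges, and using $\gamma\le\tfrac{\epsilon}{2}$ — yields $\cost(C)\ge\tfrac23(1-\gamma)|C|^2$.

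The main obstacle is the $(A,3\alpha\beta)$-poor case. A poor node, unlike a light or a heavy one, can have per-node cost $0$ (e.g. a node adjacent to all of $C$ and to nothing else), so it cannot be charged individually; it has to be absorbed into the cost of the pivot $v$, whose degree is forced to be large precisely by the presence of a single poor node. The delicate point is to argue that this one pivot term dominates the contribution one would have wanted from all the poor nodes of $C$ at once, and this is exactly where the quantitative relations $\beta\ge\tfrac{4+\epsilon}{\epsilon}$, $\alpha<\min(\tfrac{\epsilon}{24\beta},\tfrac{1}{39\beta})$ and $\gamma\le\tfrac{\epsilon}{2}$ must be used; getting the bookkeeping to close with the stated constant, rather than with a weaker $\Omega(|C|^2)$ or an $\epsilon$-dependent constant, is the crux.
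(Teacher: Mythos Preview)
Your treatment of light and heavy bad nodes is the paper's argument. The gap is in the poor case, and it is fatal: the single pivot term you want to charge all poor nodes against is only of order $|C|/(\alpha\beta)$, hence linear in $|C|$, not quadratic. Concretely, let $C$ be a $k$-clique whose pivot $v$ has $\lceil(1/\alpha-1)(k-1)\rceil$ extra neighbours outside $C$ while every other node of $C$ has none. Every non-pivot node is then $(A,\alpha)$-poor (so the cluster is $(A,\alpha)$-poor, hence bad), there are no non-edges inside $C$, and the cluster cost is $M/2\approx k/(2\alpha)$, which is far below $\tfrac{2}{3}(1-\gamma)k^2$ once $k$ is large. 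So the lemma, read literally, fails for poor bad clusters, and no bookkeeping with the parameters $\alpha,\beta,\gamma$ will close the argument you sketch.

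The paper's proof sidesteps this entirely: it begins with ``a bad cluster does not have any $(A,\alpha)$-poor nodes'', so every bad node in $C$ is light or heavy, and then the per-node lower bounds you already wrote down (at least $\tfrac{2}{3}|C|$ for light, at least $\tfrac{\beta-1}{2}|C|$ for heavy, and $\beta\ge4$ so the minimum is $\tfrac{2}{3}$) times the $(1-\gamma)|C|$ bad nodes finish the proof in two lines. That opening sentence is in tension with the stated definition of a bad cluster (which does include poor clusters), but it matches how the lemma is actually used in \cref{lem:bad-cluster-singleton-cost}: there poor clusters are disposed of separately via \cref{lem:all-poor} before \cref{lem:bad-cluster-cost} is invoked on the remaining bad clusters. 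You should drop the poor case and record the implicit restriction, rather than try to prove a bound that does not hold.
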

\begin{proof}
    Recall that a bad cluster does not have any $(ref,\alpha)$-poor nodes. 
    So, all the bad nodes in a bad cluster are either light or heavy. 
    Let $C=C_{ref}^\pi$. 
    
    For a light node $u\in C$, the cost of $u$ is at least $2|C|/3$, since $d_C(u)\le |C|/3$. 
    For a heavy node $u\in C$, the cost of $u$ is at least $(\beta-1)|C|/2$, since $u$ has at least $(\beta-1)|C|$ neighbors outside $C$. 
    So the cost of any bad node in $C$ is at least $\min{\{(\beta-1)/2,2/3\}} \cdot |C|$. 
    There are at least $(1-\gamma)|C|$ many bad nodes in $C$. 
    Thus, the total cost of $C$ is at least $\min{\{(\beta-1)/2,2/3\}} \cdot (1-\gamma) \cdot |C|^2$. Since $\beta\ge4$, $\min{\{(\beta-1)/2,2/3\}}=2/3$. 
\end{proof}

\section{Ommited proofs}
\subsection{Proof of \cref{lem:pivot-set}}
   
    It holds that $\probgiven{\pi(v)\le L'/d(v)}{v\in P_{ref}^\pi} = 1 - \probgiven{\pi(v) > L'/d(v)}{v\in P_{ref}^\pi}$, and we upper-bound the latter probability:
    \begin{align*}
        \probgiven{\pi(v) > L'/d(v)}{v\in P_{ref}^\pi}
        & = \frac{\prob{\pi(v) > L'/d(v) \text{ and } v\in P_{ref}^\pi}}{\prob{v\in P_{ref}^\pi}} \\
        & = d(v) \cdot \prob{\pi(v) > L'/d(v) \text{ and } \forall u \in N(v): \pi(u) > \pi(v)} \\
        & \le d(v) \cdot \rb{1 - \frac{L'}{d(v)}}^{d(v) + 1} \\
        & \le d(v) \cdot e^{-L'}.
    \end{align*}
    In the derivation, we used the fact that the entries of $\pi$ are chosen independently and uniformly from range $[0, 1]$. Now note that $d(v)e^{-L'}\le ne^{c\log n}<n^{-c+1}$, so 
    Hence, $\probgiven{\pi(v)\le L'/d(v)}{v\in P_{ref}^\pi} \ge 1 - n^{-(c-1)}$, as desired.
%

\subsection{Proof of \cref{lem:1-eps-good-nodes}}

    Let $L'=L/2\beta$. Let $A$ be the set of first $(1-\epsilon)C_{ref}(v)[good]$ good nodes in $C_{ref}(v)$ and let $B$ be the set of last $\epsilon C_{ref}(v)[good]$ good nodes in $C_{ref}(v)$. So $C_{ref}(v)[good] = A\cup B$.

First we introduce some notation: For two nodes $u$ and $w$ where $w$ is inserted before $u$, the degree of a vertex $w$ at the time where $u$ is inserted is denoted by $d^{(u)}(w)$. Recall that the current degree of $w$ is denoted by $d(w)$, and $d(w)\ge d^{(u)}(w)$.
    
    Next, note that if the pivot $v$ comes after all the nodes in $A$, then since $v$ is a pivot in \refClustering by \cref{lem:pivot-set} we have that whp $\pi(v)\le L'/d(v)\le L/d(v)$  and so it scans its neighborhood and invokes \textsc{explore$(v)$} which assigns all these nodes to $v$ as their pivot. So suppose that $v$ comes before the nodes in $B$.
    
    Now consider a good node $u \in C_{ref}(v)[good]$ and suppose that $u$ comes after $v$ in the dynamic ordering. Since $u$ is not heavy, we have $d(u)\le \beta|C_{ref}(v)|\le \beta d(v)$.

    We compute the probability that $u$ is assigned to $v$'s cluster and also that $u$ invokes \textsc{explore$(v)$}. 
    In particular, we want to lower-bound the probability that $\pi(v)<\pi(u)\le \frac{L}{d^{(u)}(u)}$ and $d^{(u)}(v)\le \frac{L}{\pi(u)}$, conditioned on $\pi(v)<\pi(u)$, i.e., in \refClustering $u$ is in $C_v$. 
    
    First note that if $d(v)\le L$ and $d(u)\le L$, then both these conditions are satisfied. So we assume that $\max(d(u),d(v))>L>L'\beta$.

    Next, recall that since $v$ is a pivot in the \refClustering, by \cref{lem:pivot-set}, we have that $\pi(v)\le L'/d(v)$ holds with probability $1-1/n^{c-1}$ for a large constant $c$ by \cref{lem:pivot-set}. Moreover, since $d^{(u)}(v)\le d(v)$, the probability that $d^{(u)}(v)\le \frac{L}{\pi(u)}$ is at least the probability that $d(v)\le \frac{L}{\pi(u)}$. Similarly, the probability that $\pi(u)\le \frac{L}{d^{(u)}(u)}$ is at least the probability that $\pi(u)\le \frac{L}{d(u)}$ as $d^{(u)}(u)\le d(u)$.
    So:
    \begin{align*}
        & \prob{u \text{ invokes \textsc{explore($v$)}}\ |\ u\in C_{ref}(v)} \\ 
        = & \prob{\pi(v)<\pi(u)\le \frac{L}{d^{(u)}(u)} \text{ and } d^{(u)}(v)\le \frac{L}{\pi(u)}\ |\ \pi(v)<\pi(u)\le 1} \\
        \ge & \prob{\pi(v)<\pi(u)\le \frac{L}{d(u)} \text{ and } d(v)\le \frac{L}{\pi(u)}\ |\ \pi(v)<\pi(u)\le 1} \\
        = & \prob{\pi(v)<\pi(u)\le \frac{L}{\max(d(u),d(v))}\ |\ \pi(v)<\pi(u)\le 1} \\
        = & \prob{\pi(v)<\pi(u)\le \frac{L}{\max(d(u),d(v))}\ |\ \pi(v)<\pi(u)\le 1, \pi(v)\le L'/d(v)} \cdot \prob{\pi(v)\le L'/d(v)} \\
        & + \prob{\pi(v)<\pi(u)\le \frac{L}{\max(d(u),d(v))}\ |\ \pi(v)<\pi(u)\le 1, \pi(v) > L'/d(v)} \cdot \prob{\pi(v) > L'/d(v)} \\
        \ge &  \prob{\frac{L'}{d(v)}<\pi(u)\le \frac{L}{\max(d(u),d(v))}\ |\ \frac{L'}{d(v)}<\pi(u)\le 1} \cdot \rb{1 - n^{-c+1}} \\
    \end{align*}
    If $d(v)\le d(u)$, then since $d(u)\le \beta|C_{ref}(v)|\le \beta d(v)$ and  $L'\beta = L/2$, we have
\begin{align*}
\prob{u \text{ invokes \textsc{explore($v$)}}\ |\ u\in C_{ref}(v)}
             & \ge \prob{\frac{L'\beta}{d(u)}<\pi(u)\le \frac{L}{d(u)}\ |\ \frac{L'\beta}{d(u)}<\pi(u)\le 1} \cdot \rb{1 - n^{-c+1}} \\
        & = \frac{L-L'\beta}{d(u)-L'\beta} \cdot \rb{1 - n^{-c+1}}
            \ge \frac{L-L'\beta}{d(u)} \cdot \rb{1 - n^{-c+1}} \\
            & \ge \frac{L/2}{d(u)} \cdot \rb{1 - n^{-c+1}}\ge \frac{\alpha L/2}{d(u)} \cdot \rb{1 - n^{-c+1}}
\end{align*}
    Nnote that $d(u)-L'\beta>0$ since we assume that $\max(d(u),d(v))>L>L'\beta$. 
    
    If $d(v)\ge d(u)$, we use the fact that since $u$ is not $(\alpha,A)$-poor, we have $d(u)\ge \alpha d(v)$, and so
\begin{align*}
    \prob{u \text{ invokes \textsc{explore($v$)}}\ |\ u\in C_{ref}(v)}
             & \ge \prob{\frac{L'}{d(v)}<\pi(u)\le \frac{L}{d(v)}\ |\ \frac{L'}{d(v)}<\pi(u)\le 1} \cdot \rb{1 - n^{-c+1}} \\
             & \ge \frac{L-L'}{d(v)-L'}\cdot \rb{1 - n^{-c+1}} \ge \frac{L/2}{d(v)}\cdot \rb{1 - n^{-c+1}}\\
            & \ge  \frac{\alpha L/2}{d(u)}\cdot \rb{1 - n^{-c+1}} 
\end{align*}
    The above inequality holds for each $u\in B$. We show that with high probability, a node in $B$ invokes \textsc{explore($v$)}. First recall that $C_{ref}(v)$ is a good cluster, so $|C_{ref}(v)[good]|\ge \gamma |C|\ge \gamma d(u)/\beta$. Thus we have $|B|\ge \gamma \epsilon d(u) /\beta$. Let $t = \frac{\epsilon \gamma\alpha}{4 \beta}$. Since $(1-n^{-c+1})>1/2$, we have:
\begin{equation*}
    \prob{u \text{ invokes \textsc{explore($v$)}}} \ge \frac{tL}{|B|}
\end{equation*}
So the probability that none of $u\in B$ invokes \textsc{explore($v$)} is at most $(1-\frac{tL}{|B|})^{|B|}\le e^{-tL} \le n^{-c}$, where we use $L\ge \frac{4c\beta}{\epsilon \gamma\alpha}\log n = \frac{c}{t}\log n$. So with probability $1-n^{-c}$, \textsc{explore($v$)} is invoked by a node in $B$, and so all the nodes in $A$ are going to be assigned to $v$ as their pivot.

\subsection{Proof of \cref{lem:4-approx-break-cluster}}
    Let $k=C^*$ and let $t$ be the smallest power of $(1+\epsilon)$ no smaller than $\frac{2k}{3}$. So $\frac{2k}{3}\le t\le \frac{2k(1+\epsilon)}{3}$. Let $S = C^*\setminus C_{t}$, and let $T=C_{t}\setminus C^*$. We refer to the clustering where $C_t$ is one cluster and all $B_v\setminus C_t$ is singleton as $\mathscr{C}_1$ and the clustering where $C^*$ is one cluster and all $B_v\setminus C^*$ is singleton as $\mathscr{C}^*$. 

    The cost of $\mathscr{C}_1$ and $\mathscr{C}^*$ differ in edges and non-edges with one endpoint in $S$ or $T$. They share any other cost associated to an edge or non-edge that is in disagreement with the clustering. 
    So we only need to compare the excess cost that is not part of this common cost.
    
    By the cost of a node $u\in S\cup T$, we mean the number of $v\notin S\cup T$ where $uv$ is in disagreement with the clustering plus half the number of $v\in S\cup T$ where $uv$ is in disagreement with the clustering. 
    We let $cost(u)_{\mathscr{C}}$ denote the cost of $u$ in a clustering $\mathscr{C}\in \{\mathscr{C}_1,\mathscr{C}^*\}$. Note that the total cost of $\mathscr{C}$, indicated by $cost(\mathscr{C})$, is the sum of the individual costs of $u\in S\cup T$, plus the common cost between $\mathscr{C}_1$ and $\mathscr{C}^*$.

    First, assume that $|T|\le k$. We show that $cost(\mathscr{C}_1)\le \frac{4}{1-2\epsilon}cost(\mathscr{C}^*)$.

    Consider a node $u\in S$. In $\mathscr{C}_1$, the cost of $u$ is at most $t$ since each node in $S$ has degree at most $t$, and $u$ is a singleton in $\mathscr{C}_1$. In $\mathscr{C}^*$, the cost of $u$ is at least $\frac{k-t}{2}$ since there are at least $k-t$ non-edges attached to $u$ in $C^*$. Note that for $t\le \frac{2k(1+\epsilon)}{3}$, we have $cost(u)_{\mathscr{C}_1}\le t\le \frac{4}{1-2\epsilon}\cdot (k-t)/2 \le \frac{4}{1-2\epsilon}cost(u)_{\mathscr{C}^*}$.

    To compute the cost of $u\in T$, for a set $C$ recall that $d_{C}(u)$ is the degree of $u$ inside $C$. So $d(u)-d_{C_{t}}(u)$ is the number of edges attached to $u$ outside $C_{t}$. In $\mathscr{C}_1$ the cost of $u$ is the number of nodes attached to $u$ outside of $C_t$ plus the number of nodes $v$ not attached to $u$ inside $C_t\setminus T$, plus half of the number of nodes $v\in T$ that are not attached to $u$. So
    \begin{align*}
        cost(u)_{\mathscr{C}_1}&= d(u)-d_{C_{t}}(u)+(|C_{t}|-|T|-d_{C_{t}\setminus T}) + (|T|-1-d_{T}(u))/2\\
        & \le  d(u)-d_{C_{t}}(u)+(|C_{t}|-|T|-d_{C_{t}\setminus T}) + (|T|-1-d_{T}(u)) \\
        &= |C_{t}| + d(u)-1-2d_{C_{t}}(u) \\
        & \le |T|+|C^*\cap C_{t}|+d(u)-2d_T(u) \\
        & \le 2k+t-2d_T(u)
    \end{align*}
    Meanwhile we have that there are at least $d(u)-d_{T}(u)$ nodes attached to $u$ outside of $T$, so we have  
    \begin{align*}
        cost(u)_{\mathscr{C}^*} = d(u)-d_{T}(u) + d_T(u)/2 \ge \frac{t-d_T(u)}{2}
    \end{align*}
    Note that when for $t\ge 2k/3$, we have $ 2k+t-2d_T(u) \le 4(\frac{t-d_T(u)}{2})$, so $cost(u)_{\mathscr{C}_1} \le 4cost(u)_{\mathscr{C}^*}$. And so $cost(\mathscr{C}_1)\le \frac{4}{1-2\epsilon}cost(\mathscr{C}^*)$.

    Now suppose that $|T|> k$. Let cluster $\mathscr{C}_2$ be the clustering where all the nodes in $B_v$ are singleton. In fact, $\mathscr{C}_2$ is the clustering for when $t=n$. In this case, the cost of $\mathscr{C}_2$ is the cost of $\mathscr{C}^*$ plus the number of edges between the nodes in $C^*$. We have $cost(\mathscr{C}_2)\le cost(\mathscr{C}^*)+k^2/2$. Now to bound $cost(\mathscr{C}^*)$, the cost of $cost(\mathscr{C}^*)$ is at least the number of edges with at least one endpoint in $T$. This value is at least $t|T|/2$. So $cost(\mathscr{C}^*)\ge t|T|/2$. Now since $|T|\ge k$ and $t\ge 2k/3$, we have $3\cdot t|T|/2\ge k^2/2$, so $cost(\mathscr{C}_2) \le cost(\mathscr{C}^*)+k^2/2 \le cost(\mathscr{C}^*)+3\cdot t|T|/2\le 4cost(\mathscr{C}^*)$.

\section{Running Time Analysis}
In \cref{sec:recompute}, we show that our \textsc{recompute} subroutine takes $\poly(\log n, 1/\eps)$ amortized time per update. 
Next, we analyze the running time of \cref{alg:node-dynamic-pivot} and prove \cref{thm:running-time}.

\begin{proof}
    We first introduce some notation: Let the degree of a node $w$ at the time of the insertion of node $u$ be $d^{(u)}(w)$. 
    Recall that $d(w)$ is the current degree of $w$ and $d(w)\ge d^{(u)}(w)$.

    First note that if $\pi(u)>L/d^{(u)}(u)$, then the running time is $O(\log n)$ plus the running time of \textsc{update-cluster}, which is in $\poly(\log n,\frac{1}{\eps})\le \poly(\log T,\frac{1}{\eps})$ time; see \cref{sec:update-cluster} for details.

    Next, consider the case where $\pi(u)\le L/d^{(u)}(u)$. For each such $u$, we set aside a budget of $L/\pi(u)\cdot \poly(\log n, 1/\eps)$.
    In this case, the algorithm scans all the neighbors of $u$, which takes $d^{(u)}(u)<L/\pi(u)$ time. 
    
    Next, we analyze the running time of \textsc{explore}. For this, consider a pivot node $v$. Note that when we run \textsc{explore} on a node $v$, we scan all its neighbors, and if a neighbor $w$ is also a pivot, we scan all of the neighbors of $w$ as well and remove $w$ from being a pivot. 
    Observe that once a node is removed from being a pivot, it can never be a pivot until the next recompute, when nodes get new ranks. 
    We pay for scanning the neighbors of $w$ from the budget of $w$, not $v$. This way, when a node $v$ runs \textsc{explore}, it only needs to pay at most $d(v)$ from its budget. 

    Now note that a node $v$ might run \textsc{explore} multiple times. In particular, $v$ runs \textsc{explore} either when we are processing $v$, or when a neighbor of $v$, say $u$ is being processed, and $d^{(u)}(v)<L/\pi(u)$. In the first case, we pay the cost of \textsc{explore} from $v$'s budget. In the second case, we pay the cost from $u$'s budget. Note that for a pivot node $v$, not only we have $d^{(v)}(v)<L/\pi(v)$, but also we have $d(v)<L/\pi(v)$ (see \cref{lem:pivot-set}) whp.

    So, in total, a pivot node $v$ only spends a budget of at most $d(v)\cdot \poly(\log n, 1/\eps)<L/\pi(v)\cdot \poly(\log n, 1/\eps)$: it spends $d^{(v)}(v)\le d(v)$ for scanning all its neighbors when it is being inserted, $d^{(v)}(v)\le d(v)$ when it runs \textsc{explore}, at most $ d(v)$ (possibly) when another pivot $w$ runs \textsc{explore} and removes $v$ from being a pivot, and $d^{(v)}(v)\poly(\log n, 1/\eps)\le d(v) \poly(\log n, 1/\eps)$ for running \textsc{break-cluster} (\cref{thm:break-cluster-running-time}).

    A non-pivot node $u$ spends at most $2L/\pi(u)$: once for scanning all its neighbors and once for paying for \textsc{explore} for its pivot $v$.

    Since $L=O(\log n)$, each node $u$ spends $\poly(\log n,1/\eps)+O(\log n/\pi(u))$. 
    Given that $\pi(u)$ is chosen uniformly at random from the range $[0, 1]$, in expectation, a node spends $\poly(\log n,1/\eps)\le \poly(\log T,1/\eps)$ running time.
\end{proof}
\section{Experiments-continued}\label{sec:extra-exp}
\subsection{Approximation guarantee results}
Below we include the approximation guarantee on the three remaining SNAP graphs.
\begin{figure}
    \centering
    \includegraphics[width=0.5\linewidth]{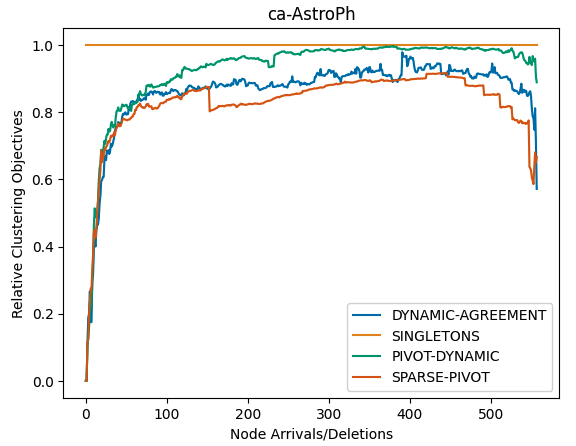}
    \caption{Correlation clustering cost for ca-astroph graph}
    \label{fig:ca-astroph}
\end{figure}
\begin{figure}
    \centering
    \includegraphics[width=0.5\linewidth]{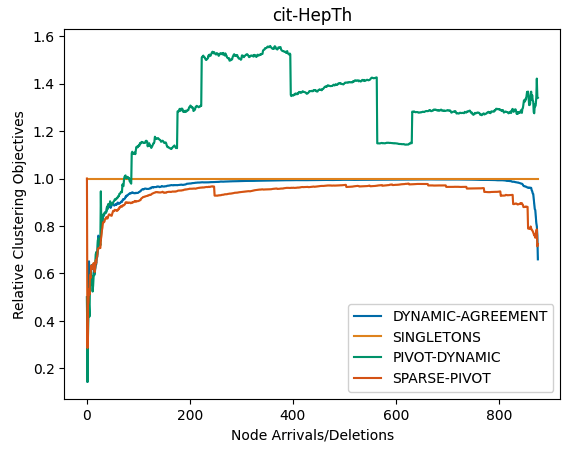}
    \caption{Correlation clustering cost for cit-hepth graph}
    \label{fig:cit-hepth}
\end{figure}
\begin{figure}
    \centering
    \includegraphics[width=0.5\linewidth]{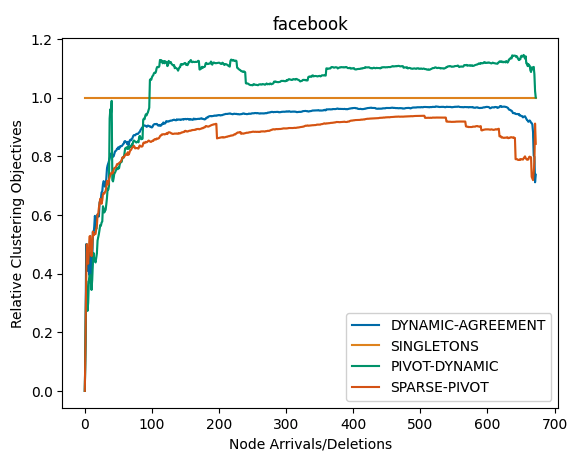}
    \caption{Correlation clustering cost for facebook graph}
    \label{fig:facebook}
\end{figure}
\subsection{Running Time}
Even though theoretically, our running time (and \textsc{dynamic-agreement} running time) is faster than \refClustering, this advantage only appears in very big graphs. This is because both algorithms have a lot of bookkeeping, which means that the constant behind the $O(\log n)$ running time is rather big. Nevertheless, we show in \cref{tab:drift-running-time} and \cref{tab:snap-running-time} that \ourPivot is faster than \textsc{dynamic-agreement}. 
\begin{table}
    \centering
    \begin{tabular}{c|c|c}
        Density & DA & SP \\ \hline
        $253.36$ & $36.75$ & $31.91$ \\
        $114.87$ & $43.08$ & $29.69$ \\
        $69.74$ & $50.77$ & $26.36$ \\
        $52.17$ & $49.27$ & $23.36$ \\
        $42.25$ & $41.23$ & $25.14$ \\
        \hline
        
    \end{tabular}
    \caption{Running time comparison on Drift dataset}
    \label{tab:drift-running-time}
\end{table}

\begin{table}
    \centering
    \begin{tabular}{c|c|c}
         Graph & DA & SP \\ \hline
         facecbook & $8.14$ & $3.31$ \\
        email-Enron & $9.69$ & $4.48$ \\
        cit-HepTh & $20.24$ & $11.53$  \\ 
        ca-AstroPh & $15.15$ & $3.84$\\ \hline
    \end{tabular}
    \caption{Running time comparison on SNAP datasets}
    \label{tab:snap-running-time}
\end{table}


\end{document}